\documentclass[preprint,5p,times]{elsarticle}
%% Use the options 1p,twocolumn; 3p; 3p,twocolumn; 5p; or 5p,twocolumn
%% for a journal layout:
%% \documentclass[final,1p,times]{elsarticle}
%% \documentclass[final,1p,times,twocolumn]{elsarticle}
%% \documentclass[final,3p,times]{elsarticle}
%% \documentclass[final,3p,times,twocolumn]{elsarticle}
%% \documentclass[final,5p,times]{elsarticle}
%% \documentclass[final,5p,times,twocolumn]{elsarticle}
\usepackage[utf8]{inputenc}
\usepackage[T1]{fontenc}
\usepackage{bm}
\usepackage{xparse}
\usepackage{siunitx}
\usepackage{amsmath,amsthm}
\usepackage[breaklinks]{hyperref}
\usepackage{listings}
\usepackage{graphicx}
\usepackage{url}
\usepackage{xcolor,colortbl}
\usepackage{color}
\usepackage{multirow}

\newtheorem{theorem}{Theorem}[section]
\newtheorem{lemma}[theorem]{Lemma}
\theoremstyle{definition}
\newtheorem{definition}{Definition}[section]

\definecolor{temporal}{RGB}{112, 186, 164}
\definecolor{red}{RGB}{219, 191, 186}
\definecolor{orange}{RGB}{191, 167, 214}

\journal{Future Generation Computer Systems}

\begin{document}
\begin{frontmatter}

\title{A Scalable and Energy Efficient GPU Thread Map for $m$-Simplex Domains}

\author[a]{Crist\'obal A. Navarro\corref{author}}
\author[a]{Felipe A. Quezada}
\author[b]{Benjamin Bustos}
\author[b]{Nancy Hitschfeld}
\author[b]{Rolando Kindelan}
\cortext[author] {Corresponding author.\\\textit{E-mail address:} cristobal.navarro@uach.cl}
\address[a]{Instituto de Informática, Universidad Austral de Chile.}
\address[b]{Computer Science Department (DCC), University of Chile.}

\begin{abstract}
This work proposes a new GPU thread map for $m$-simplex domains that improves its speedup along with the $m$-dimension and is energy efficient compared to other state of the art approaches. The main contributions of this work are i) the formulation of an improved new block-space map $\mathcal{H}: \mathbb{Z}^m \mapsto \mathbb{Z}^m$ for regular orthogonal simplex domains, which is analyzed in terms of resource usage, and ii) the experimental evaluation in terms of speedup and energy efficiency with respect to a bounding box approach. Results from the analysis show that $\mathcal{H}$ has a potential speedup of up to $2\times$ and $6\times$ for $2$ and $3$-simplices, respectively. Experimental evaluation shows that $\mathcal{H}$ is competitive for $2$-simplices, reaching $1.2\times \sim 2.0\times$ of speedup for different tests, which is on par with the fastest state of the art approaches. For $3$-simplices $\mathcal{H}$ reaches up to $1.3\times \sim 6.0\times$ of speedup making it the fastest. The extension of $\mathcal{H}$ to higher dimensional $m$-simplices is feasible and has a potential speedup that scales as $m!$ given a proper selection of parameters $r, \beta$ which are the scaling and replication factors of the geometry, respectively. In terms of energy consumption, although $\mathcal{H}$ is among the highest in power consumption, it compensates by its short duration, making it one of the most energy efficient approaches. Lastly, further improvements with Tensor and Ray Tracing Cores are analyzed, giving insights to leverage each one of them. The results obtained in this work show that $\mathcal{H}$ is a scalable and energy efficient map that can contribute to the efficiency of GPU applications when they need to process $m$-simplex domains, such as Cellular Automata or PDE simulations.
\end{abstract}

\begin{keyword}
%% keywords here, in the form: keyword \sep keyword
GPU Computing \sep Thread Mapping \sep Scalable \sep Simplex Domains \sep Energy Efficiency \sep GPU Resource Usage
\end{keyword}
\end{frontmatter}

\section{Introduction}
GPU computing has become a well established research field in the last years \cite{4490127,Nickolls,navhitmat2014} thanks to the increasing performance of GPU hardware and the existence of a general purpose programming model, being CUDA \cite{nvidia_cuda_guide} and OpenCL \cite{opencl08} the most known implementations of such model. Notorious advancements have been made to better understand how data-parallel algorithms can be more efficient on the GPU architecture. Some of these advancements include the improvement of thread access patterns for coalesced memory access \cite{10.1145/2517327.2442523}, reduction of thread branching \cite{4408272, 8947754}, better use of the $L_1$ programmable cache \cite{7445236}, fast stencil computations \cite{10.1145/2304576.2304619}, efficient reduction algorithms \cite{harris2007optimizing,navarro2020gpu}, multi-GPU acceleration \cite{hermann2010multi,yadan2013multi, stuart2011multi}, and most recently the mapping of different computational patterns to tensor cores \cite{sorna2018optimizing, navarro2020gpu, quezada2022squeeze, navarro2020efficient} as well as ray tracing cores \cite{zhu2022rtnn, zellmann2020accelerating,morrical2020accelerating}. One performance aspect of GPU computing that is also being researched in the last years is the study and design of efficient GPU thread maps for data-parallel domains with complex geometries \cite{navarro2020efficient, DBLP:conf/hpcc/NavarroH14, CLEI-2016-navarro, 8392762}. Several problems exhibit a non-boxed shape data domain in which data-parallel GPU computation must be mapped. Examples of such problems are regular orthogonal $m$-simplex domains, which exist when working with half-triangular matrices, or when doing a cellular automata simulation on a discrete voxel-based tetrahedron. A regular orthogonal $m$-simplex is an $m$-dimensional polytope, where its facets are equal, define a convex hull and one of its vertices has all of its incident  facets orthogonal one to each other. The discrete voxelized/cell-like version, denoted $\Delta_n^m$, is of interest in data-parallel computations and is composed of unitary volumetric elements with location $x = \{x_1, x_2, ..., x_m\}$ that satisfy 
\begin{equation}
    \Delta_n^m \equiv \{x \in \mathbb{Z}_+^m | 0 \le x_i \le n \land x_1 + x_2 + ...
x_m \le n\}
\end{equation}
where the absolute Manhattan distance from any element $x$ to the orthogonal corner of the $m$-simplex cannot be greater than $n$. Mapping GPU threads efficiently to a regular orthogonal $m$-simplex is not straightforward in the GPU programming model, because it was designed for box-shaped domains. 

This work proposes a new map $\mathcal{H}: \mathbb{Z}^m \mapsto \mathbb{Z}^m$ for mapping GPU threads onto regular orthogonal $m$-simplex\footnote{For simplicity, sentences using just the term simplex/simplices will also refer to the regular orthogonal ones.} domains efficiently. The map assigns thread-blocks in parallel space to unique locations in the $m$-simplex shaped data, using $\mathcal{O}(1)$ arithmetic operations. A dedicated analysis is devoted to the special cases of $2$-simplex and $3$-simplex domains, where new efficient maps are proposed and described, based on a binary self-similar organization of sub-orthotopes. The approach can offer a parallel space improvement of $\sim2\times$ and $\sim6\times$ for $m=2$ and $m=3$, respectively, that results in a potential performance improvement given that the maps costs $\mathcal{O}(1)$ arithmetic operations and no $m$-th roots are required. Experimental results support the theoretical bounds, showing that $\mathcal{H}$ is competitive in $2$-simplices, the fastest one in $3$-simplices, and energy efficient. For the higher dimensional case, an analysis shows that building an efficient set of self-similar orthotopes is possible with a potential speedup of $m!$, given optimal values are found for the scaling and replication factors, $r$ and $\beta$ respectively.  

This work improves from a previous conference publication \cite{9188081} by including an improved formulation and a full experimental evaluation using more comparison approaches, additional tests and several GPUs. Also, it extends the previous work by including a complete evaluation of energy efficiency for all approaches, as well as new insights for leveraging Tensor and Ray Tracing cores in modern GPU architectures.  
%The remaining Sections cover the problem statement (Section \ref{sec_problem}), related work (Section \ref{sec_related-work}), the formulation of $\mathcal{H}$ (Section \ref{sec_formulation}), experimental results (Section \ref{sec_experimental-results}), considerations for higher dimensions (Section \ref{sec:considerationshigher-dimensions}) and usage of tensor/RT cores (Section \ref{sec:tensor-rtx-cores}), discussion and conclusions (Section \ref{sec_discussion-conclusions}).

\section{The Problem of Mapping GPU Threads onto Simplices}
\label{sec_problem}
The GPU programming model offers four constructs\footnote{This
work uses CUDA's naming scheme. OpenCL names are (1) work-element, (2) work-group and (3) work-space,
respectively.} that allow the execution of highly parallel algorithms; (1)
thread, (2) block, (3) grid and (4) kernel.  Threads are the smallest work elements and they are in charge of executing the instructions of a GPU kernel, which is the main parallel program written by the programmer. A block is an intermediate structure that contains a set of threads organized as an Euclidean box. Blocks provide fast shared memory access as well as local synchronization for all of its threads.  The grid is the largest construct of all three and it keeps all blocks together spatially organized for the execution of a GPU kernel. These three constructs are illustrated in Figure \ref{fig_constructs} and play an important role when
mapping processing resources to a data-parallel domain.
In 2022 the CUDA programming model introduced the \textit{thread-block cluster}, an optional construct that fills the gap in between a block and the grid. It can be used for leveraging memory locality among blocks. 
\begin{figure}[ht!]
\centering
\includegraphics[scale=0.25]{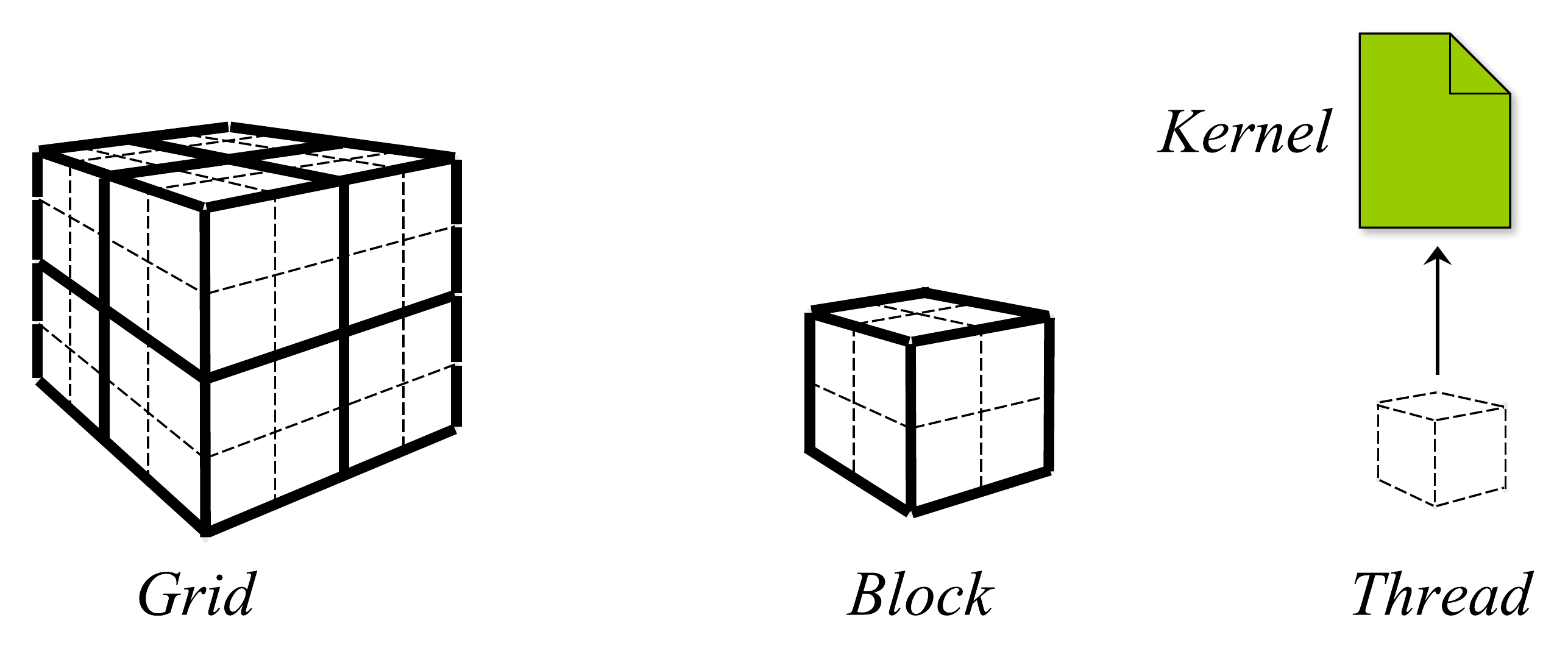}
\caption{GPU constructs using the CUDA naming scheme.}
\label{fig_constructs}
\end{figure}

For every GPU computation there is a stage where threads, which live in grid space, get mapped to different locations in data space. This mapping is defined as
\begin{equation}
    f(x): \mathbb{Z}^k \rightarrow \mathbb{Z}^m
\end{equation}
where $k$-dimensional locations $x=(x_1, x_2, ..., x_k)$, in grid space, map to $m$-dimensional locations $y = (y_1, y_2, \cdots, y_m)$ in data space, typically as an injective or bijective function. GPU parallel spaces can be formally described as orthotopes $\Pi_n^m$ in $m=1,2,3$ dimensions that contain discrete elements
organized in blocks. Each element corresponds to a thread that must be assigned
to the data domain.

A known approach for mapping threads in GPUs is to build a \textit{bounding-box} (BB) of volume\footnote{CUDA offers up to $k=3$. Higher dimensional orthotopes can be still be represented by linearizing down to 3-dimensions.} $l_1 \times l_2 \cdots l_k$ where $l_i$ is the length of the data space in the $i$-th dimension. Without loss of generality, we can assume a regular geometry to express the bounding box volume as $n^k$ where $n$ is the needed length to cover the entire data domain in all dimensions. Using the map 
\begin{equation}
    f(x) = x
\end{equation}
one can obtain the corresponding location of every thread in data space. Such map is highly convenient, simple and efficient for the class of problems where the data space is an orthotope as well, such as vectors, tables, images, matrices or other arbitrary box-shaped data domains. However, this approach is no longer convenient neither efficient once data domains start taking other shapes. 

There is a class of highly parallelizable problems where data organizes in the form of a regular simplex. Computations such as the Euclidean distance matrix (EDM) \cite{5695222, Li:2010:CME:1955604.1956601, Man:2011:GIC:2117688.2118809}, collision detection
\cite{AvrilGA12}, adjacency matrices \cite{kepner2011graph}, cellular automata simulation on triangular domains \cite{ConwaysLife}, matrix
inversion \cite{Ries:2009:TMI:1654059.1654069} and even the \textit{n-body} problem \cite{DBLP:journals/corr/abs-1108-5815, Bedorf:2012:SOG:2133856.2134140, Ivanov:2007:NPT:1231091.1231100}, among others, follow the shape of a 
regular orthogonal $2$-simplex. Here, such simplex is denoted as $\Delta_n^2$, with $n$ its length along one dimension, and its working space is
$V(\Delta_n^2)=n(n+1)/2 \in \mathcal{O}(n^2)$. 
The problem is that for this class of problems the default \textit{bounding-box} (BB) GPU mapping 
approach turns out to be inefficient because the orthotope of parallel space, denoted $\Pi_n^2$, produces $n^2$ threads, where $n(n-1)/2 \in \mathcal{O}(n^2)$ are unused, as shown in Figure \ref{fig_bb_strategy}.
\begin{figure}[ht!]
\centering
\includegraphics[scale=0.09]{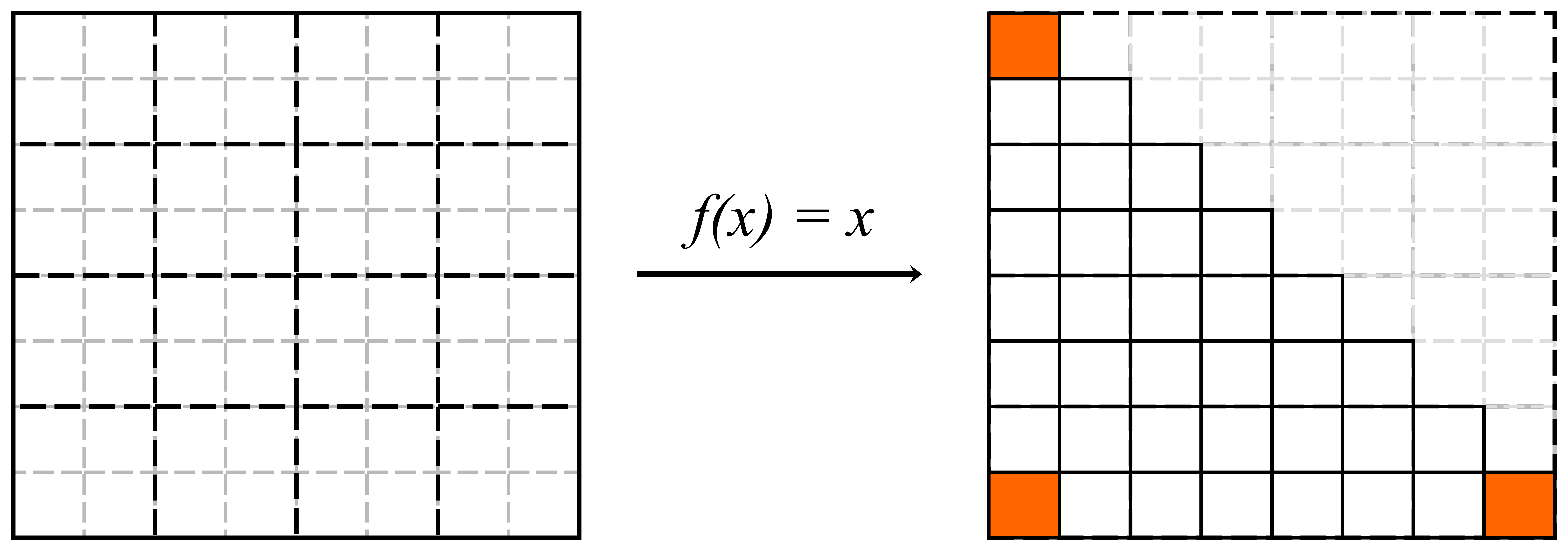}
    \caption{The bounding-box approach almost doubles the
    number of threads.}
\label{fig_bb_strategy}
\end{figure}

The problem is not restricted just to two-dimensional cases.  Simulation applications that act on voxelized or cell-like regular tetrahedrons, such as PDEs or Cellular Automata, also employ a $3$-simplex. In the $3$-simplex class, the interaction space is $V(\Delta_n^3)=n(n+1)(n+2)/6 \in \mathcal{O}(n^3)$ elements, organized as a discrete regular orthogonal tetrahedron.
Once again, the default \textit{bounding-box} (BB) approach turns out to be 
inefficient as it generates a parallel space $V(\Pi_n^3)$ with
$\mathcal{O}(n^3)$ unnecessary threads (see Figure \ref{fig_bb_strategy_tetrahedron}).
\begin{figure}[ht!]
\centering
\includegraphics[scale=0.10]{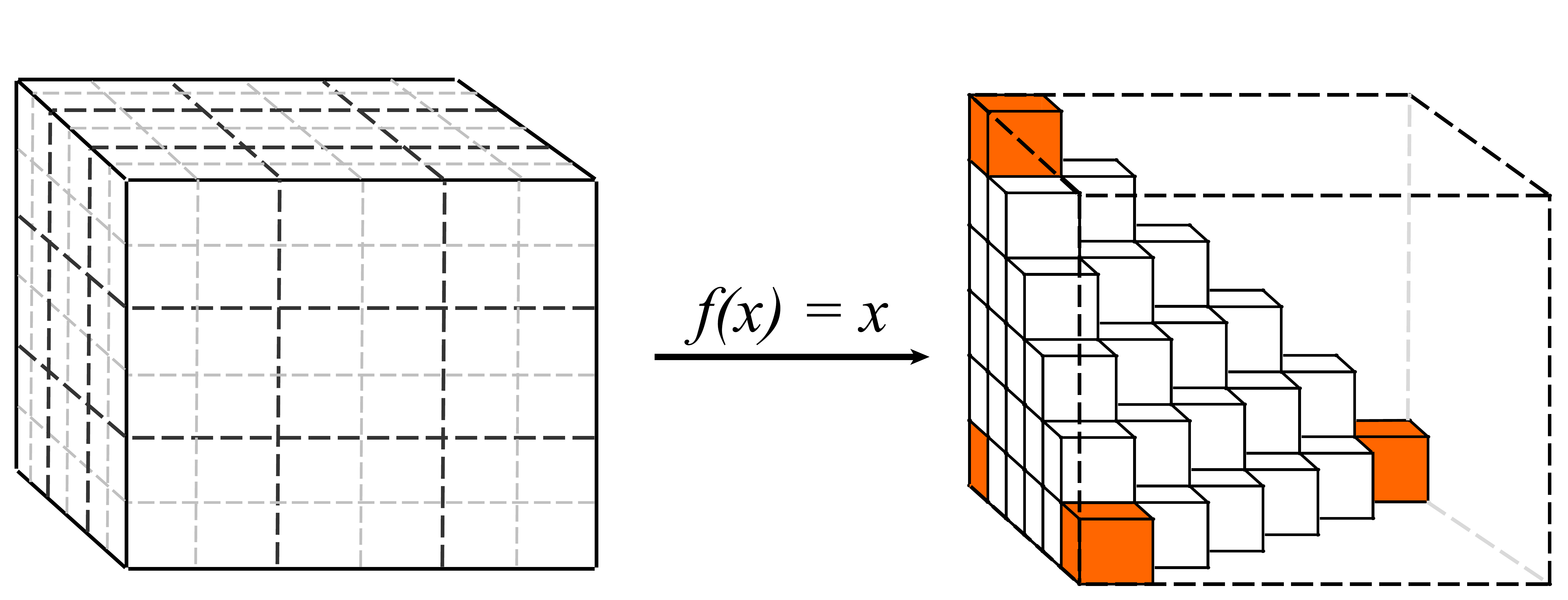}
\caption{The bounding-box generates near six times the number of threads.}
\label{fig_bb_strategy_tetrahedron}
\end{figure}

The issues recently described are two specific cases of the problem of mapping an $m$-orthotope to a regular orthogonal $m$-simplex, while matching their volumes at least asymptotically. Figure \ref{fig_dom-simplex} illustrates the geometry of discrete orthogonal simplices for $m=1,2,3$.
\begin{figure}[ht!]
\centering
\includegraphics[scale=0.11]{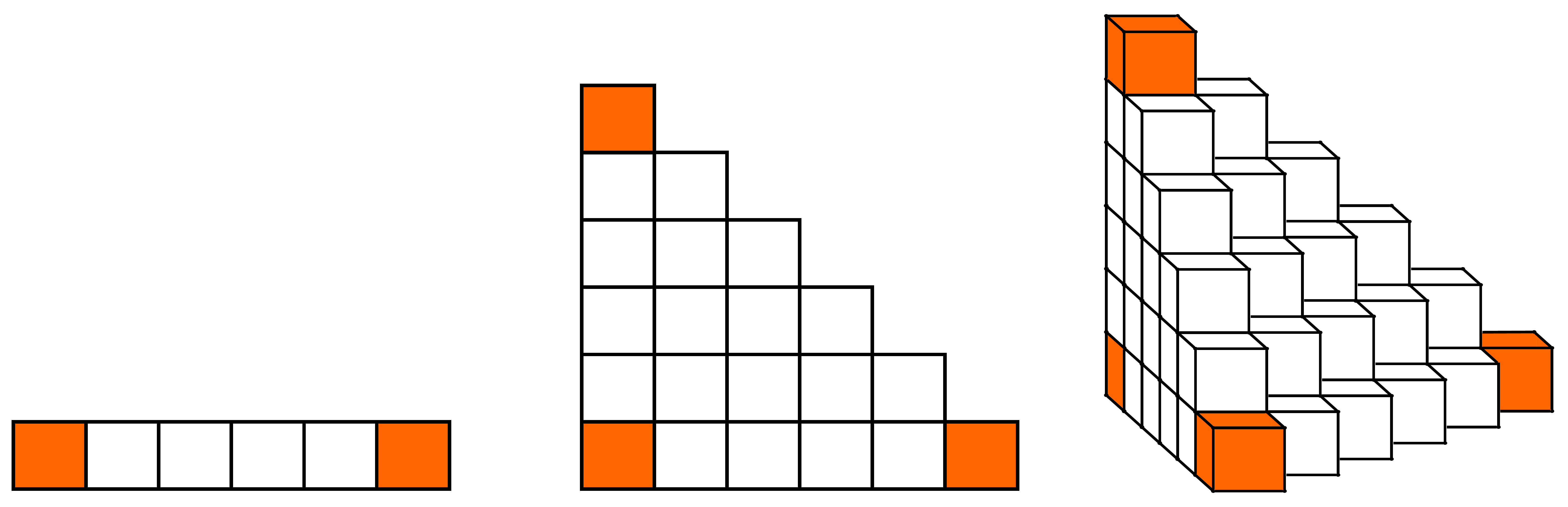}
\caption{Discrete orthogonal $m$-simplices for $m=1, 2, 3$.}
\label{fig_dom-simplex}
\end{figure}

The expression for the volume of an $m$-simplex corresponds to the
\textit{simplicial polytopic numbers}, which are defined as
\begin{equation}
    V(\Delta_n^m) = {{n+m-1}\choose{m}} = \frac{n(n+1)(n+2)...(n+m-1)}{m!}
    \label{eq_volume}
\end{equation}
where an induction \cite{1054599} can prove its formula, based on the fact that
the volume of $\Delta_n^{m+1}$ is the sum of the volumes of $n$ stacked
$m$-simplices of lengths $\{1, 2, 3, ..., n\}$, \textit{i.e.},
\begin{equation}
    V(\Delta_n^{m+1}) = \sum_{i=1}^n{V(\Delta_i^m)}
    \label{eq_volume_enum}
\end{equation}
and when combined with the properties of sums of binomial coefficients, leads
to formula (\ref{eq_volume}).

The parallel space efficiency challenge arises when trying to process an $m$-simplex domain with GPU computing, as its parallel programming model can only manage orthotopes of threads, namely $\Pi_n^m$, and not arbitrary shapes as simplices. Although the \textit{bounding-box} approach will work and is simple to implement, its main problem is that in the limit $n\mapsto \infty$, the fraction of extra parallel space of $V(\Pi_n^m)$ that lies outside of the $m$-simplex approaches to
\begin{equation}
    \alpha(\Pi, \Delta)_n^m = \lim_{n\to\infty} \frac{V(\Pi_n^m)}{V(\Delta_n^m)}
    - 1 =  m!-1
\end{equation}
thus being inefficient in terms of parallel space as all the extra thread-blocks
must be discarded at run-time through conditional or arithmetic filters. The
implication is a performance penalty at the mapping stage of any kernel execution.
Moreover, these extra thread-blocks can produce a negative impact on the
performance of multiple concurrent kernels as they could make use of hardware
resources that could be potentially available to other kernels.

One solution, used in the past years, is to enumerate all threads or
thread-blocks of a given orthotope in sequence, and then use this enumeration as
an indexing scheme. Such approach allows to formulate a map
\cite{DBLP:conf/hpcc/NavarroH14, CLEI-2016-navarro, 8392762} of the form
$\lambda:\mathbb{Z}^1 \mapsto \mathbb{Z}^m$ with $V(\Pi_n^m) =
\mathcal{O}(V(\Delta_n^m))$.  Although $\lambda$ can be computable by
elementary and arithmetic functions, it also requires the computation of $m$-th
order roots as part of the solution of an $m$-th order equation, which limits
the range of problem size $n$ due to potential numerical error in the mapping of
thread coordinates. Moreover, the method is restricted to $m \le 4$ as no general analytical
solutions are available for polynomials of $m > 4$. Finding a different kind
of map, free of these problems, would improve the state of the art regarding 
GPU thread maps for simplex domains.

The limitations described can be overcomed, in great part, by taking advantage
of the dimensionality available in the parallel space, which was an aspect not
exploited by the enumeration principle proposed in the past \cite{8392762}.  
Although parallel GPU spaces cannot
have a geometry different from an orthotope, they still are topologically
equivalent to an $m$-simplex and a self-similar set can match the desired
domain.  Therefore, finding an homeomorphism of the form
$\mathcal{H}: \mathbb{Z}^m \mapsto \mathbb{Z}^m$ would produce a map, efficient in
parallel space and with zero dimensional distance which would free it
from the computation of expensive $m$-th roots.

\section{Related Work}
\label{sec_related-work}
One of the earliest works in mapping the GPU efficiently onto a $2$-simplex is by Jung \textit{et. al.} \cite{Jung2008} in 2008, whom proposed a \textit{rectangular box strategy} (RB) for accessing and storing a triangular matrix (upper or lower) more efficiently on GPU. Data structures become practically half the size with respect to classical methods based on the full matrix. The strategy was originally intended to modify the data space (\textit{i.e.,} the matrix),
however one can apply the same concept to the thread space. One disadvantage of RB is that it only works on $2$-simplices, \textit{i.e.}, applying the idea at higher dimensions produces unnecessary threads.

Ries \textit{et. al.} contributed with a parallel GPU method for the triangular
matrix inversion \cite{Ries:2009:TMI:1654059.1654069}.  The authors identified
that the parallel space indeed can be improved by using a \textit{recursive
partition} (REC) of the grid, based on a \textit{divide and conquer} strategy.
The approach takes $O(\log_2(n))$ time by doing a balanced partition of the
structure, from the orthogonal point to the diagonal.

Q. Avril \textit{et. al.} proposed a GPU mapping function for collision
detection based on the properties a \textit{upper-triangular map}
\cite{AvrilGA12}. The map is a thread-space function $u(x) \rightarrow (a, b)$,
where $x$ is the linear index of a thread $t_x$ and the pair $(a,b)$ is a unique
two-dimensional coordinate in the upper triangular matrix. Since the map works
in thread space, the map is accurate only in the range $n \in [0, 3000]$ for a simplex of side length $n$. 

Navarro \textit{et al.} proposed a block-space map, denoted $\lambda$, for
$2$-simplices and $3$-simplices \cite{DBLP:conf/hpcc/NavarroH14,
CLEI-2016-navarro, 8392762}, based on the solution of an $m$ order equation that is
formulated from the linear enumeration of the discrete elements. The authors
report performance improvement for $2$ and $3$-simplices. By being a block-space
map, the authors report correct mapping coordinates up to problems of $n = 62900$ 
for the $2$-simplex case and up to $n=1546$ for the $3$-simplex case. Beyond these problem sizes, the map requires 64-bit floating point square roots (FP64) which penalize the speedup. 

This work explores a new concept of map, denoted $\mathcal{H}$, that costs $\mathcal{O}(1)$ time and is as fast as $\lambda$ and RB methods, while being free of numerical precision dimensional limitations. The proposed map $\mathcal{H}$ preserves locality at block-level and only uses integer arithmetic and bit-shift operations for its computation, making it an exact map by design.

\section{Formulation of $\mathcal{H}$}
\label{sec_formulation}

The formulation of $\mathcal{H}$ focuses on the special cases $m=2,3$, where the mapping can also be supported by illustrations, although it is not limited to just those dimensions, as explained later in Section \ref{sec:considerationshigher-dimensions}. Also, the analysis first assumes $n$ as a power of two \textit{i.e.}, $n = 2^k$ with $k \in \mathbb{Z}_+$, and later in sub-section \ref{subsec:general_n} expands to the general case of $n \in \mathbb{Z}_+$. The case when $m=1$ is not considered in the analysis as both the orthotope and simplex match in geometry thus no special map is required. 

For the rest of the manuscript, the concept of an \textit{efficient map} will be
based on the following definition:
\begin{definition}
\label{def_1}
An \textit{efficient map} is a function $\mathcal{H}:\mathbb{Z}^k_+
    \mapsto \mathbb{Z}^m_+$ that can map any thread coordinate
    $\omega_{x_1,x_2,...,x_k}$ from a parallel space $\Pi^k$ of volume $V(\Pi^k)
    = V(\Delta_n^m) \pm o(n^m)$, onto a unique location in $\Delta_n^m$, using
    $\mathcal{O}(1)$ arithmetic or bitwise operations, excluding explicit transcendental operations and roots.
\end{definition}
The following sub-sections formulate maps for $\Delta^2_n$, $\Delta^3_n$
and analyze their efficiency according to Definition \ref{def_1}.

\subsection{Mapping to $2$-Simplices}
Given a $2$-simplex, denoted as $\Delta_n^2$, with $n$ being its side length, its space $V(\Delta_n^2)$ is given by the triangular numbers
\begin{equation}
    V(\Delta_n^2) = \frac{n(n+1)}{2}.
\end{equation}
Each location of the $2$-simplex is a $(x,y)$ pair where $x + y \le n$ satisfies when the origin is at the orthogonal vertex. The goal is to find an orthotope $\Pi^2$ with an asymptotic volume of
$\mathcal{O}(V(\Delta_n^2))$ and an efficient map $\mathcal{H}$, that
with $O(1)$ arithmetic operations, assigns threads of $\Pi^2$
to unique locations in $\Delta_n^2$.  

\begin{lemma}
\label{lemma_1}
    There exists a set $S_n^2$ of self-similar regular orthotopes where $V(S_n^2) = V(\Delta_{n-1}^2)$.
\end{lemma}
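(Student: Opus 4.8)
The plan is to construct $S_n^2$ explicitly through a recursive binary subdivision of the discrete triangle and then verify the volume identity by summing a geometric series. The literal claim is only the equality of volumes, so the real work is choosing the right decomposition; making it genuinely self-similar is then essentially automatic.

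First I would record the splitting identity for triangular numbers: since $n = 2^k$ is even,
\[
V(\Delta_{n-1}^2) = \binom{n}{2} = \left(\frac{n}{2}\right)^2 + 2\binom{n/2}{2} = \left(\frac{n}{2}\right)^2 + 2\,V(\Delta_{n/2-1}^2).
\]
Geometrically this says a discrete right triangle with leg $n-1$ can be cut at the midpoints of its legs into one square block of side $n/2$ together with two disjoint discrete triangles of leg $n/2-1$ (one at the orthogonal corner, one opposite to it). Because $n$ is a power of two, every halving in this process is exact, so the blocks fit together on the integer lattice with no gaps and no overlaps.

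Then I would define $S_n^2$ as the family of square orthotopes generated by applying this cut recursively: at recursion level $j = 0,1,\dots,k-1$ it contributes $2^j$ square blocks of side $n/2^{j+1}$, and the recursion terminates at level $k$, where the $2^k$ residual triangles have leg $n/2^{k}-1 = 0$ and are therefore empty. This set is self-similar by construction: $S_n^2$ equals one square of side $n/2$ plus two rescaled copies of $S_{n/2}^2$, i.e. it has replication factor $\beta = 2$ and scaling factor $r = 1/2$, exactly the parameters invoked for the higher-dimensional generalization. Finally I would sum the volumes,
\[
V(S_n^2) = \sum_{j=0}^{k-1} 2^j \left(\frac{n}{2^{j+1}}\right)^2 = \frac{n^2}{4}\sum_{j=0}^{k-1}\left(\frac{1}{2}\right)^{\!j} = \frac{n^2}{4}\cdot 2\left(1 - \frac{1}{2^k}\right) = \frac{n^2 - n}{2} = V(\Delta_{n-1}^2),
\]
using $2^k = n$ in the last step.

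The main obstacle is not the algebra — the geometric sum is routine — but pinning down the recursive partition so that the square blocks are disjoint, align on the integer lattice, and the base case is truly empty rather than a leftover sliver. The hypothesis $n = 2^k$ is precisely what keeps this bookkeeping exact, which is why the statement is proved first in that regime and the general $n \in \mathbb{Z}_+$ case is postponed to the later subsection.
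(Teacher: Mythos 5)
Your proof is correct and follows essentially the same route as the paper: the same recursive decomposition $V(S_n^2) = \left(\frac{n}{2}\right)^2 + 2V(S_{n/2}^2)$ with $r=1/2$, $\beta=2$, resolved by the same geometric series to obtain $\frac{n(n-1)}{2} = V(\Delta_{n-1}^2)$. The only difference is that you spell out the lattice tiling (disjointness and the empty base case under $n=2^k$) explicitly, which the paper leaves implicit in its figure.
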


\begin{proof}
Let $S_n^2$ be a set of self-similar regular orthotopes where its total volume is defined by the recurrence
\begin{equation} 
\label{eq_m2recurrence}
    V(S_n^2) = \Big(\frac{n}{2}\Big)^2 + 2V(S_{n/2}^2)
\end{equation}
where the largest regular orthotope of $S_n^2$ has dimensions $\frac{n}{2} \times \frac{n}{2}$ and the initial condition is $V(S_1^2) = 0$. The expansion of Eq. (\ref{eq_m2recurrence}) produces the geometric series
\begin{alignat}{2}
    \label{eq_m2exp}
    V(S_n^2) &= 2^0\Big(\frac{n}{2^1}\Big)^2 + 2^1\Big(\frac{n}{2^2}\Big)^2
    +\dots +2^{\log_2(n) -1}\Big(\frac{n}{2^{\log_2 n}}\Big)^2\\
    &= \frac{n^2}{2} \sum_{i=1}^{\log_2 n} \Big(\frac{1}{2^{i}}\Big).
\end{alignat}
where its reduction via $\sum_{i=0}^{k}{a^i} = \frac{a^{k+1} - 1}{a-1}$, results in
\begin{alignat}{2}
    V(S_n^2) &= \frac{n^2}{2} \Bigg(-1 +  \sum_{i=0}^{\log_2 n} {\Big(\frac{1}{2}\Big)^i}\Bigg)\\
    &= \frac{n^2}{2} \Bigg( -1 + \frac{ (1/2)^{\log_2{n} + 1} - 1 }{1/2 - 1}
    \Bigg) \\
    \label{eq_m2}
    &= \frac{n(n-1)}{2} = \Delta_{n-1}^2.
\end{alignat}
\end{proof}
Given that its possible to represent a $2$-simplex as a set of regular orthotopes, it is now necessary to group them as a grid to fit in the GPU programming model.

\begin{lemma}
\label{lemma_2}
    A set $S_n^2$ can be organized into a single irregular super-orthotope $\Pi_{\frac{n}{2}, n-1}^2$ of dimensions $\frac{n}{2} \times n-1$.
\end{lemma}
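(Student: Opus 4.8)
The plan is to prove Lemma~\ref{lemma_2} by induction on $k=\log_2 n$, recycling the self-similar decomposition that already appears in the recurrence~(\ref{eq_m2recurrence}) used for Lemma~\ref{lemma_1}.

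First I would record the volume bookkeeping: the target super-orthotope $\Pi_{\frac{n}{2},\,n-1}^2$ has $\frac{n}{2}\cdot(n-1)=\frac{n(n-1)}{2}$ cells, which by Lemma~\ref{lemma_1} is exactly $V(S_n^2)$. This is a sanity check rather than the proof itself; what remains is to exhibit an explicit, overlap-free placement of the pieces of $S_n^2$ inside a $\frac{n}{2}\times(n-1)$ box.

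For the base case $n=2$, the set $S_2^2$ consists of a single $1\times 1$ orthotope (the $(n/2)^2$ term) together with two empty copies of $S_1^2$, and $\Pi_{1,\,1}^2$ is a $1\times 1$ box, so the statement is immediate. For the inductive step I would assume $S_{n/2}^2$ can be packed into a $\frac{n}{4}\times(\frac{n}{2}-1)$ rectangle and then argue as follows: by~(\ref{eq_m2recurrence}), $S_n^2$ is one $\frac{n}{2}\times\frac{n}{2}$ orthotope plus two copies of $S_{n/2}^2$; lay the two copies side by side along the first axis so their widths sum to $\frac{n}{4}+\frac{n}{4}=\frac{n}{2}$ while they share the common height $\frac{n}{2}-1$, forming a $\frac{n}{2}\times(\frac{n}{2}-1)$ block; then stack the $\frac{n}{2}\times\frac{n}{2}$ orthotope directly on top along the second axis, giving total width $\frac{n}{2}$ and total height $(\frac{n}{2}-1)+\frac{n}{2}=n-1$, i.e.\ precisely $\Pi_{\frac{n}{2},\,n-1}^2$. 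Since the three pieces are disjoint by construction and their combined volume equals the area of the box, the arrangement tiles it with no gaps.

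The main obstacle I expect is not conceptual but one of presentation: being careful that the dimensions align at every recursion level (so that the two recursive sub-blocks genuinely add up to width $n/2$ and the square sits flush on top without overhang), and clarifying that the ``irregular super-orthotope'' $\Pi_{\frac{n}{2},\,n-1}^2$ is to be read as this particular staircase arrangement of sub-orthotopes rather than a single geometric block. Including the accompanying figure makes the packing essentially self-evident; beyond that, the argument is elementary once Lemma~\ref{lemma_1}'s decomposition is in hand.
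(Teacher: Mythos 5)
Your proof is correct and is essentially the paper's argument recast as an induction: unrolling your recursive placement puts all equal-sized sub-orthotopes of level $i$ side by side in a strip of width $\frac{n}{2}$ and height $\frac{n}{2^i}$, and your height bookkeeping $\left(\frac{n}{2}-1\right)+\frac{n}{2}=n-1$ is just the paper's geometric-series sum $\sum_{i=1}^{\log_2 n}\frac{n}{2^i}=n-1$ computed recursively. The resulting packing is the same as the paper's level-by-level stacking, with your version merely making the overlap-free placement slightly more explicit.
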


\begin{proof}
    In the expansion of formula ($\ref{eq_m2exp}$), each term can be thought as the contribution of a sub-set of $2^{i-1}$ regular orthotopes of equal size, each one of dimensions $\frac{n}{2^i} \times \frac{n}{2^i}$, where $i$ is the $i$-th level of recursion. When considering whole sub-sets, we have that  its space is $2^{i-1} \frac{n^2}{2^{2i}} = \frac{n}{2} \times \frac{n}{2^i}$, therefore it is possible to stack these sub-sets to form a super-orthotope $\Pi_{n/2, n-1}^2$ where its height is the result of the geometric series $\sum_{i=1}^{\log_2 n} \frac{n}{2^i} = n-1$.
\end{proof}
Figure \ref{fig_m2-ortho-simplex-map}, left, illustrates the super-orthotope and how all sub-sets stack into different layers.  
\begin{figure}[ht!]
\centering
\includegraphics[scale=0.10]{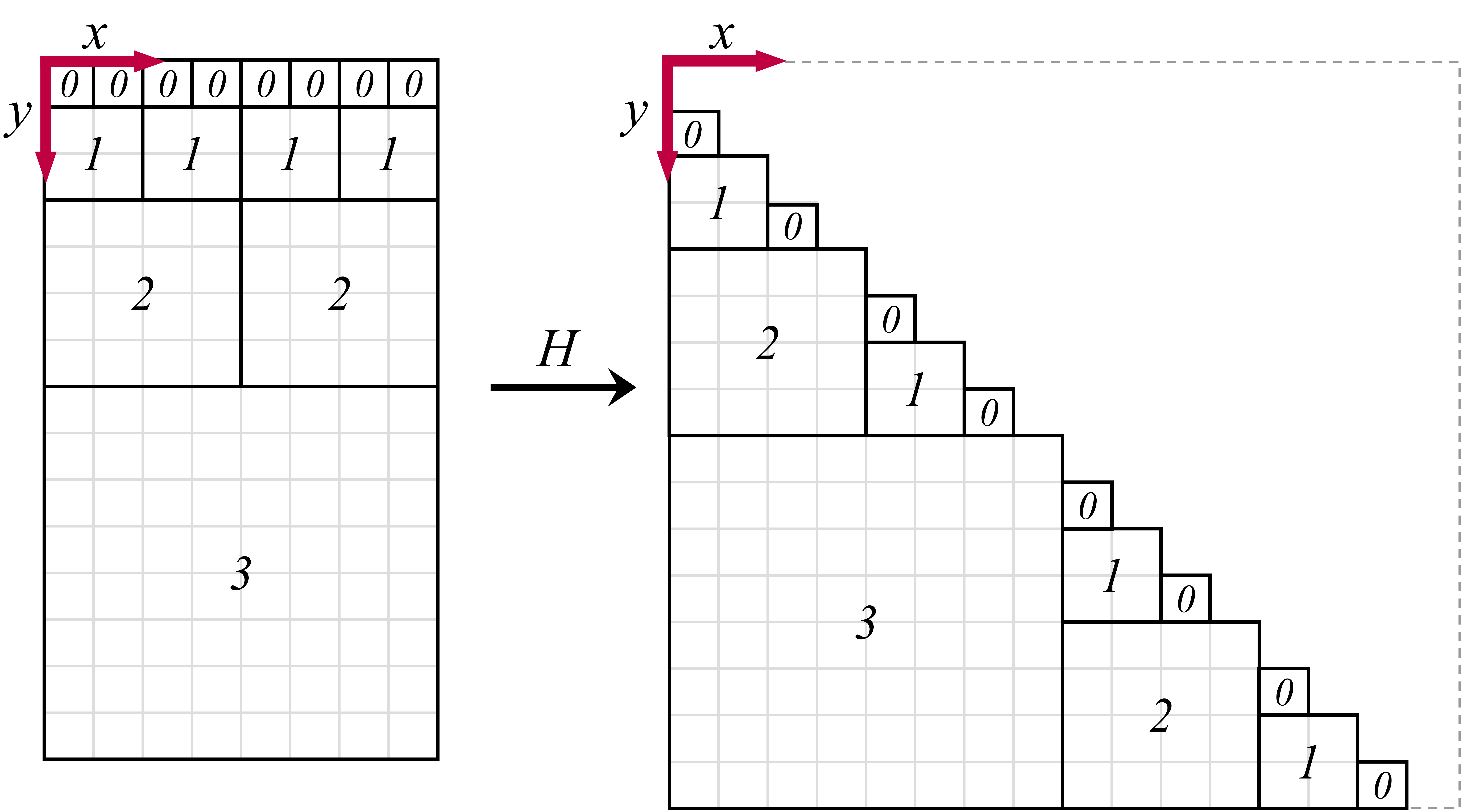}
    \caption{$\Pi_{\frac{n}{2},n-1}^2$ (left) and $\Delta_{n-1}^2$ (right) are two different organizations of $S_{n}^2$.}
\label{fig_m2-ortho-simplex-map}
\end{figure}

\begin{theorem}
\label{theorem_1}
    Given a discrete orthogonal $2$-simplex $\Delta_{n-1}^2$, there exists a 
    map $\mathcal{H}: \mathbb{Z}_+^2 \mapsto \mathbb{Z}_+^2$ from
    $\Pi_{\frac{n}{2},n-1}^2$ to $\Delta_{n-1}^2$ that is efficient by Definition \ref{def_1}.
\end{theorem}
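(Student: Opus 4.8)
The plan is to build $\mathcal{H}$ directly from the recursive decomposition already supplied by Lemmas~\ref{lemma_1} and~\ref{lemma_2}, and then argue that it is a bijection computable with a constant number of integer and bitwise operations. The key observation is that both $\Pi_{\frac{n}{2},n-1}^2$ and $\Delta_{n-1}^2$ carry the \emph{same} collection of $\sum_{i=1}^{\log_2 n}2^{i-1}$ square blocks of Lemma~\ref{lemma_1}: in $\Pi$, the $2^{i-1}$ squares of side $n/2^i$ form the $i$-th stacked layer (Lemma~\ref{lemma_2}), whereas in $\Delta_{n-1}^2$ they are the $2^{i-1}$ squares that appear after $i-1$ rounds of the classical right-triangle split (a triangle of leg $\ell$ is a corner square of side $\ell/2$ together with two translated triangles of leg $\ell/2$). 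I would then define $\mathcal{H}$ so that it carries the $j$-th square of layer $i$ in $\Pi$ by a pure translation onto the $j$-th square of level $i$ in $\Delta_{n-1}^2$; restricted to each block, $\mathcal{H}$ is the identity up to a shift.

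First I would make the addressing explicit. For a thread $(x,y)$ with $0\le x<n/2$ and $0\le y\le n-2$, put $w=n-1-y$; because $n=2^k$, the layer index is $i=k-\lfloor\log_2 w\rfloor$, that is, one plus the number of leading zeros of $w$ viewed as a $k$-bit word, which is a single $\mathcal{O}(1)$ bit operation (a count-leading-zeros intrinsic) rather than an explicit transcendental call. The block index inside the layer is the high part $j=x\gg(k-i)$, and the in-block coordinates are $x_{\mathrm{loc}}=x\bmod 2^{k-i}$ and $y_{\mathrm{loc}}=2^{k-i+1}-1-w$. Unrolling the $i-1$ recursive splits shows that the translation placing block $(i,j)$ inside the triangle is $\bigl(n-2^{k-i+1}(j+1),\,2^{k-i+1}j\bigr)$, so I would set
\[
\mathcal{H}(x,y)=\bigl(\,n-2^{k-i+1}(j+1)+x_{\mathrm{loc}},\ \ 2^{k-i+1}j+y_{\mathrm{loc}}\,\bigr),
\]
whose ingredients are shifts, additions and comparisons on integers, with no roots.

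Next I would verify the three requirements of Definition~\ref{def_1}. \emph{Range}: a short bound shows the image coordinates are $\ge 0$ and sum to $\le n-2$, so the target lies in $\Delta_{n-1}^2$. \emph{Injectivity}: two threads in the same block $(i,j)$ receive the same translation but distinct local coordinates, hence distinct images, while two threads in different blocks land inside two different cells of the recursive decomposition of $\Delta_{n-1}^2$, and those cells are pairwise disjoint. \emph{Surjectivity}: it is now automatic, since $V(\Pi_{\frac{n}{2},n-1}^2)=V(S_n^2)=V(\Delta_{n-1}^2)$ by Lemma~\ref{lemma_1} and an injection between finite sets of equal cardinality is a bijection; the same identity certifies the volume clause of Definition~\ref{def_1} with zero error, hence trivially $o(n^2)$. \emph{Cost}: the computation uses a fixed number of integer arithmetic, bit-shift and leading-zero operations, independent of $n$, and never evaluates an $m$-th root or transcendental function, so $\mathcal{H}$ is efficient.

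The step I expect to be the main obstacle is the injectivity claim --- precisely, that as $j$ ranges over its $2^{i-1}$ values and $i$ over $1,\dots,\log_2 n$, the offsets $\bigl(n-2^{k-i+1}(j+1),\,2^{k-i+1}j\bigr)$ tile $\Delta_{n-1}^2$ by disjoint squares without spilling outside the triangle. I would settle this by induction on $k$ (equivalently, on recursion depth): the base $n=2$ is immediate, and for the step, removing the corner square of side $n/2$ from $\Delta_{n-1}^2$ leaves exactly two copies of $\Delta_{n/2-1}^2$, translated by $(n/2,0)$ and $(0,n/2)$, to which the hypothesis applies; checking that the stated offset recursion is the one produced by these two translations is then routine bookkeeping. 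It is here that the assumption $n=2^k$ is essential, as it is what makes the discrete triangle split cleanly with no rounding; the general $n$ is handled separately in Section~\ref{subsec:general_n}.
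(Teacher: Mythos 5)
Your proposal is correct and follows essentially the same route as the paper: it builds $\mathcal{H}$ by construction from the decomposition of Lemmas~\ref{lemma_1} and~\ref{lemma_2}, computes the layer via a floor-$\log_2$/count-leading-zeros of the vertical coordinate and the block index via a shift, and translates each sub-orthotope into the triangle with $\mathcal{O}(1)$ integer and bit operations. The only differences are cosmetic (a reversed stacking order and origin convention, thread-level rather than block-level coordinates) plus a more explicit bijectivity/tiling verification than the paper's proof-by-construction, which is a welcome addition rather than a departure.
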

\begin{proof}
    The proof proceeds by construction.
    Let $S_n^2$ be a set of self-similar orthotopes which by Lemma \ref{lemma_1} matches $\Delta_{n-1}^2$ in space. By Lemma \ref{lemma_2}, all regular
    orthotopes of $S_n^2$ can be organized to form an irregular super-orthotope $\Pi_{n/2,n-1}^2$. For convenience, the origin of $\mathbb{Z}^2$ plane is redefined at the top left corner of $\Pi^2$ and the vertical axis points down (Figure \ref{fig_m2-ortho-simplex-map}). 
    The point $\omega = (\omega_x,\omega_y)$ is defined as the two-dimensional 
    coordinate of a discrete unit in $\Pi^2$, which corresponds to a
    thread-block, \textit{i.e.}, not a thread but a group of threads. Lemma \ref{lemma_2} also establishes that sub-orthotopes share the
    same stack height level only if they are equal in size. Such property allows
    to compute the stack height level $b$ of any discrete element of $\Pi^2$ as
    \begin{equation}
        b = 2^{\lfloor \log_2(\omega_y +1)\rfloor}
    \end{equation}
    as well as 
    \begin{equation}
        q = \Big\lfloor \frac{\omega_x}{b} \Big\rfloor
    \end{equation}
    which is the index of the orthotope for which the pair $\omega$ belongs to,
    at level $b$, from left to right.  Finally, arithmetic operations between
    $\omega,b,q$ produce the map
\begin{equation}
    \label{eq_map2d}
    \mathcal{H}(\omega) = (\omega_x + qb, \omega_y + 2qb + 1)
\end{equation}
    which is efficient by Definition \ref{def_1} as $V(\Pi_{n}^2) = V(\Delta_{n-1}^2) + o(n^2)$ satisfies using $O(1)$ arithmetic operations excluding roots.
\end{proof}
    The map $\mathcal{H}$ from Eq. (\ref{eq_map2d}) is a considerable
    improvement over previous proposed maps based on the enumeration principle as it does not use square roots. Regarding the number of sequential steps, it is also more efficient than the recursive map by Ries \textit{et.  al} which was graphically similar but required $O(\log_2(n))$ \cite{Ries:2009:TMI:1654059.1654069} time. Additionally, since blocks have a constant size of $\rho^2 \ll n$ (with $\rho$ the number of threads in each dimension of the block), any extra number of threads only occurs at the diagonals or lower boundary, which is asymptotically $\le 2n\rho^2 \in o(n^2)$.

    The computation of $\mathcal{H}$ requires a constant number of arithmetic operations, as well as two transcendental functions in $\mathbb{Z}_+$; $\lfloor \log_2 (\omega_y+1) \rfloor$ and $2^{\lfloor \log_2(\omega_y+1) \rfloor}$.  In general, function $\lfloor \log_2(...) \rfloor$ can be computed fast on GPU by using the relation
    \begin{equation}
        \lfloor \log_2(x) \rfloor = bits - clz(x)
    \end{equation}
    where $bits$ is the number of bits of the word and $clz$ counts the number of leading zero bits of $y$, which is an efficient hardware-level
    instruction available in GPUs. The exponential $2^{\lfloor \log_2(\omega_y+1) \rfloor}$ is computed as
    \begin{equation}
        2^{\lfloor \log_2(\omega_y+1) \rfloor} = 2 << (bits - clz(\omega_y+1)).
    \end{equation}
    Considering that the two transcendental functions can be computed using bit-level operations, and that the parameters are re-used by registers, it is expected that the parallel space improvement from $O(n^2)$ to $O(n)$ unnecessary threads can indeed result in a significant performance improvement, which for the case of 2-simplices can be up to $2\times$ \cite{DBLP:conf/hpcc/NavarroH14, 8392762}. Moreover, since no square roots are required, $\mathcal{H}$ has the potential to be faster and more precise than previous mapping techniques based on the enumeration principle \cite{AvrilGA12, DBLP:conf/hpcc/NavarroH14, 8392762}.

\subsection{Extending $\mathcal{H}$ to the general case $n \in \mathbb{Z}_+$}
\label{subsec:general_n}
    The analysis of $\mathcal{H}$ has assumed problems with sizes of
    the form $n = 2^k$. For general $n$, we present three extensions, with the 
    third being the chosen and most efficient one. 
    \begin{enumerate}
        \item \textit{Approach $n$ from above}: build a single orthotope $\Pi_{n'}^2$, where $n' =
                2^{\lceil \log_2(n) \rceil}$ and filter out the threads outside
                the domain. This approach keeps simplicity at the cost of adding
                extra threads. The main disadvantage is for values of $n = 2^k +
                c$ when $c$ is a small number, \textit{i.e.,} $c \ll 2^{k+1} - 2^k$, as it can generate up to $4\times$ the number of necessary threads.
        \item \textit{Approach $n$ from below}: apply a set of super-orthotopes
            $\Pi_{n_1}^2, \Pi_{n_2}^2, \Pi_{n_i}^2, ...$, where $n_i =
            \log_2\Big(n - \sum_{k=1}^{i-1} n_k\Big)$ for $i \ge 2$, $n_1 =
            \lfloor \log_2(n) \rfloor$, plus a set of additional maps for
            the sub-regions that remain un-mapped below each simplex, at each
            level. This approach, while not adding extra threads, adds
            complexity in the algorithm and sequential steps.
        \item \textit{Concurrent Trapezoids}: This approach combines the 
            ideas of the other two; it extends the mapped geometry to a
            trapezoid instead of a simplex, and assumes kernel concurrency, which is
            a feature that exists in actual GPU architectures. The idea is to
            launch, concurrently, a set of special orthotopes that map onto
            trapezoids in the data domain. The set of concurrent trapezoids 
            follows the principle of approaching $n$ from below, except for the
            smallest one that approaches the remaining size from above. This last trapezoid triggers when $2^{\lceil{\log_2(n)}\rceil} - n_k < T$, with $T$
            an arbitrary threshold value that serves as a mechanism to
            limit the size of the trapezoids set.
    \end{enumerate}
    The \textit{Concurrent Trapezoids} approach is the most efficient one in
    terms of map efficiency because of the following three features:
    (1) it works concurrently, keeping the cost of $O(1)$ time unchanged, (2) the set of trapezoids will be small as the triggering
    condition for the last trapezoid has a high probability to occur at an early
    stage (only on rare occasions will produce a set of $\log_2(n)$ elements which is the worst case)
    when using an adequate threshold, and (3) a small modification in the
    expression of $\mathcal{H}$ makes the map reusable even when the
    mapped domain is now a trapezoid instead of a simplex.  
    Figure \ref{fig_trapezoids} illustrates the idea behind the \textit{concurrent
    trapezoids} approach.
    \begin{figure}[ht!]
    \centering
    \includegraphics[scale=0.15]{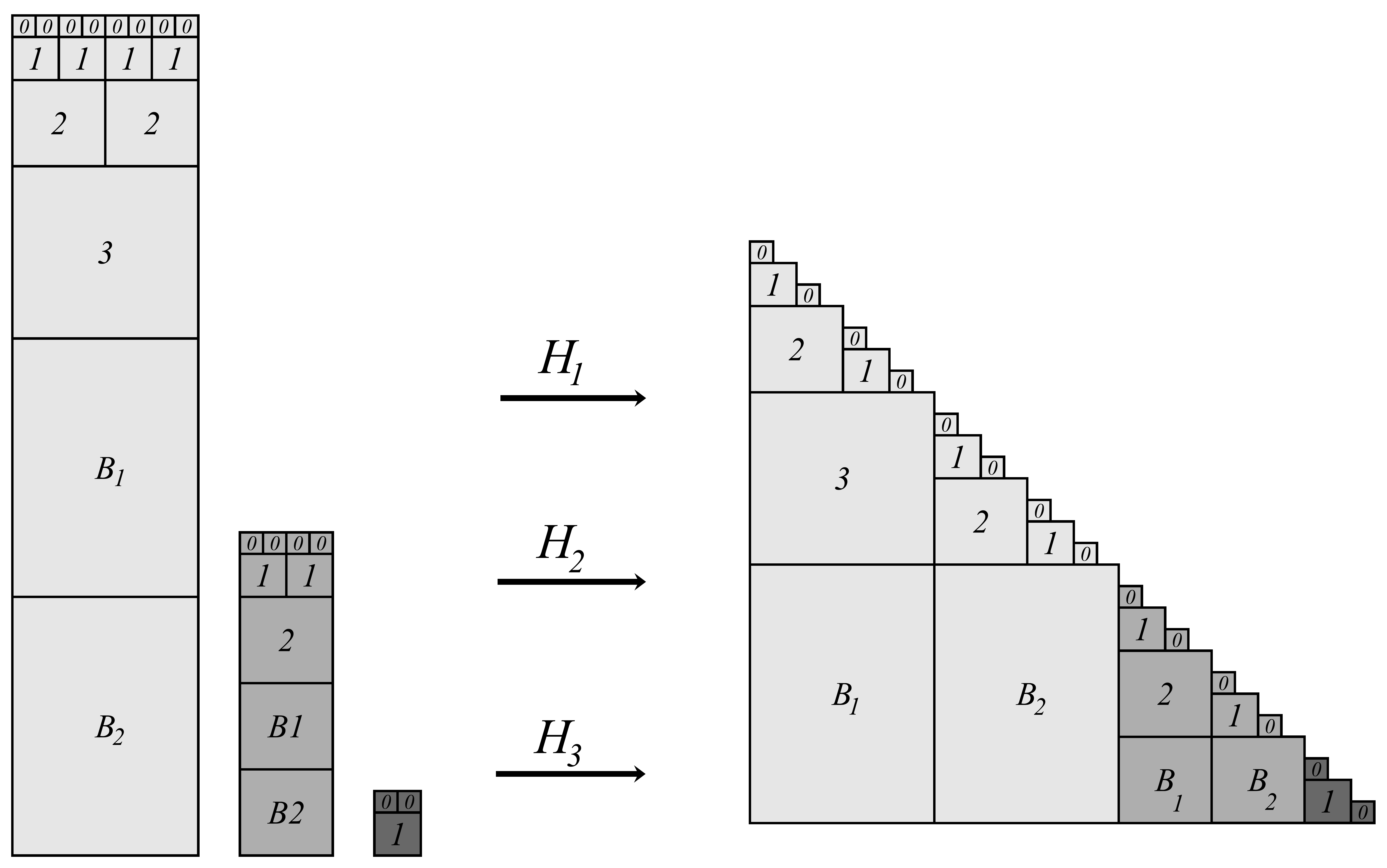}
        \caption{The concurrent trapezoids approach for a simplex of $n=27$ with
        threshold of $T = 1$.}
    \label{fig_trapezoids}
    \end{figure}

    Under the \textit{Concurrent Trapezoids} approach, $\mathcal{H}$
    receives three extra parameters that are common for all threads; i) an
    offset $\delta(x,y)$ to specify the starting coordinate of each trapezoid
    map in the simplex, ii) the orthotope height of the simplex region with the $B_1$ block
    included, namely $h_1$, and iii) the height of $B_2$, namely $h_2$.  With
    these three extra parameters, all threads can map to their corresponding
    locations in the whole trapezoid. The expression for $\mathcal{H}$
    acting on trapezoids is now:
    \begin{equation}
        \label{eq_trapezoids}
        \mathcal{H}(\omega) = (\delta_x + \omega_x + qb + kG_x, \delta_y + \omega_y -kh_2 + 2qb + 1)
    \end{equation}
    where $k = (h_1 - \omega_y) >> 31$ is a $1 | 0$ signed 32-bit integer mask that is zero when threads are in the simplex or $B_1$ region (in thread space), otherwise is takes the value $k=1$ to move the $B_2$ region of thread-blocks to the corresponding location in the data domain. Parameter $G_x$ is the width of the orthotope (grid) and is accessible to all threads. 
    
    The next sub-section presents the extension of $\mathcal{H}$ for $3$-simplices.

\subsection{Mapping to $3$-Simplices}
For a $3$-simplex of size $n$ per dimension, denoted as $\Delta_n^3$, its volume
is given by the tetrahedral numbers
\begin{equation}
    V(\Delta_n^3) = \frac{n(n+1)(n+2)}{6}.
\end{equation}

The formulation extends from the 2D one, now satisfying $x + y + z \le n$ and packing a set of $3$-orthotopes.
\begin{lemma}
\label{lemma_3}
    Given a discrete orthogonal $3$-simplex $\Delta_n^3$, a set $S_n^3$ of self-similar orthotopes produces a space of $V(S_n^3) = V(\Delta_{n-1}^3)$.
\end{lemma}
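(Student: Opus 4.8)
The plan is to mirror the proof of Lemma~\ref{lemma_1}. I would first define $S_n^3$ by a recurrence analogous to Eq.~(\ref{eq_m2recurrence}), obtained from a self-similar partition of $\Delta_n^3$: cut $\Delta_n^3$ at the midpoint $n/2$ along each axis, i.e.\ split its bounding cube into the eight octant sub-cubes of side $n/2$, and classify each sub-cube according to how the hyperplane $x+y+z=n$ meets it. The three sub-cubes that share an edge with the orthogonal corner are, after translation, exact copies of $\Delta_{n/2}^3$; the sub-cube at the orthogonal corner is a cube of side $n/2$ with a corner sub-simplex removed; and the four remaining sub-cubes lie entirely outside the domain. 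Counting the corner cube in full (its ``poking-out'' residue is what the deeper recursion, terminating at the empty base case, discards) and recursing into the two relevant sub-simplices yields
\begin{equation}
    V(S_n^3) = \Big(\frac{n}{2}\Big)^3 + 2\,V(S_{n/2}^3),
\end{equation}
with initial condition $V(S_1^3)=0$.

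Next I would expand this recurrence exactly as in the $m=2$ case. At the $i$-th level of recursion one obtains $2^{i-1}$ sub-cubes of side $n/2^{i}$, so
\begin{equation}
    V(S_n^3) = \sum_{i=1}^{\log_2 n} 2^{i-1}\Big(\frac{n}{2^{i}}\Big)^3 = \frac{n^3}{2}\sum_{i=1}^{\log_2 n}\Big(\frac{1}{4}\Big)^{i},
\end{equation}
a geometric series. Applying $\sum_{i=0}^{k} a^{i} = \frac{a^{k+1}-1}{a-1}$ with $a=1/4$ and $k=\log_2 n$ (so that $a^{k}=n^{-2}$) and simplifying gives
\begin{equation}
    V(S_n^3) = \frac{n^3}{6}\Big(1 - \frac{1}{n^2}\Big) = \frac{(n-1)n(n+1)}{6} = V(\Delta_{n-1}^3),
\end{equation}
which is the claim. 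As an independent check, the same closed form drops out of the stacking identity~(\ref{eq_volume_enum}): by Lemma~\ref{lemma_1}, $\sum_{i=1}^{n}V(S_i^2)=\sum_{i=1}^{n}\binom{i}{2}=\binom{n+1}{3}=V(\Delta_{n-1}^3)$, i.e.\ $S_n^3$ can also be viewed as $n$ stacked, unit-thick copies of the two-dimensional sets $S_i^2$.

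The summation is routine; the real work is pinning down the recurrence from the geometry, which is genuinely more delicate than in two dimensions. In the $m=2$ case the corner square $[0,n/2]^2$ lies entirely inside $\Delta_n^2$, so the leading term of the recurrence is literally one self-similar orthotope; in $m=3$ no cube of side $n/2$ fits inside $\Delta_n^3$ (any such cube forces coordinate-sum up to $3n/2>n$), so the corner octant is a cube with a corner tetrahedron carved out. The care needed is to argue that this carved-out tetrahedron is exactly the portion absorbed by the next recursion level, that the two translated sub-simplices plus the corner cube tile the domain without overlap (up to the measure-zero/finest-level residue), and — as in Lemma~\ref{lemma_2} for the two-dimensional case — that the resulting collection of orthotopes can subsequently be reorganized into a single super-orthotope suitable for the GPU grid. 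Once that bookkeeping is settled, the volume identity $V(S_n^3)=V(\Delta_{n-1}^3)$ follows mechanically from the geometric-series reduction above.
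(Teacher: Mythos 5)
Your proof is correct and follows essentially the same route as the paper: define $V(S_n^3)$ by the recurrence $V(S_n^3)=\left(\frac{n}{2}\right)^3+2V(S_{n/2}^3)$, expand it into the geometric series $\frac{n^3}{2}\sum_{i=1}^{\log_2 n}(1/4)^i$, and reduce to $\frac{(n-1)n(n+1)}{6}=V(\Delta_{n-1}^3)$. The extra geometric gloss you add is not needed for the lemma (which is purely a volume identity) and is slightly off in one detail --- the corner cube's portion poking past the plane $x+y+z=n$ is not ``discarded'' by deeper recursion but is re-used, via the hinge-like remapping the paper introduces later, to cover the remaining sub-tetrahedral region, which is why two (not three) recursive copies suffice --- but this does not affect the correctness of the volume computation.
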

\begin{proof}
Let $S_n^3$ be a set of self-similar regular orthotopes where its total volume
is defined by the recurrence
\begin{equation}
    V(S_n^3) = \Big(\frac{n}{2}\Big)^3 + 2V(S_{n/2}^3) = \frac{n^3}{2}
    \sum_{i=1}^{\log_2(n)} \Big(\frac{1}{4}\Big)^i.
\end{equation}
A reduction via geometric series produces
\begin{alignat}{2}
    \label{eq_m3}
    V(S_n^3) &= \frac{n^3 - n}{6} = V(\Delta_{n-1}^{3}).
\end{alignat}
\end{proof}
Here the diagonal plane is not considered, thus $V(S_{n}^3) = V(\Delta_{n-1}^3)$. The way how $S_n^3$ (red cubes) is placed onto $\Delta_n^3$ (gray region) is illustrated in Figure \ref{fig_optimization-m3}.  
\begin{figure}[ht!]
\centering
\includegraphics[scale=0.056]{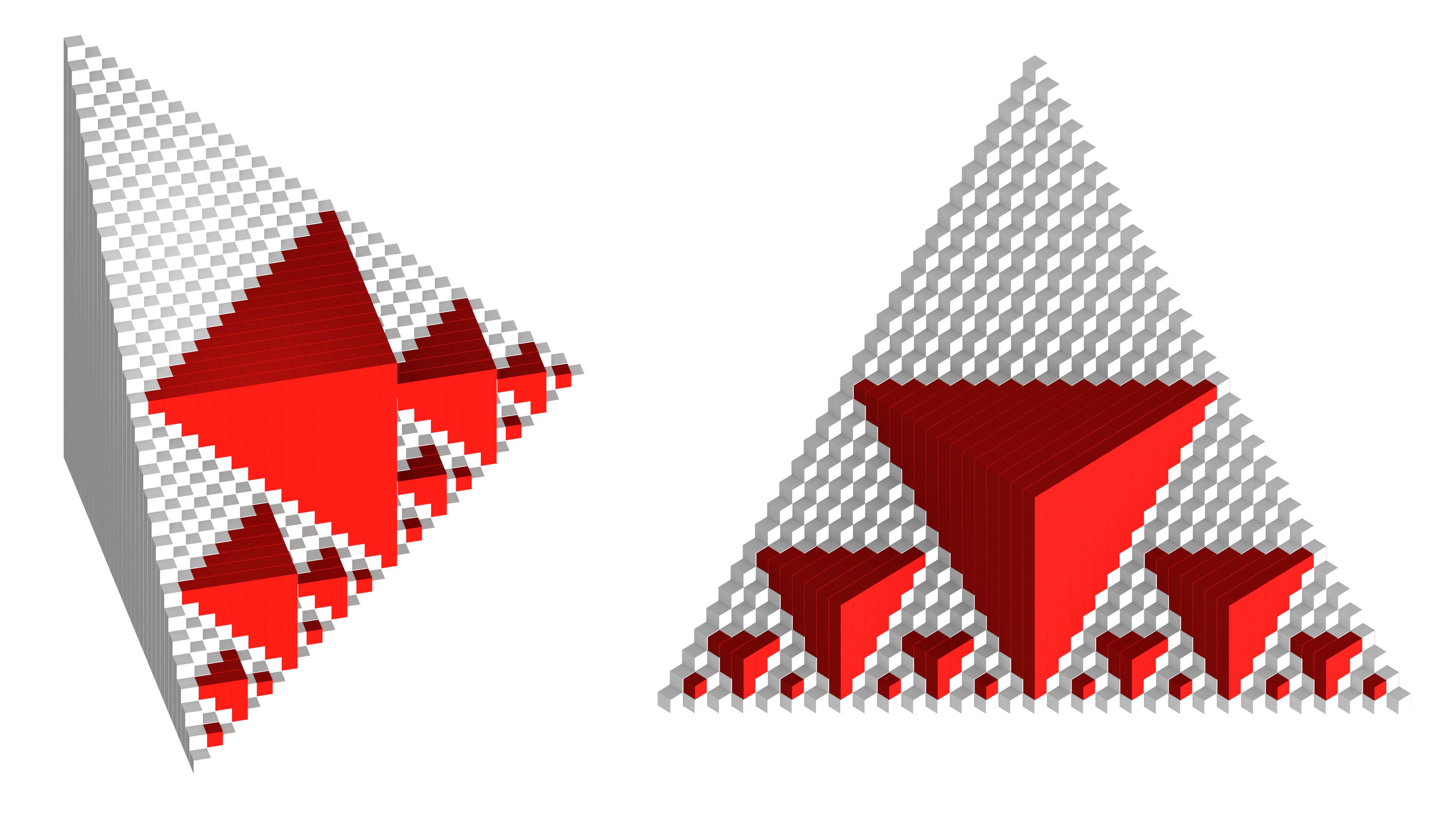}
\caption{The organization of $S_n^3$ (red) on $\Delta_{n-1}^3$ (grey).}
\label{fig_optimization-m3}
\end{figure}

Although the extra red regions can be re-used by mapping them onto their upper tetrahedron-like regions, recursively, packing the entire set into a super-orthotope of regular base $\frac{n}{2} \times \frac{n}{2}$ (which is the idea behind $\mathcal{H}$ in $2$-simplices) will introduce a fraction of extra thread-space.

\begin{lemma}
\label{lemma_4}
Given a set $S_n^3$ with a half-size sub-division scheme, packing $S_n^3$ into a super-orthotope of regular base $\frac{n}{2} \times \frac{n}{2}$ introduces a fraction of extra thread space.
\end{lemma}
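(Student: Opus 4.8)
The plan is to carry the layered construction from the $2$-simplex case over to three dimensions and then simply read off the volume of the resulting super-orthotope, showing it is a constant factor larger than $V(\Delta_{n-1}^3)$. First I would recall from Lemma~\ref{lemma_3} that $S_n^3$ consists, at recursion level $i=1,\dots,\log_2 n$, of $2^{i-1}$ cubes of side $n/2^i$, and that $V(S_n^3)=V(\Delta_{n-1}^3)=(n^3-n)/6$. Mirroring the two-dimensional argument, I would pack the whole level-$i$ sub-set into a single slab of square base $\frac{n}{2}\times\frac{n}{2}$: since $(n/2)/(n/2^i)=2^{i-1}$ cubes fit along each side of the base, one layer accommodates $4^{i-1}\ge 2^{i-1}$ cubes, so a slab of height $n/2^i$ holds the entire level. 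Stacking these slabs over all levels yields a super-orthotope $\Pi$ of base $\frac{n}{2}\times\frac{n}{2}$ and height $\sum_{i=1}^{\log_2 n} n/2^i = n-1$, exactly the height obtained in the 2D case.

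The next step is a direct volume comparison. The super-orthotope has volume $V(\Pi)=(n/2)^2(n-1)=n^2(n-1)/4$, hence
\[
\frac{V(\Pi)}{V(\Delta_{n-1}^3)} \;=\; \frac{n^2(n-1)/4}{(n^3-n)/6} \;=\; \frac{3n}{2(n+1)} \;\longrightarrow\; \frac{3}{2}\qquad(n\to\infty),
\]
so the excess $V(\Pi)-V(\Delta_{n-1}^3)=\tfrac{1}{12}\,n(n-1)(n-2)=\Theta(n^3)$ amounts, asymptotically, to extra thread space equal to half the domain. This is in sharp contrast with the $2$-simplex, where the analogous super-orthotope $\Pi_{\frac{n}{2},n-1}^2$ had volume \emph{exactly} $V(\Delta_{n-1}^2)$ by Lemmas~\ref{lemma_1}--\ref{lemma_2}, which is precisely why $\mathcal{H}$ is space-optimal there but not in 3D.

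The step I would treat most carefully is whether this overhead is an artifact of the particular layering or is forced by the square base; the lemma only claims the former, but it is worth noting that even an optimal packing into a base of this size cannot be tight. A full $\frac{n}{2}\times\frac{n}{2}$ cross-section is unavoidable for the level-$1$ cube, so any region left above or below it has total height $H-n/2$; since the two level-$2$ cubes of side $n/4$ cannot straddle the big cube, one of those regions must have height $\ge n/4$, giving the crude bound $H\ge 3n/4$ and volume $\ge\tfrac{3}{16}n^3>V(\Delta_{n-1}^3)$ for \emph{any} packing into such a base, and iterating this observation over the smaller cube sizes pushes the bound toward the $\tfrac{3}{2}$ factor above. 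Since the purpose of this subsection is only to motivate abandoning the plain 2D-style scheme when $m=3$, I would state the lemma for the canonical layered packing and take the volume computation of the previous paragraph as the proof, deferring the more space-efficient construction to the remainder of this section.
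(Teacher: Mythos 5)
Your proof is correct and follows essentially the same route as the paper: the largest cube forces the $\frac{n}{2}\times\frac{n}{2}$ base, each level $i$ is stacked as a slab of height $n/2^i$, and the resulting super-orthotope strictly exceeds $V(S_n^3)=V(\Delta_{n-1}^3)$ --- the paper exhibits the slack level by level (via $n^3/2^{2k+1} < n^3/2^{k+2}$ for $k>1$) while you aggregate it into the asymptotic factor $3/2$. Your closing observation that \emph{any} packing into such a base needs height at least $3n/4$, hence volume at least $\tfrac{3}{16}n^3 > V(\Delta_{n-1}^3)$, is a correct strengthening and in fact matches the $\frac{3(n-1)}{4}$-high packing (with $1/8$ extra space) that the paper adopts immediately after the lemma.
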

\begin{proof}
    The largest sub-orthotope of the set $S_n^3$ has dimensions $\frac{n}{2} \times \frac{n}{2} \times \frac{n}{2}$, therefore two of the dimensions of the
    super-orthotope $\Pi_{a,b,c}^3$ will necessarily be $a=\frac{n}{2}, b=\frac{n}{2}$ acting as the regular base. The third dimension will stack the rest of the sub-sets of regular orthotopes. At recursion level $k$, $S_n^3$ provides $2^{k-1}$ regular orthotopes of size $(\frac{n}{2^{k}})^3$, increasing the third dimension of $\Pi_{a,b,c}^3$ by $\frac{n}{2^k}$. This extra space induced on $\Pi_{a,b,c}^3$ by the height increase is a stack of size $\frac{n}{2} \times \frac{n}{2} \times \frac{n}{2^k}$. When $k>1$ the elements provided by $S_n^3$ are not sufficient to fill the $k$-th stack,\textit{i.e.}, 
    \begin{equation}
        \sum_{i=1}^{2^{k-1}} \Big(\frac{n}{2^k}\Big)^3 = \frac{n^3}{2^{2k+1}} < \frac{n^3}{2^{k+2}}.
    \end{equation}
     Therefore, the super-orthotope $\Pi_{a,b,c}^3$ containing $S_n^3$ will necessarily have empty spaces at each stack level. 
\end{proof}
By Lemma \ref{lemma_4}, $\mathcal{H}$ will not be efficient by Definition \ref{def_1} as it introduces an extra space of $O(n^3)$. Nonetheless, such map may still be useful in practice as long as the extra space is a small fraction of $\Delta_{n}^3$. A convenient way to pack the sub-orthotopes in thread space is to displace the first major cube below and begin the stacking process horizontally from level $k \ge 2$. Such
approach produces a super-orthotope $\Pi_{a,b,c}^3$ of dimensions $\Big(\frac{n}{2}\Big) \times \Big(\frac{n}{2}\Big) \times \frac{3(n-1)}{4}$ for
$a,b,c$, respectively (see Figure \ref{fig_map-fast-m3}, left).  
\begin{figure}[ht!]
\centering
\includegraphics[scale=0.065]{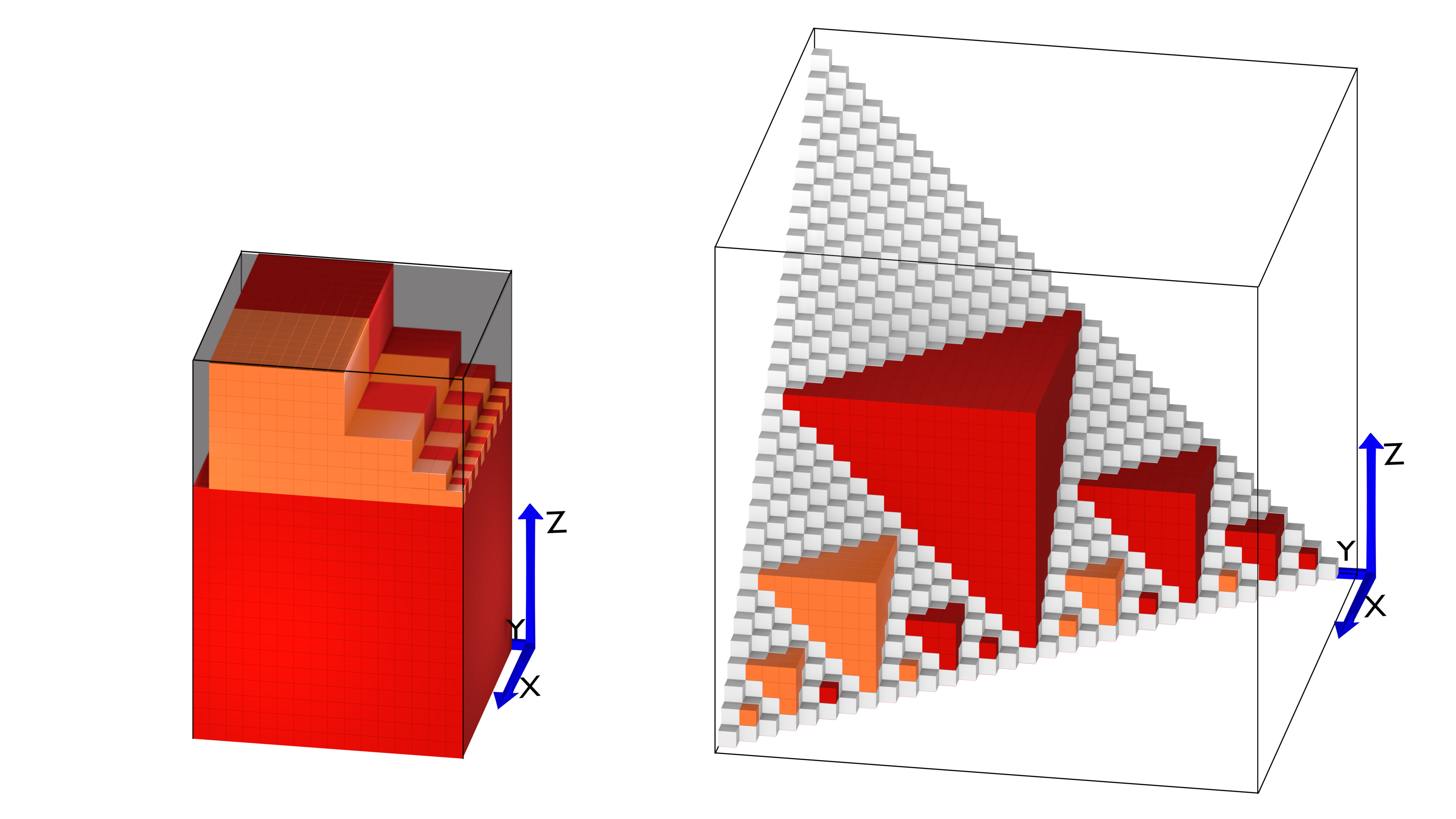}
    \caption{A practical packing scheme to map $S_n^3$ onto $\Delta_{n-1}^3$.}
\label{fig_map-fast-m3}
\end{figure}

While the packing approach is not perfectly efficient, in the infinite limit of $n$, the extra fraction of space of
$\Pi_n^3$ over $\Delta_{n-1}^3$, denoted as $\alpha(\Pi_n^3,\Delta_{n}^3)$, is 
\begin{equation}
    \alpha(\Pi_n^3,\Delta_{n-1}^3) = \lim_{n\to\infty}\frac{V(\Pi_n^{3})}{V(\Delta_{n-1}^{3})} - 1 =
    \lim_{n\to\infty}\frac{\frac{3n^2(n-1)}{16}}{\frac{(n-1)n(n+1)}{6}} - 1=
    \frac{1}{8}
\end{equation}
which is $12.5\%$ of $V(\Delta_n^3)$. Such amount of thread space constitutes a small fraction compared to a bounding-box strategy that surrounds the tetrahedron and generates practically $600\%$ of the space of $\Delta_n^3$.
Therefore, while $\Pi^3$ is not an efficient super-orthotope, there is still potential
performance improvement that can be exploited by GPUs assuming a time efficient $\mathcal{O}(1)$ cost map $\mathcal{H}$ can exist.

\begin{theorem}
    For any discrete $3$-simplex $\Delta_{n-1}^3$, there exists a time efficient map $\mathcal{H}: \mathbb{Z}_+^2 \mapsto \mathbb{Z}_+^2$ from
    $\Pi_{n/2,n/2,3(n-1)/4}^2$ to $\Delta_{n-1}^3$ that requires only
    $\mathcal{O}(1)$ arithmetic or transcendental operations.
\end{theorem}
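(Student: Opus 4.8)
The plan is to follow the constructive template of Theorem~\ref{theorem_1}, now in three dimensions. First I would invoke Lemma~\ref{lemma_3} to fix the self-similar set $S_n^3$, which at recursion level $k$ contributes $2^{k-1}$ cubes of side $n/2^k$ (for $k=1,\dots,\log_2 n$) and whose total volume is exactly $V(\Delta_{n-1}^3)$. Then, using the practical packing already described --- displace the level-$1$ cube below and stack the levels $k\ge 2$ horizontally --- I would show that all cubes of $S_n^3$ embed into a single super-orthotope $\Pi^3$ of dimensions $\tfrac{n}{2}\times\tfrac{n}{2}\times\tfrac{3(n-1)}{4}$, reusing the slab counting of Lemma~\ref{lemma_4} to check that this packing is consistent. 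Since $V(\Pi^3)=\tfrac{9}{8}V(\Delta_{n-1}^3)$, the resulting $\mathcal{H}$ will be time-efficient but not space-efficient in the sense of Definition~\ref{def_1}, which is exactly what the statement claims.

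Next I would construct $\mathcal{H}:\mathbb{Z}_+^3\mapsto\mathbb{Z}_+^3$ explicitly. Redefining the origin of $\mathbb{Z}^3$ at a convenient corner of $\Pi^3$, let $\omega=(\omega_x,\omega_y,\omega_z)$ be a block coordinate. The structural fact I need --- the $3$D analogue of Lemma~\ref{lemma_2} --- is that all cubes of a common recursion level occupy one contiguous slab along the stacking axis; hence the side $b$ of the cube owning $\omega$ is recovered from $\omega_z$ as $b=2^{\lfloor\log_2(\cdot)\rfloor}$ applied to an affine function of $\omega_z$, with a separate branch for the displaced level-$1$ cube. Given $b$, the index $q$ of the cube inside its slab comes from $(\omega_x,\omega_y)$ by floor divisions by $b$, one per base axis, combined according to how the $2^{k-1}$ cubes of a level are laid out in the $\tfrac{n}{2}\times\tfrac{n}{2}$ base. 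Finally $\mathcal{H}(\omega)$ is defined by adding to each coordinate a correction that is affine in $\omega$, $b$ and $q$ --- the three-dimensional counterpart of $\mathcal{H}(\omega)=(\omega_x+qb,\ \omega_y+2qb+1)$ from Eq.~(\ref{eq_map2d}) --- chosen so that every cube lands on the tetrahedral sub-region it represents, including the recursive ``folding'' of the portions of the red cubes that protrude from $\Delta_{n-1}^3$ onto the still-unmapped upper tetrahedron-like regions.

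Verification then splits in two. For correctness I would check that the per-cube translations tile $\Delta_{n-1}^3$ (modulo the $12.5\%$ surplus), which follows from Lemmas~\ref{lemma_3}--\ref{lemma_4} once each cube's image is pinned down. For cost I would note that $b$, $q$ and the corrections use only $O(1)$ integer arithmetic and bit operations: $\lfloor\log_2(\cdot)\rfloor$ is evaluated as $bits-\mathrm{clz}(\cdot)$ and $2^{\lfloor\log_2(\cdot)\rfloor}$ as a single shift, exactly as in the $2$D case, so $\mathcal{H}$ costs $\mathcal{O}(1)$ arithmetic or transcendental operations. The main obstacle I expect is the three-dimensional bookkeeping: unlike the $2$D case, where each level is a single row and the shift is one line, here the $2^{k-1}$ cubes of a level must be arranged inside a two-dimensional base, the level-$1$ cube needs its own displaced branch, and the translations must simultaneously account for the overflow of the cubes and its recursive re-mapping. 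Producing a single closed-form $\mathcal{H}$ that is correct on every branch while staying $O(1)$ --- in particular choosing each level's base layout so that $q$ stays a plain floor division --- is the delicate step; the volume and time-efficiency accounting is then routine given Lemmas~\ref{lemma_3} and~\ref{lemma_4}.
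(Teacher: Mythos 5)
Your plan follows the paper's proof essentially step for step: the same self-similar set from Lemma~\ref{lemma_3}, the same displaced-first-cube packing into $\Pi^3_{n/2,\,n/2,\,3(n-1)/4}$ justified through Lemma~\ref{lemma_4}, the same reuse of the two-dimensional parameters $b,q$ (computed with $clz$/shift operations) to place each cube by an affine translation, the same hinge-style folding of the protruding parts of the cubes onto the unmapped upper tetrahedron-like regions, and the same conclusion that the map is only time-efficient because of the roughly $12.5\%$ of extra parallel volume. The only substantive difference is that the paper pins the map down explicitly --- $h(\omega)=\omega+(0,\tfrac{n}{2}+1,0)$ for the displaced cube, $(\omega_x+qb,\ \omega_y+2qb+1,\ \omega_z-\tfrac{n}{2})$ otherwise, with hinge correction $(b-1-2L_x,\ b-1-2L_y,\ 2b-1-L_z)$ in the local coordinates $L=(\omega_x \bmod b,\ \omega_y \bmod b,\ \omega_z \bmod b)$ --- whereas you leave that closed form as the acknowledged delicate step, which is acceptable since you identify all the ingredients it requires.
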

\begin{proof}
Analogous to Theorem (\ref{theorem_1}), for convenience in the GPU programming model, the origins of both $\Pi_{a,b,c}^3$ and $\Delta_n^3$ are placed at the lower-right corner as shown in Figure \ref{fig_map-fast-m3}, with the axes aligned to the orthogonal facets. It is worth noticing that the $z$ axis point upwards. By Lemmas \ref{lemma_3} and \ref{lemma_4}, there exists a super-orthotope that contains the set $S_n^3$, with each sub-set of equally
sized regular orthotopes organized by stack levels, except for the first sub-orthotope that is treated as a special case for being displaced from the stacking pattern. 
    
Let $h(\omega)$ be an auxiliary block-space mapping of the single sub-orthotope of size $(n/2)^3$ onto the center of the tetrahedron, defined as
\begin{equation}
    h(\omega) = \omega + (0,\frac{n}{2} + 1,0).
\end{equation}
For the $b,q$ parameters, they keep the definitions from Theorem \ref{theorem_1} as the mapping layout allows projection onto the plane with an additional consideration
for the extra spaces that fall out of $\Delta_n^3$ (the red spaces from Figure \ref{fig_optimization-m3}), for which the position is transposed in $x,y$ and complemented in $z$, acting as a hinge. With these settings, the combination of the parameters $\omega, q, b$ and the auxiliary function $h(\omega)$ allow the formulation of
\begin{equation}
\label{eq_3simplex_map}
\small
\mathcal{H}(\omega) = 
    \begin{cases} 
        h(\omega),\ \omega \in S_{\frac{n}{2}}^3 \\
        (\omega_x + qb, \omega_y + 2qb + 1, \omega_z - n/2),\ \text{otherwise}
        %(b-1 - \omega_x, b-1 - \omega_y + 1,2b-\omega_z - 1), \not\in \Delta_{n-1}^3
        %b+2qb - \omega_x, 2b+2qb - \omega_y + 1,2b-\omega_z + \frac{n}{2}),\not\in \Delta_{n-1}^3 
        %b(1+2q) - \omega_x, 2b(1+q) - \omega_y + 1,2b-\omega_z + \frac{n}{2}), \not\in \Delta_{n-1}^3

        %h(\omega),\ \omega \in S_{\frac{n}{2}}^3 \land \omega \in \Delta_{n-1}^3,\\
        %(\omega_x + qb, \omega_y + 2qb, \omega_z - n/2),\ \omega \in \Delta_{n-1}^3,\\
        %(b(1+2q) - \omega_x, 2b(1+q) - \omega_y,2b-\omega_z + \frac{n}{2}), \not\in \Delta_{n-1}^3
    \end{cases}
\end{equation}
When the mapped blocks fall outside $\Delta_{n-1}^3$, then its coordinates are remapped following a hinge like pattern. Defining $L = (\omega_x \mod b, \omega_y \mod b, \omega_z \mod b)$ as the local coordinates of $\omega$ in their corresponding $S^3$ region, the hinge movement becomes $H(\omega) + (b-1 - 2L_x, b-1 - 2L_y, 2b-1 - L_z)$. This map is only time efficient, \textit{i.e.}, $T(h(\omega)) + T(\mathcal{H}(\omega)) = \mathcal{O}(1)$.
\end{proof}
It is possible to make $\mathcal{H}$ fully efficient, \textit{i.e.}, to satisfy condition $V(\Pi^3) = V(\Delta_n^3) \pm o(n^m)$ of Definition \ref{def_1}, by including concurrent GPU kernel executions in the analysis. Such approach would allow to create multiple parallel spaces, one for each stack level of equally sized sub-orthotopes, without needing to pack them into a single super-orthotope. The number of simultaneous parallel spaces would be limited by the maximum number of concurrent kernels supported by the GPU, which as of 2022 is up to 128 concurrent kernels. In practice, such limit is sufficiently large for many applications as the number of stack levels is in the order of $O(\log_2(n))$. Furthermore, the map works in block-space which acts as an extra factor for reaching larger problem sizes. The next Section presents experimental performance results for $\mathcal{H}$ using $2$-simplex and $3$-simplex tests, running on different GPUs. The results are compared to state of the art approaches.

\section{Experimental Evaluation}
\label{sec_experimental-results}
The proposed map $\mathcal{H}$, as well as state of the art approaches, are measured and compared against the bounding box approach. This gives a total of four approaches (Figure \ref{fig:all-approaches}) to be compared:
\begin{itemize}
    \item \textbf{RB}: Rectangular box \cite{Jung2008}
    \item $\bm{\lambda}$: Lambda map \cite{navhitmat2014,8392762}
    \item \textbf{DP}: CUDA's Dynamic Parallelism
    \item $\bm{\mathcal{H}}$: New proposed map
\end{itemize}
All approaches are available to the community at \url{https://github.com/temporal-hpc/simplex-gpu-mappings}.

\begin{figure*}[ht!]
\centering
\includegraphics[scale=0.13]{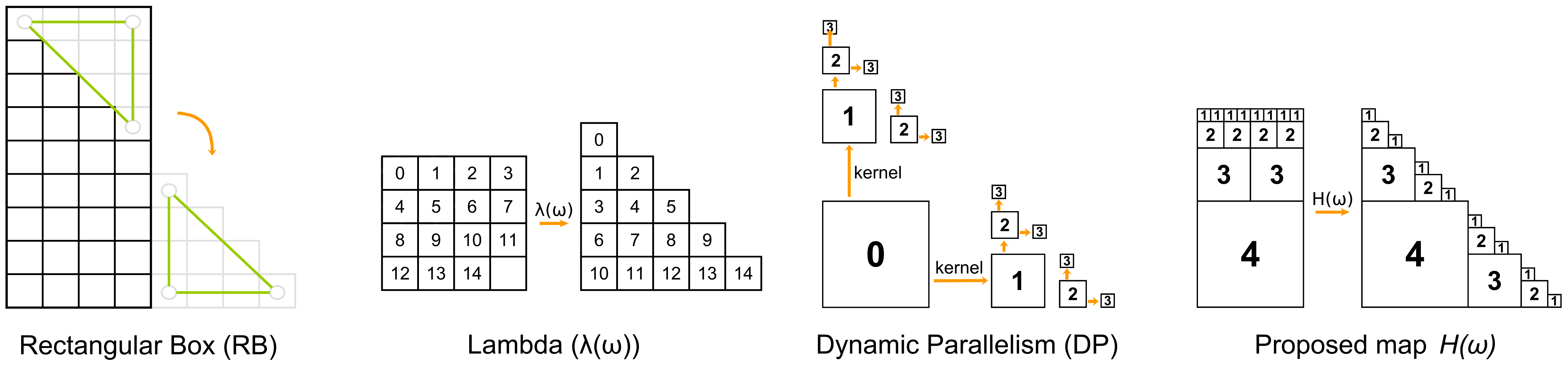}
\caption{All four approaches illustrated for $2$-simplices.}
\label{fig:all-approaches}
\end{figure*}

\subsection{Experimental Design}
Four different computational patterns are tested on $2$-simplex and $3$-simplex domains. These are listed in Table \ref{tab:tests}. 
\begin{table}[ht!]
\centering
\caption{Set of kernel tests and simplex domains}
\label{tab:tests}
\resizebox{\columnwidth}{!}{
\begin{tabular}{|l|c|c|}
\hline
Test                                        & 2-simplex & 3-simplex \\ \hline
1) [\textbf{MAP}] Only map                  & *         & *         \\ \hline 
2) [\textbf{ACCUM}] $+1$ Accumulation       & *         & *         \\ \hline
3) [\textbf{EDM}] Euclidean Distance Matrix & *         &           \\ \hline
4) [\textbf{CA}] Cellular Automaton         & *         & *         \\ \hline
\end{tabular}
}
\end{table}

These tests serve to know what is the impact of application work in the overall speedup through an improved mapping. The first test, MAP, is a GPU kernel with no application work, \textit{i.e.}, just the thread mapping stage. It uses C++ volatile variables in order to prevent the compiler to optimize the mapping stage as it is not used afterwards. This kernel should reflect the theoretical speedup as it is mostly mapping work. The ACCUM test is a kernel where each mapped thread adds $+1$ to its corresponding data-element in the simplex domain. It is a memory-bound kernel as it does 1 read, 1 write and only 1 addition on each data element. The EDM test computes the Euclidean distance matrix, that is the distance $d(p_i, p_j)$ at each data-element of the simplex with location $i,j$. This test is a medium intensity application with more arithmetic work than the accumulation test, as it computes a square root for each data-element. Lastly, the cellular automaton computes John Conway's game of life \cite{ConwaysLife} on a simplex structure for fixed number of steps, with periodic boundary conditions for the $2$-simplex, and free boundary conditions (fixed dead cells on the boundary) for the $3$-simplex case. This test is a more intense memory-bound application which serves as a case where improvements to the mapping stage could have a lesser impact on performance.  

The average running time as well as the power consumption are measured during kernel execution. Sufficient repeats are employed for each test in order to reach a statistical standard error of less than $1\%$. Then, plots of speedup (over the bounding-box approach) and energy efficiency are computed from these values.

The hardware used for all tests is detailed in Table \ref{tab:hardware}. The TITAN GPUs share the same system while the A100 GPU is from a DGX A100 node of the Patag\'on supercomputer of Austral University of Chile \cite{patagon-uach}. 
\begin{table*}[ht!]
\label{tab:hardware}
\centering
\resizebox{0.85\textwidth}{!}{
\begin{tabular}{|ll|l|l|l|}
\hline
\multicolumn{2}{|l|}{\textbf{\cellcolor{temporal}Attribute}} & \cellcolor{temporal}\textbf{Nvidia A100} & \cellcolor{temporal}\textbf{Nvidia TITAN RTX} & \cellcolor{temporal}\textbf{Nvidia TITAN V} \\\hline
\multicolumn{2}{|l|}{\textbf{Segment}} & HPC Server (DGX A100) & Workstation & Workstation \\ \hline
\multicolumn{2}{|l|}{\textbf{Architecture}} & Ampere & Turing & Volta \\ \hline
\multicolumn{2}{|l|}{\textbf{GPU Chip}} & GA100 & TU102 & GV100       \\ \hline
\multicolumn{2}{|l|}{\textbf{FP32 Units}} & $6912$ & $4608$ & $5120$\\ \hline
\multicolumn{2}{|l|}{\textbf{SMs}} & $108$ & $72$ & $80$              \\ \hline
\multicolumn{2}{|l|}{\textbf{FP32 Units/SM}} & $64$ & $64$ & $64$     \\ \hline
\multicolumn{2}{|l|}{\textbf{Memory}} & $40\ \text{GB}$ & $24\ \text{GB}$ & $12\ \text{GB}$      \\ \hline
\multicolumn{2}{|l|}{\textbf{Memory Bandwidth}} & $1.5\ \text{TB}/s$ & $672\ \text{GB}/s$ & $651.3\ \text{GB}/s$ \\ \hline
\multicolumn{2}{|l|}{\textbf{Max Power (W)}} & $400$W & $280$W & $250$W \\ \hline
\multicolumn{1}{|l|}{\multirow{3}{*}[-0.5em]{\textbf{System}}} & \textbf{OS}  & DGX OS 5.2 (Ubuntu Based)  & Arch Linux  & Arch Linux \\ \cline{2-5} 
\multicolumn{1}{|l|}{} & \textbf{CUDA}  & 11.4.2 & 11.6 & 11.6 \\\cline{2-5} 
\multicolumn{1}{|l|}{} & \textbf{CPU} & 2 x 64-core AMD EPYC 7742 & Intel 10-core i7-6950X & Intel 10-core i7-6950X \\\cline{2-5} 
\multicolumn{1}{|l|}{} & \textbf{RAM} & 1 TB RAM DDR4-3200Hz & 128GB DDR4 2400Mhz & 128GB DDR4 2400Mhz\\ \hline
\end{tabular}
}
\caption{Hardware specifications of the GPUs and computer systems used for the experiments.}
\end{table*}

\subsection{Speedup and Energy Efficiency for $2$-simplices}
Figure \ref{fig:2s-speedup} presents the speedup of each mapping technique over the bounding box approach, under different tests (columns) and GPUs (rows). In the MAP test (first column), $\mathcal{H}$ and RB reach the theoretical speedup of $2\times$ over the Bounding box approach, for all GPUs. With the TITAN RTX, the $\lambda$ map performance gets reduced to $0.5\times$ of speedup once $n \ge 32154$. The reason is because that GPU has very few FP64 units, and at that problem size (with block size $16\times 16$ which is the fastest) the map needs to switch from FP32 to FP64 precision in the computation of a square root, otherwise it generates incorrect coordinates. The DP approach also fails to reach the maximum speedup for all GPUs. Instead it reaches up to $1.5\times$ of speedup in the best case using the A100 GPU. 
In the ACCUM test (second column), the maximum speedup is $\sim 1.2\times$, with $\mathcal{H}$ within the fastest maps. The RB map is also within the fastest, and DP becomes competitive as well once the problem size is large enough. In the case of $\lambda$, it stays with the fastest maps as long as the GPU has sufficient FP64 units, otherwise it underperforms with $\sim 0.4$ of speedup as in the TITAN RTX.   
For the EDM2D test (third column), speedup is unstable at small problem sizes, but stabilizes at medium to high values. Here, both $\mathcal{H}$ and RB stay competitive through all the range of $n$, followed by $\lambda$ (except when running in the TITAN RTX GPU) and DP.
For the CA2D test (fourth column), the maximum speedup is $1.25\times$ and is achieved by all approaches using the A100 GPU. On the TITAN RTX GPU the fastest approaches are $\mathcal{H}$ and RB, followed by DP and lastly $\lambda$. It is worth noticing that in this case the performance of $\lambda$ did not drop because the fastest version used thread blocks of size $32\times 32$, allowing the FP32 square root to be precise for the whole range. In the TITAN V GPU all approaches reach a speedup of $\sim1.15\times$ for large $n$.
\begin{figure*}[ht!]
\centering
\includegraphics[scale=0.28]{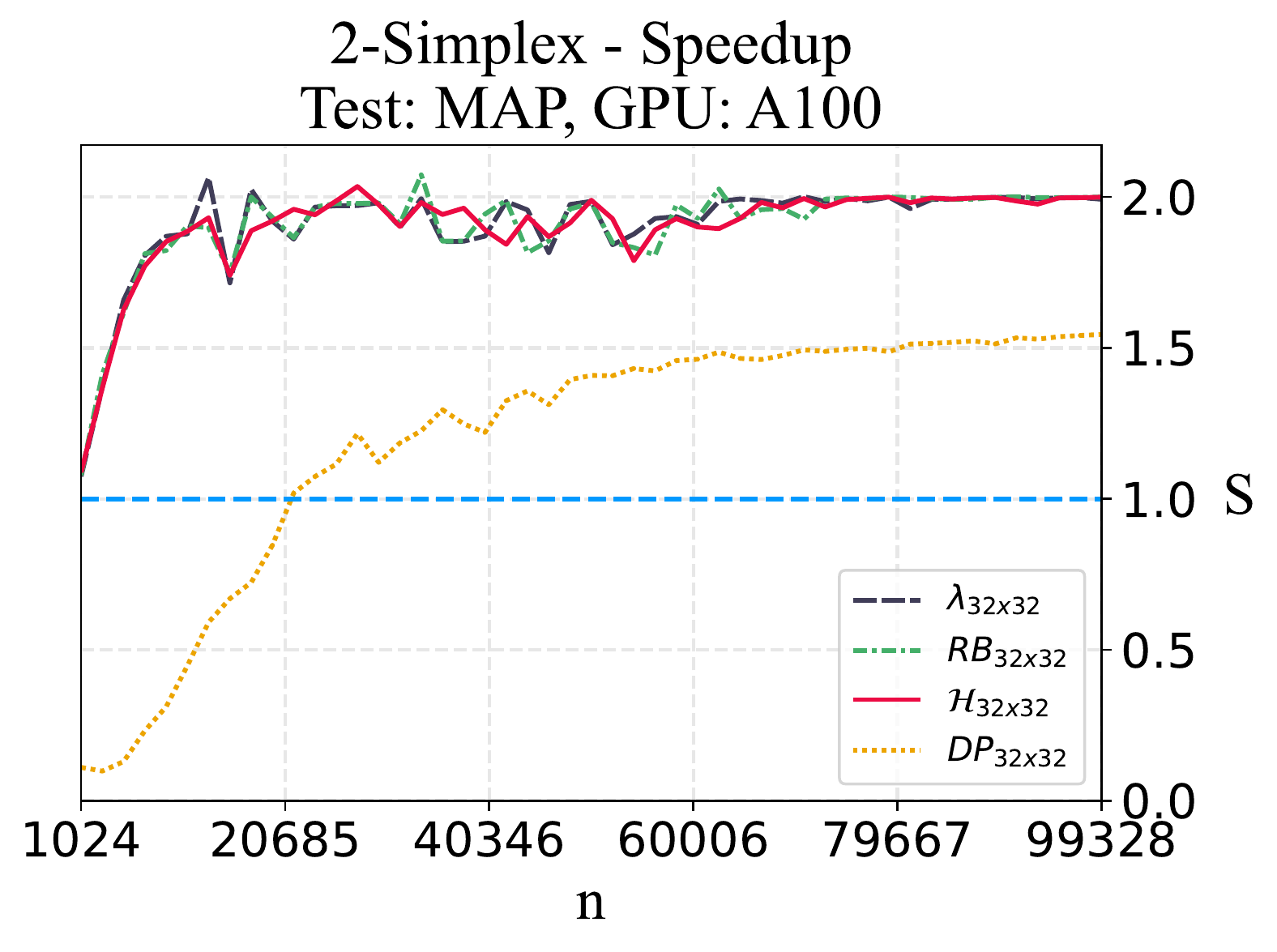}
\includegraphics[scale=0.28]{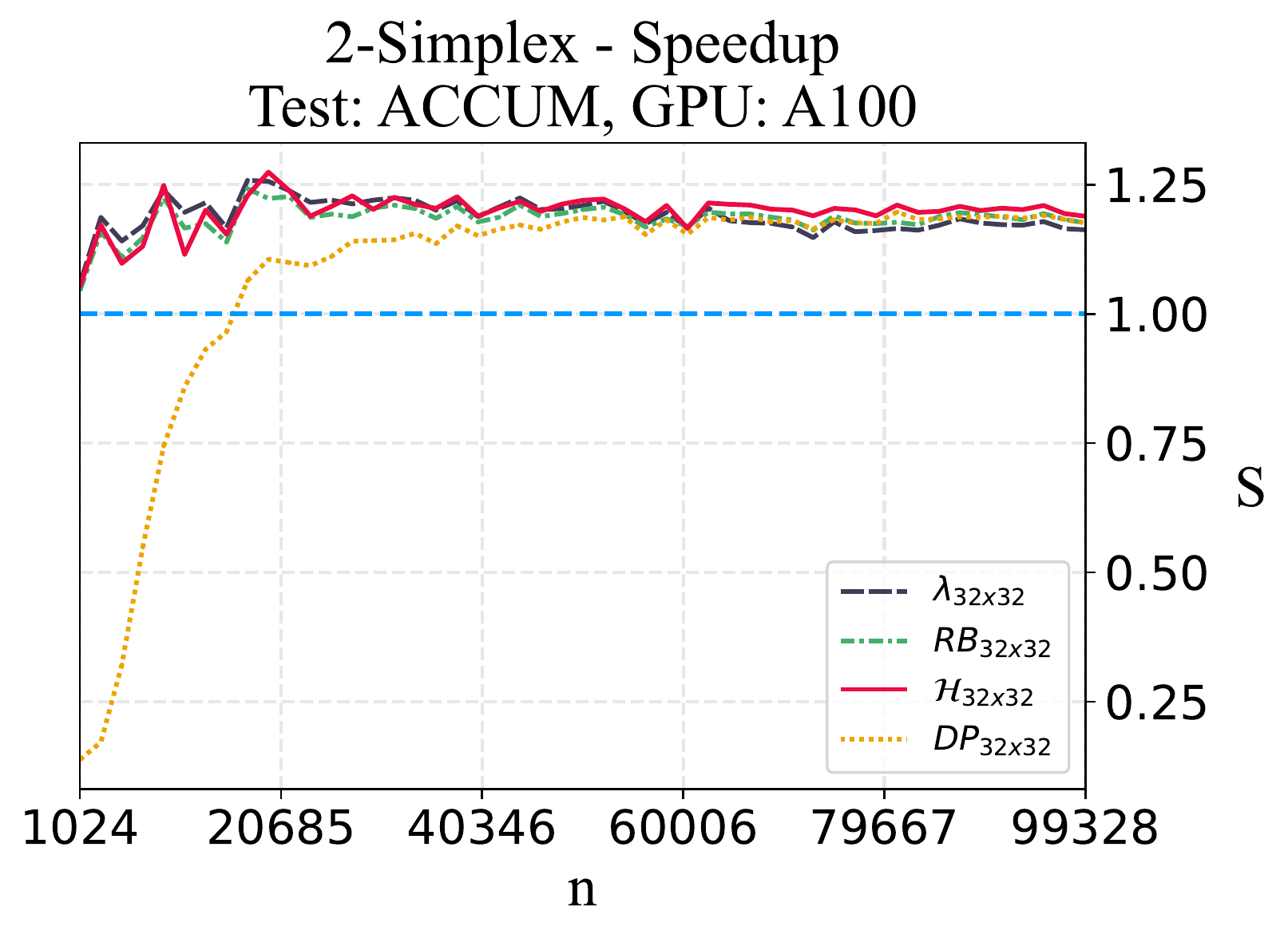}
\includegraphics[scale=0.28]{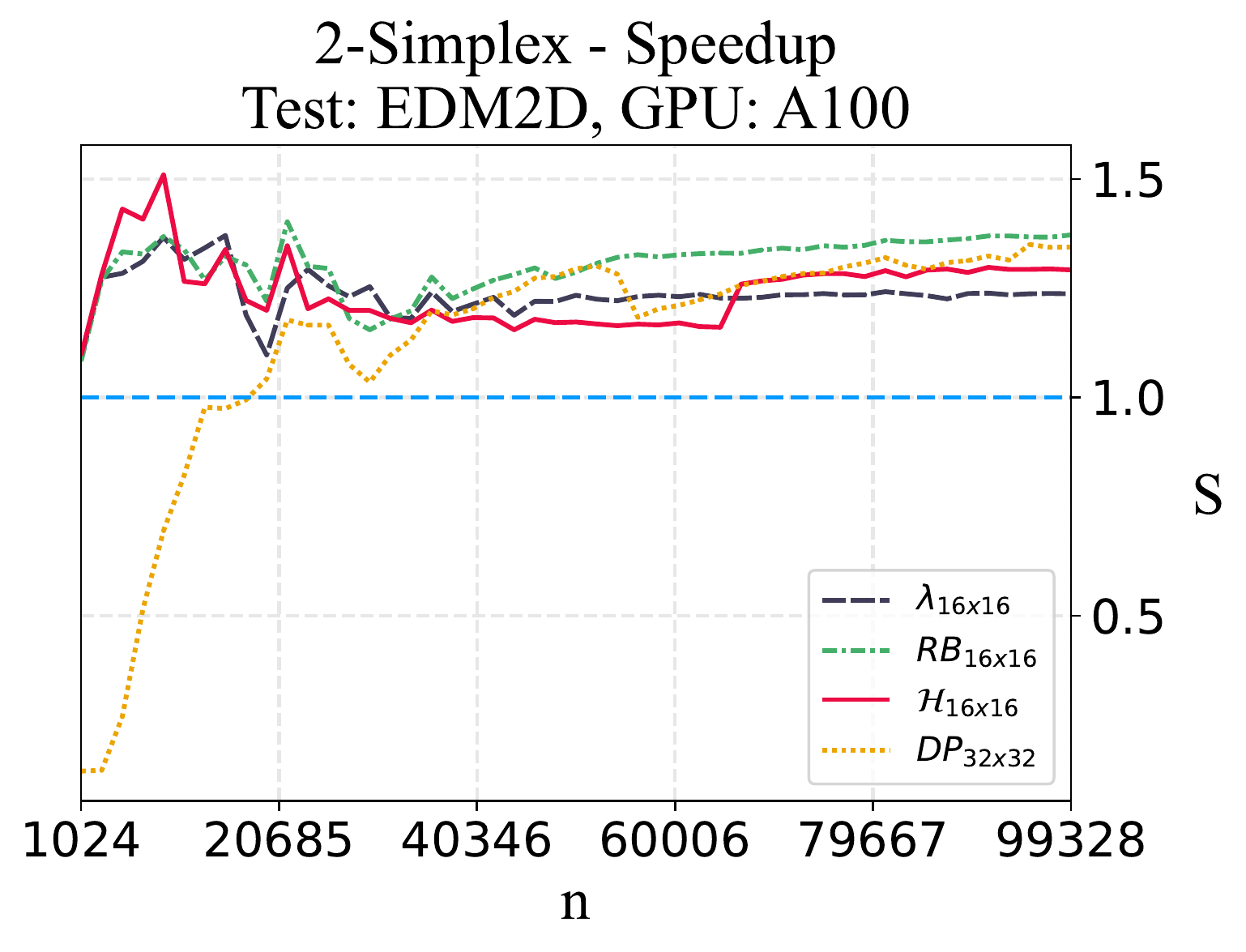}
\includegraphics[scale=0.28]{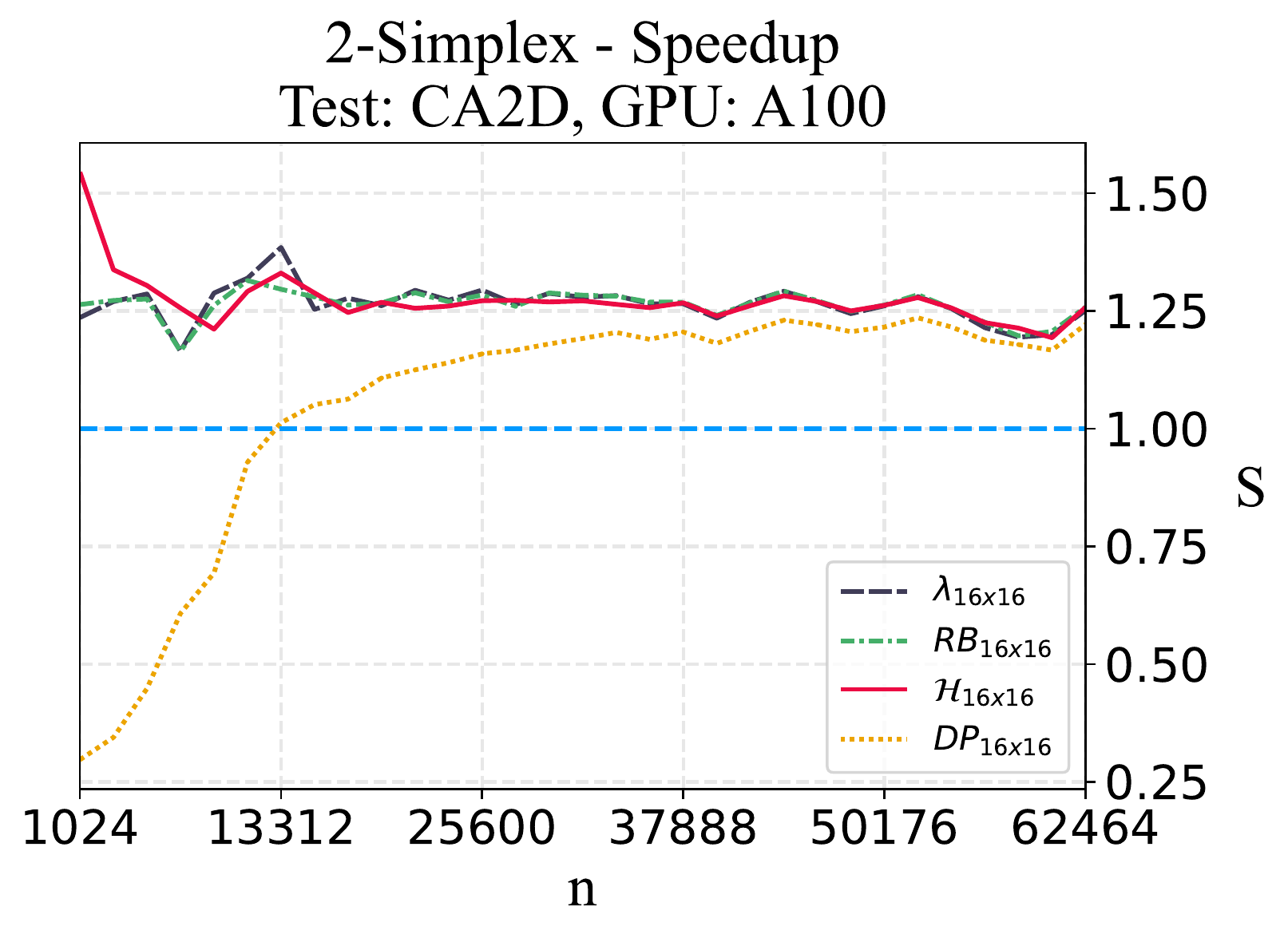}\\
\includegraphics[scale=0.28]{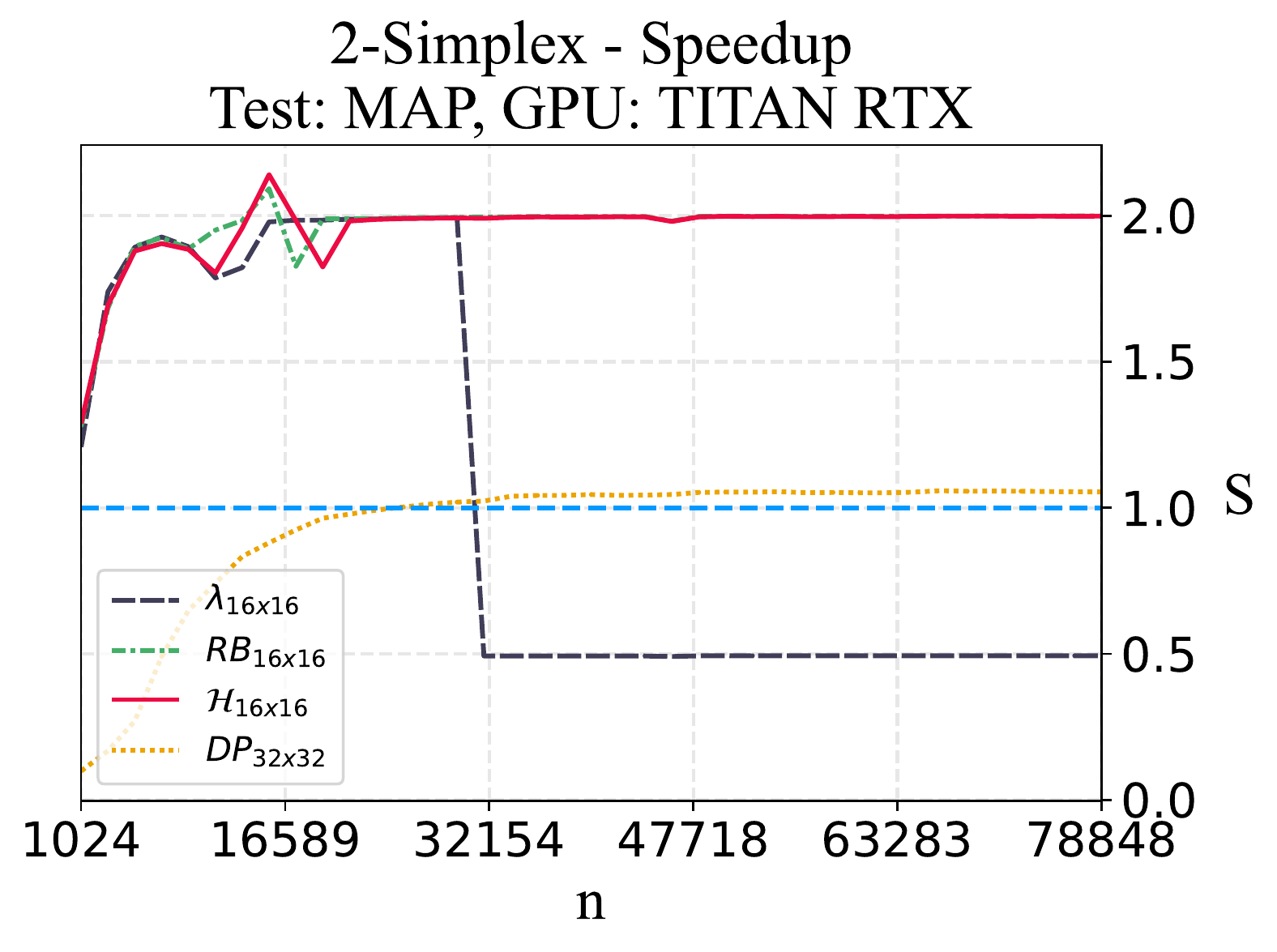}
\includegraphics[scale=0.28]{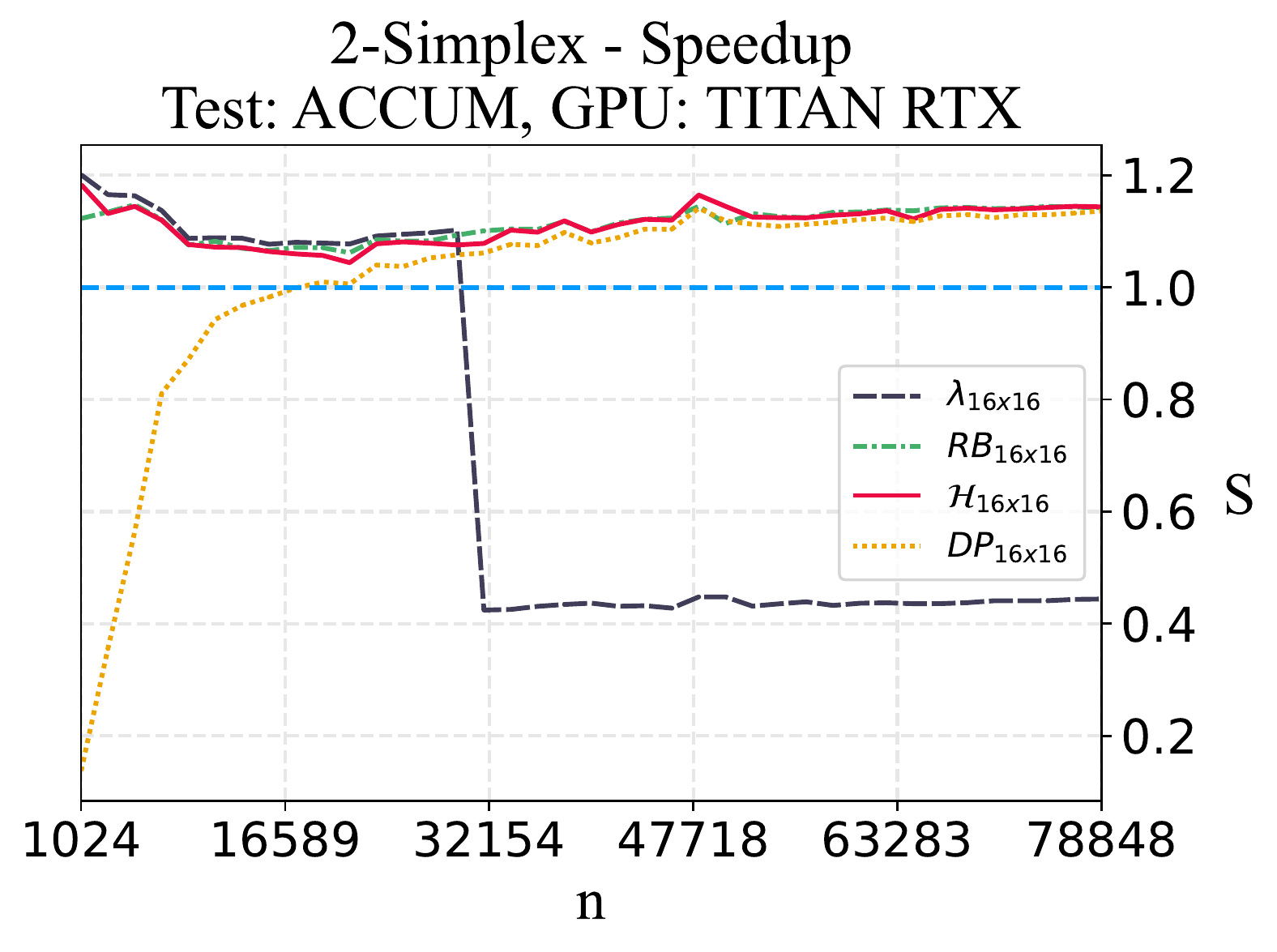}
\includegraphics[scale=0.28]{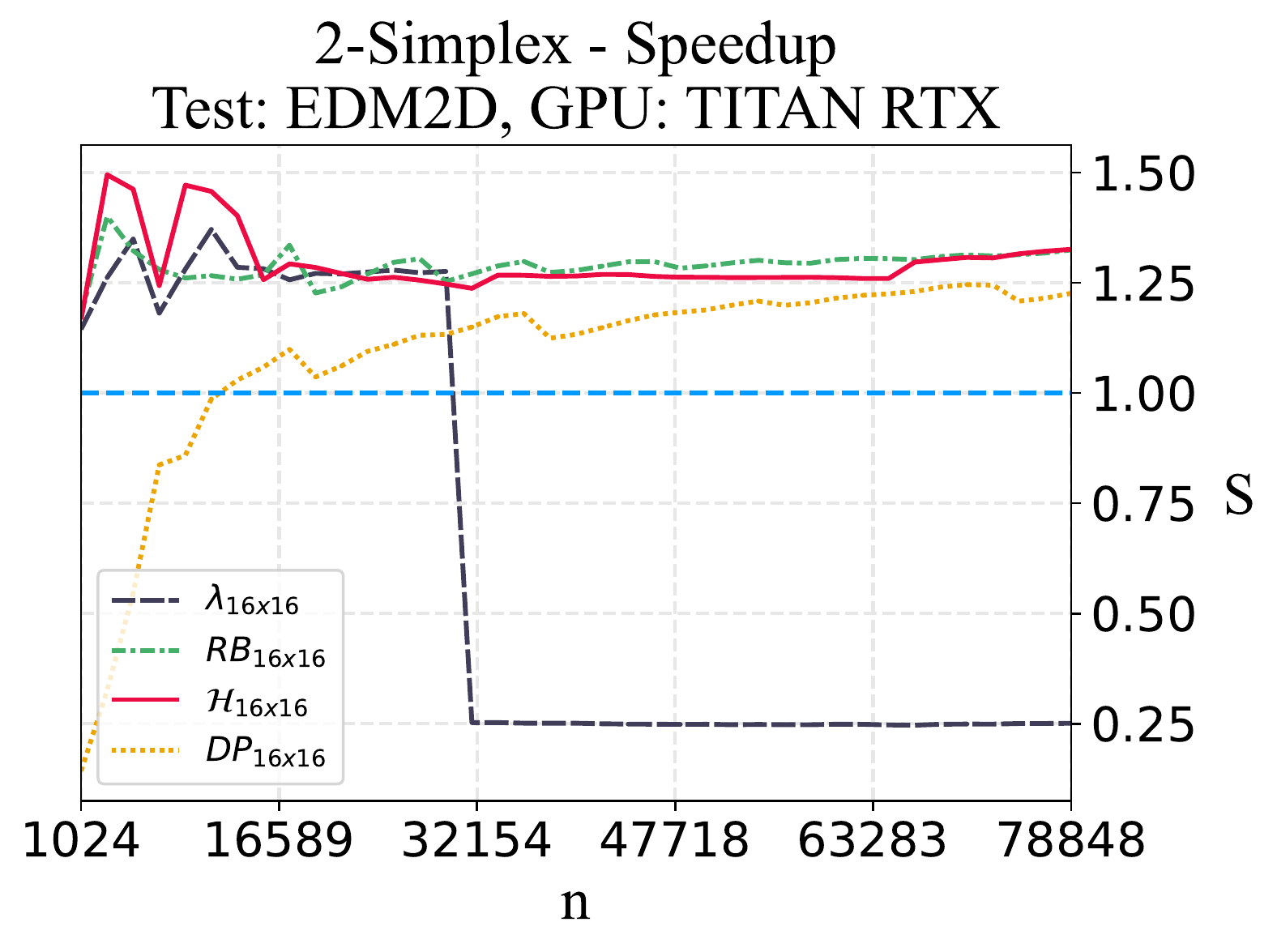}
\includegraphics[scale=0.28]{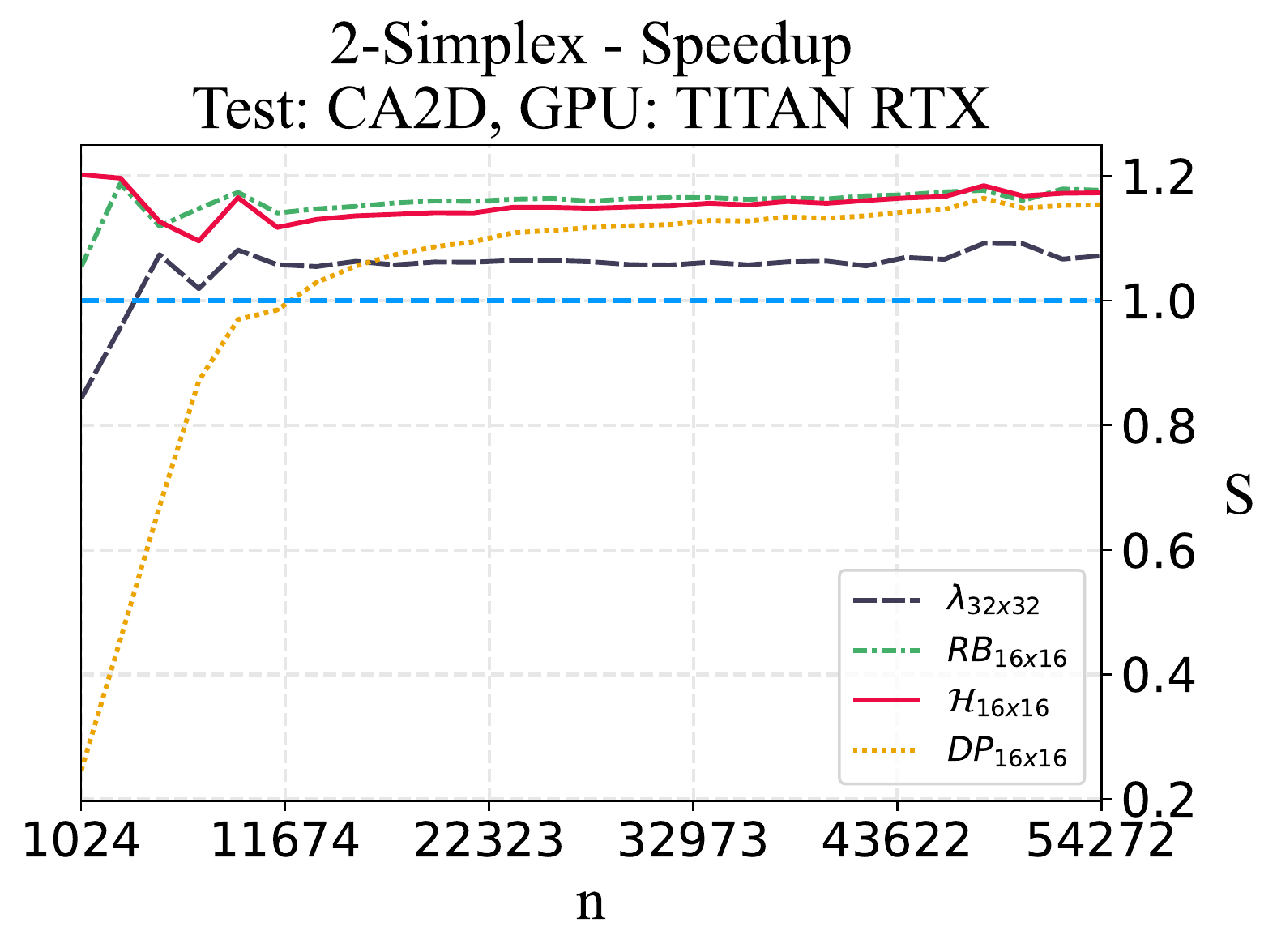}\\
\includegraphics[scale=0.278]{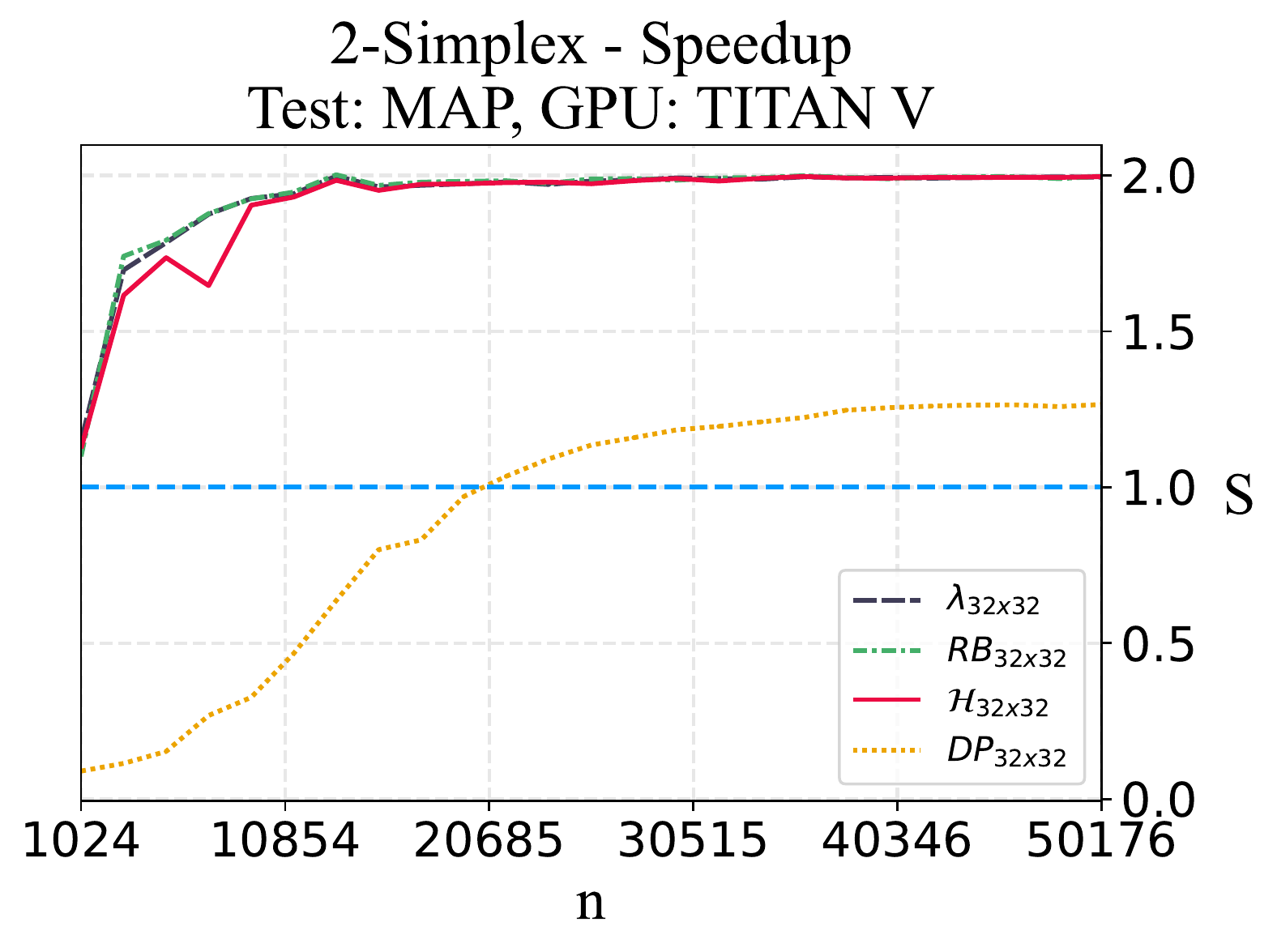}
\includegraphics[scale=0.278]{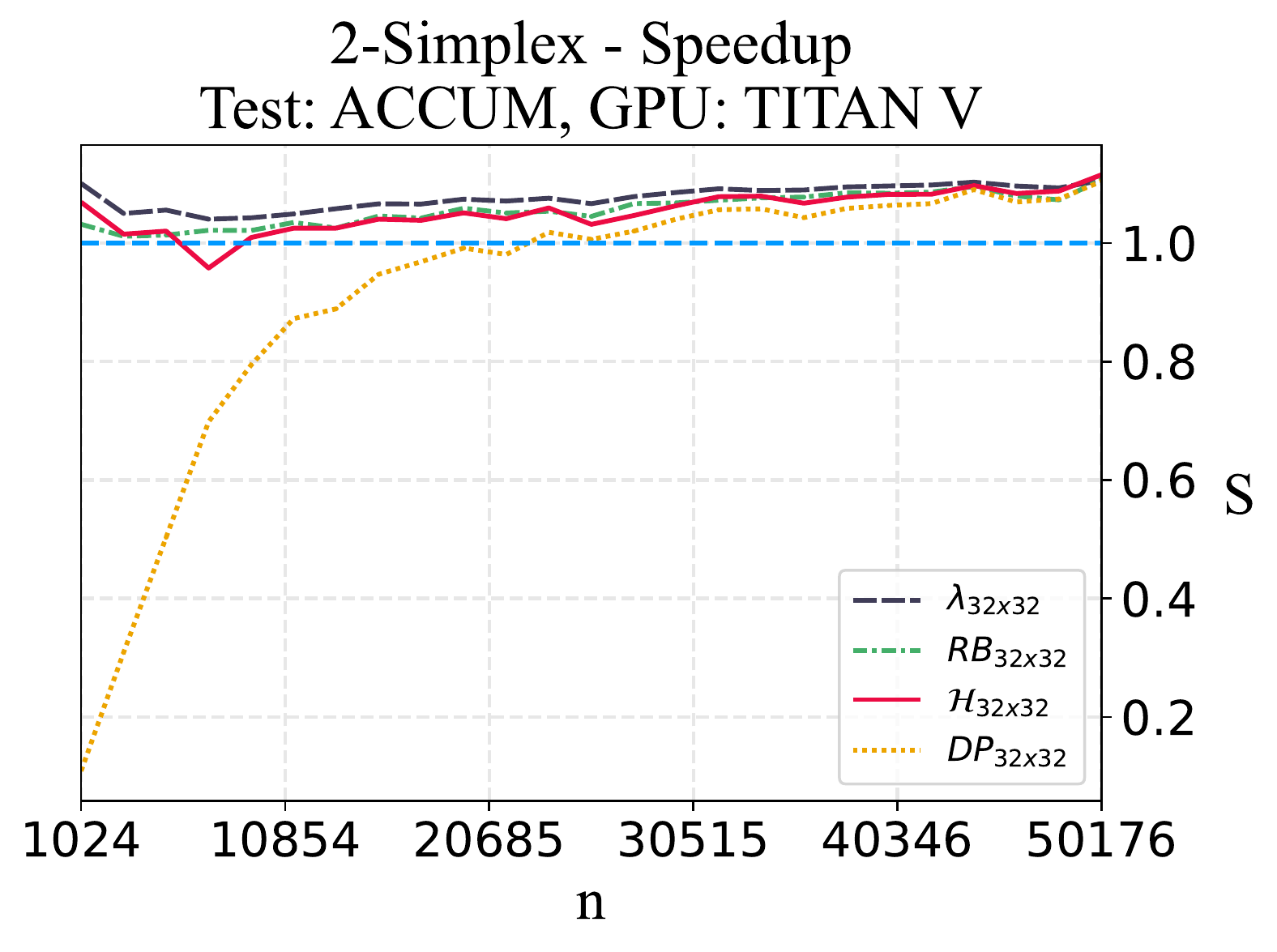}
\includegraphics[scale=0.278]{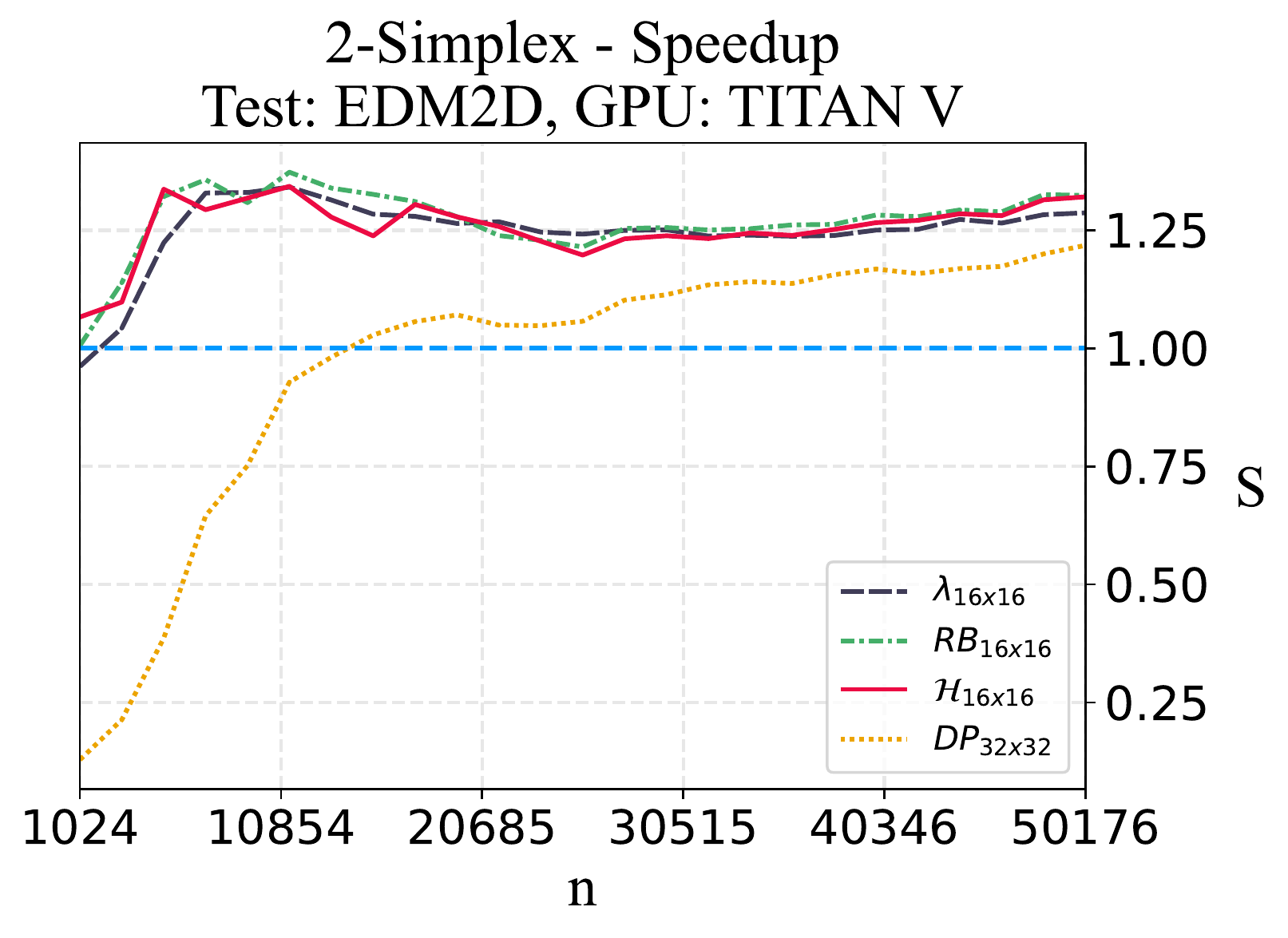}
\includegraphics[scale=0.278]{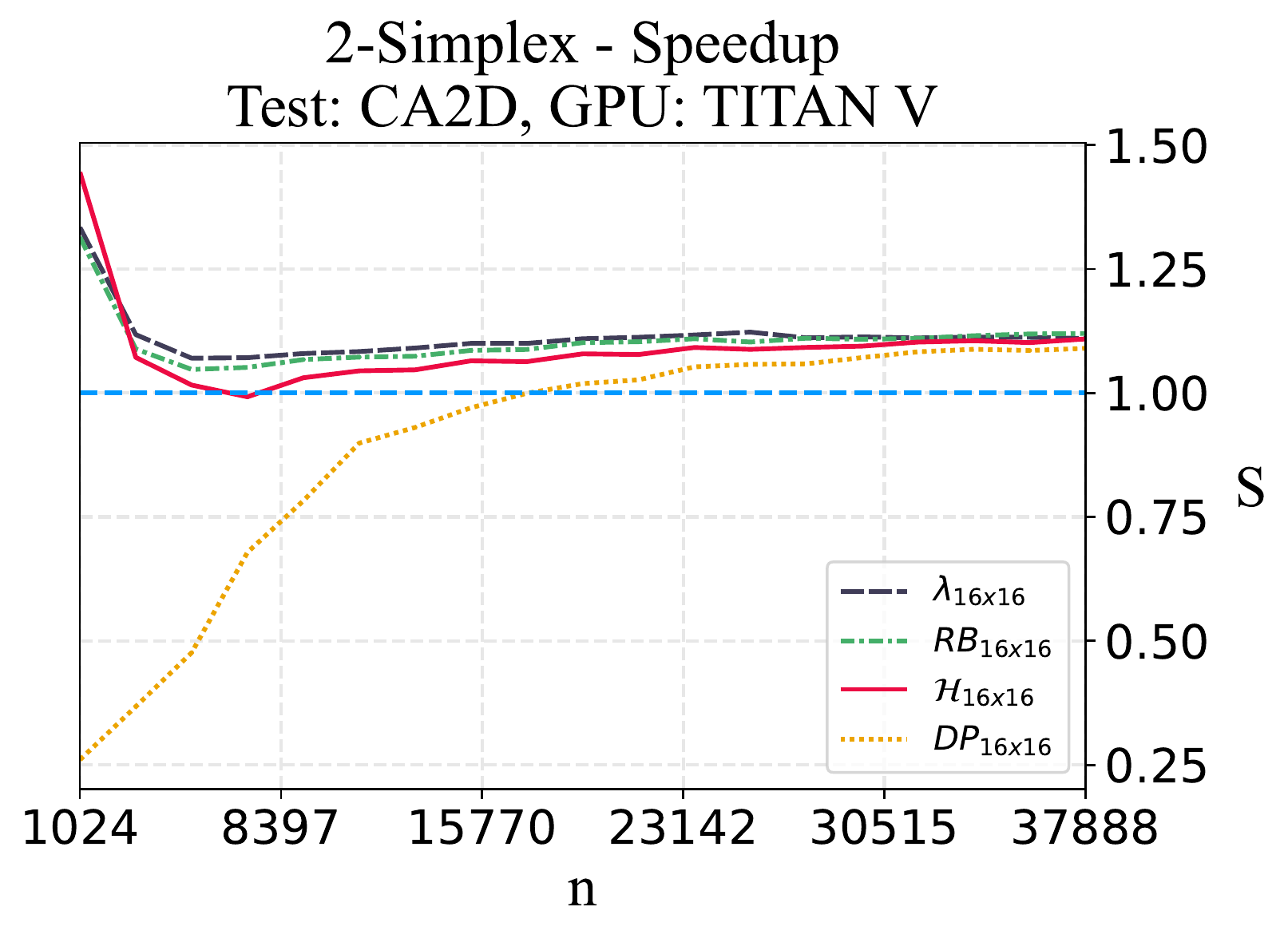}\\
\caption{Speedup for different $2$-simplex tests, running on different GPUs.}
\label{fig:2s-speedup}
\end{figure*}

Figure \ref{fig:2s-power} shows the power consumption in the case of the A100 GPU. 
\begin{figure*}[ht!]
\centering
\includegraphics[scale=0.27]{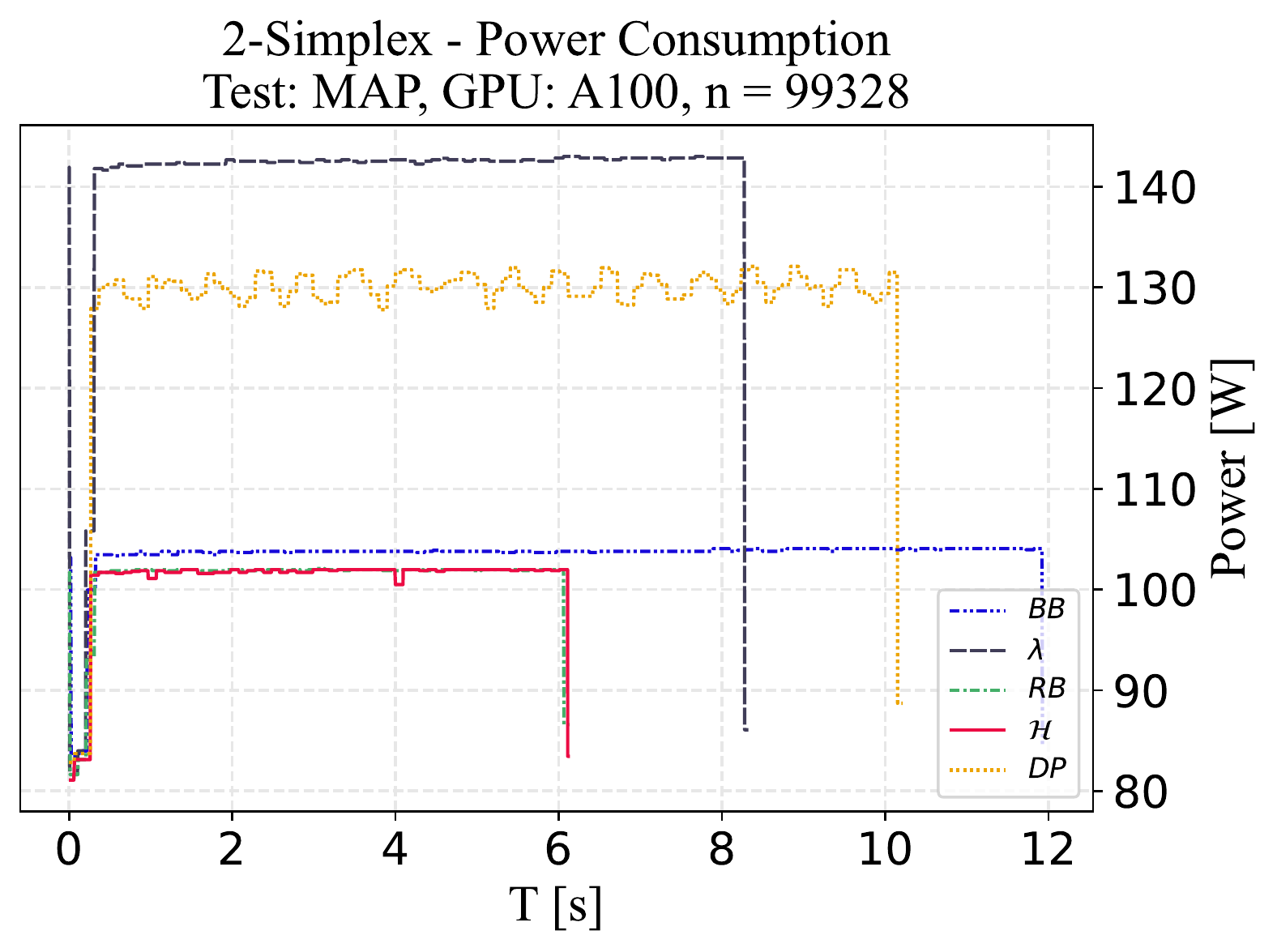}
\includegraphics[scale=0.27]{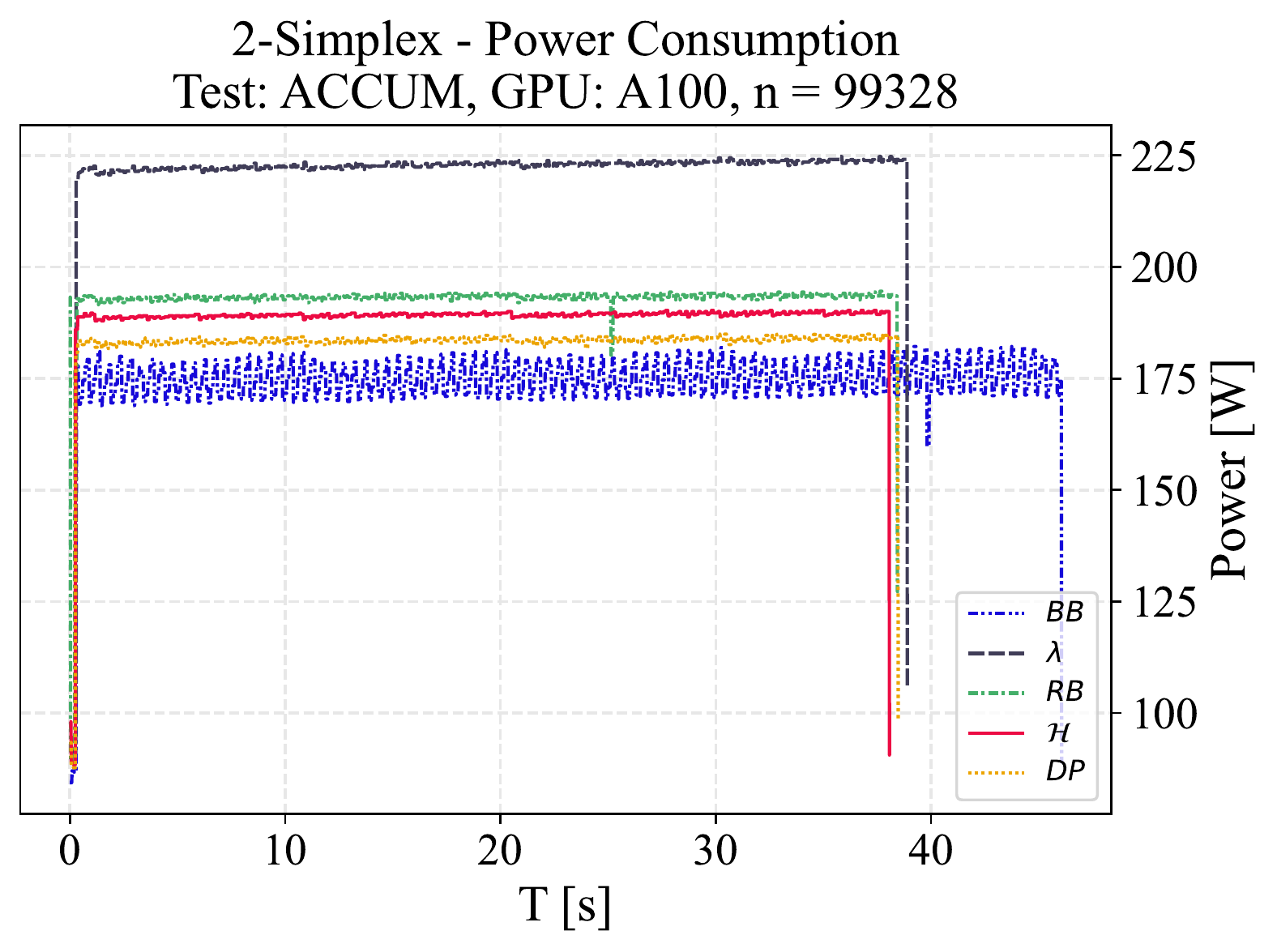}
\includegraphics[scale=0.27]{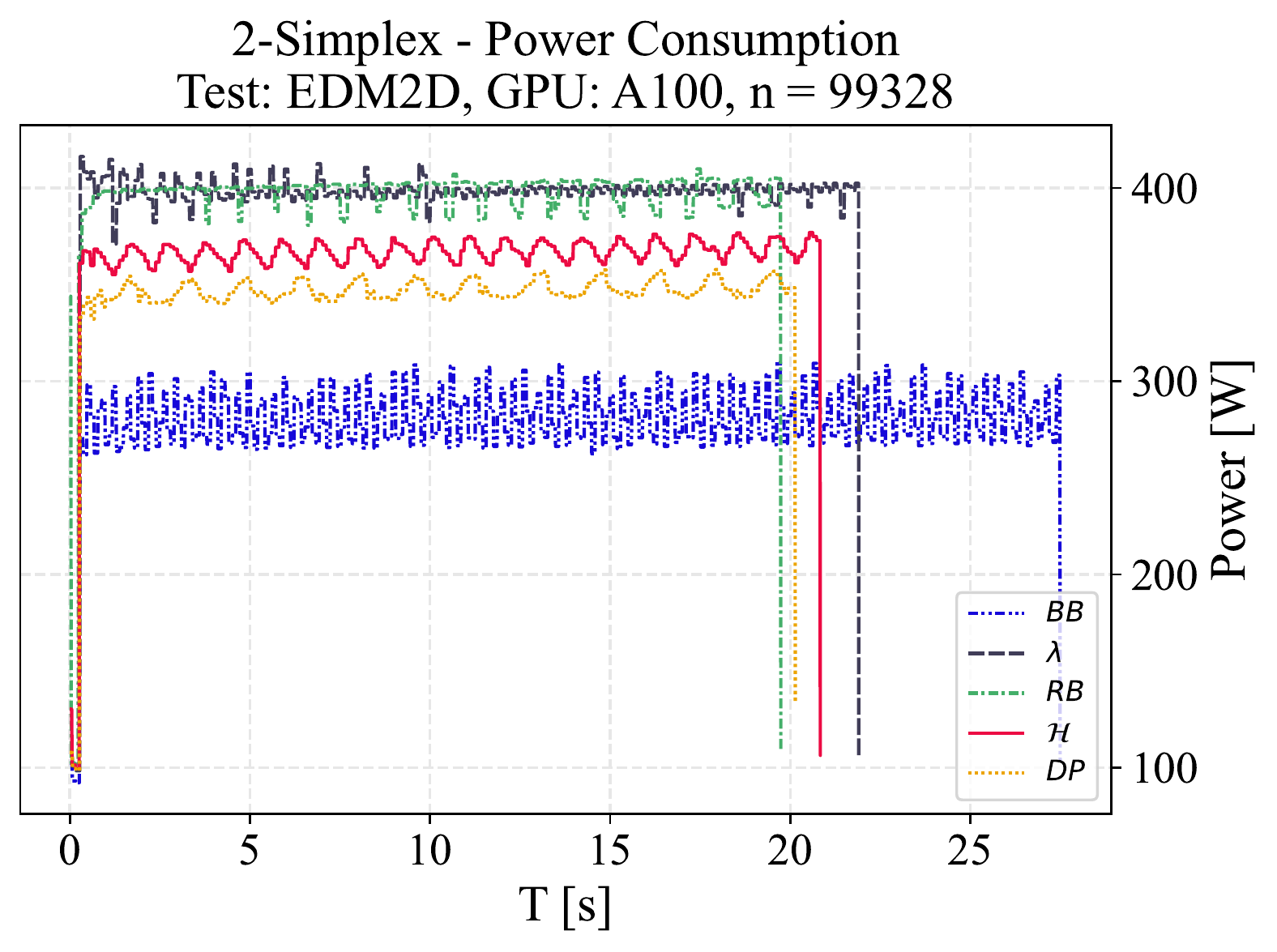}
\includegraphics[scale=0.27]{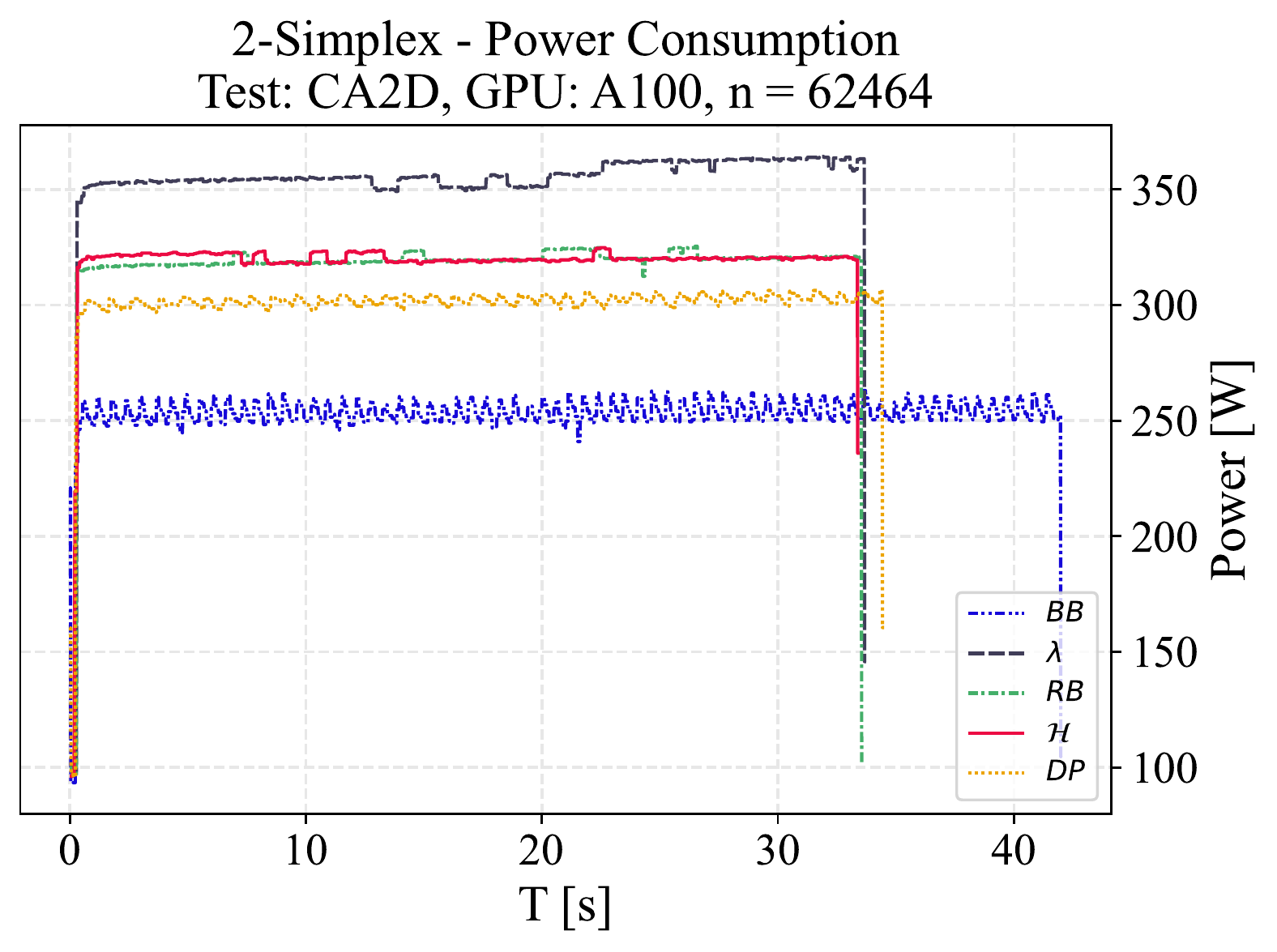}
\caption{Speedup for different $2$-simplex tests, running on different GPUs.}
\label{fig:2s-power}
\end{figure*}
As a global observation, the duration of all curves are in the same proportion of the speedups reported above. For the MAP test the peak power is near $\sim 100$W, reached by all approaches except for DP which reaches a peak slightly above $90$W. For the ACCUM test, $\mathcal{H}$ has an intermediate consumption of up to $\sim 180$W, while $\lambda$ is the most power consuming one with a peak of $\sim 210$W. In the EDM2D test, $\lambda$ reaches the Max Thermal Design Power (TDP) of the A100 GPU, which is $400$W. Here $\mathcal{H}$ is positioned in the middle again, being less consuming than RB, and slightly more than DP. The reference Bounding approach oscillates between $270$W and $300$W. In the last test, the CA2D, $\mathcal{H}$ and RB reach the same peak power consumption while DP is the less consuming one.

Figure \ref{fig:2s-energy-eff} presents the energy efficiency of all approaches running on the A100 GPU. The bar charts represent the energy efficiency using the metric of number of elements per second (EPS) per Watt ($EPS/W$), while the numerical labels at the top of each bar show the total energy used for that particular run. In terms of energy efficiency, which is in log-Y scale, $\mathcal{H}$ stays on average as one of the most energy efficient approaches, shared by RB and DP. On the other hand, the $\lambda$ approach is the less energy efficient one. 
\begin{figure}[ht!]
\centering
\includegraphics[scale=0.53]{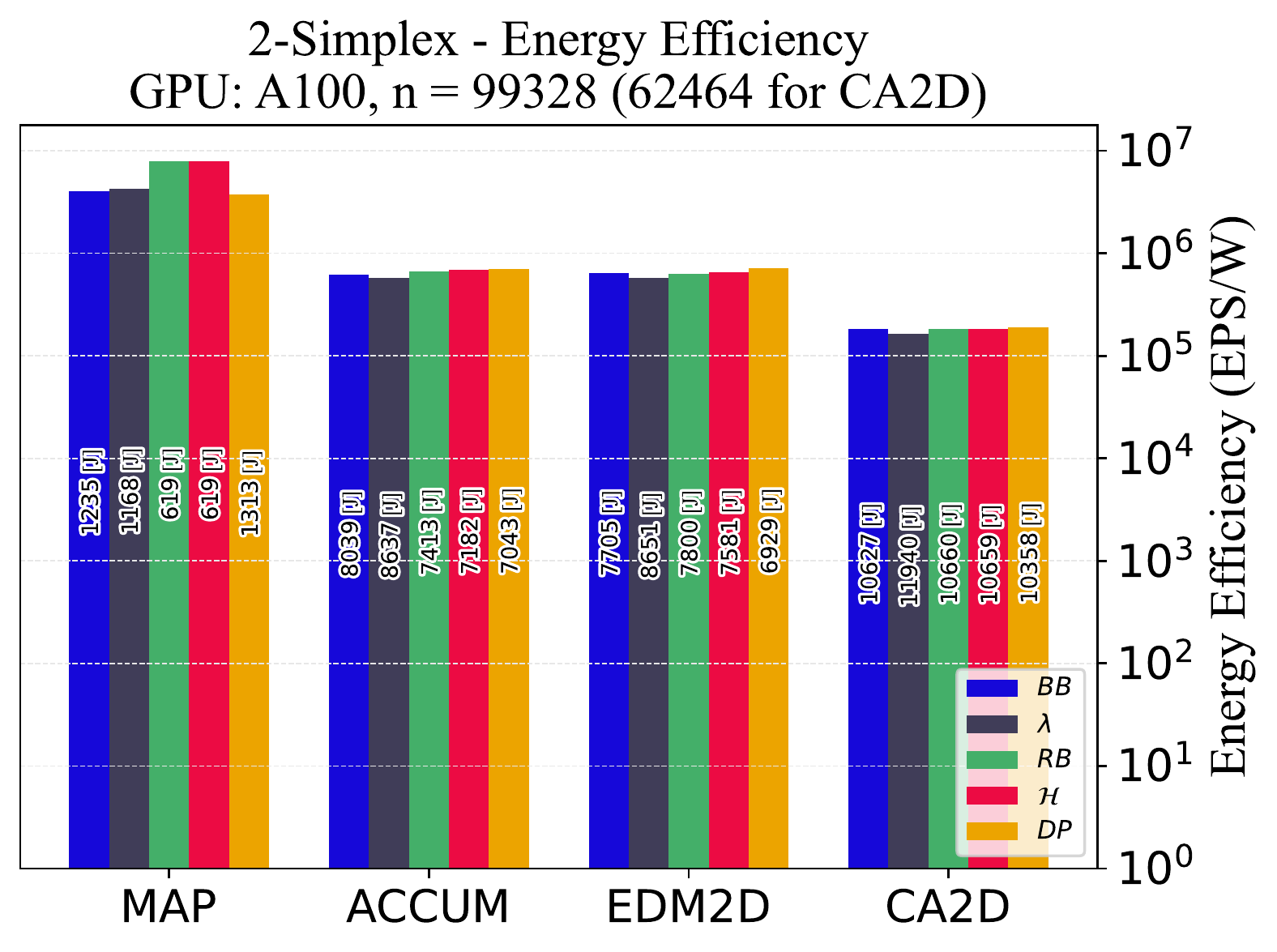}
\caption{Energy efficiency (higher is better), as well as total energy used.}
\label{fig:2s-energy-eff}
\end{figure}

\subsection{Speedup and Energy Efficiency for $3$-simplices}
In the case of $3$-simplices, the RB approach is discarded because it cannot be extended efficiently to 3D, as $n/2 \times n/2 \times n$ orthotope is double the necessary space and the arithmetic operations involved would require further research as it will be more complex. The map $\lambda$ is also discarded because it generates a $3$-simplex ($i < j < k$ condition) in which no vertex has all of its incident facets orthogonal \cite{8392762}, breaking the $x + y + z \le n$ condition. Therefore $3$-simplex tests consider $\mathcal{H}$ and two possible approaches using Dynamic Parallelism (DP). The first one ($DP_{vanilla}$) is the default extension of the DP approach for $2$-simplices, while the second one ($DP_{hinge}$) implements the hinge idea of $\mathcal{H}$ as in Figure \ref{fig_optimization-m3}, now using DP. 

Figure \ref{fig:3s-speedup} presents the speedup of $\mathcal{H}$ and the DP approaches over BB, using different $3$-simplex tests and GPUs.
\begin{figure*}[ht!]
\centering
\includegraphics[scale=0.37]{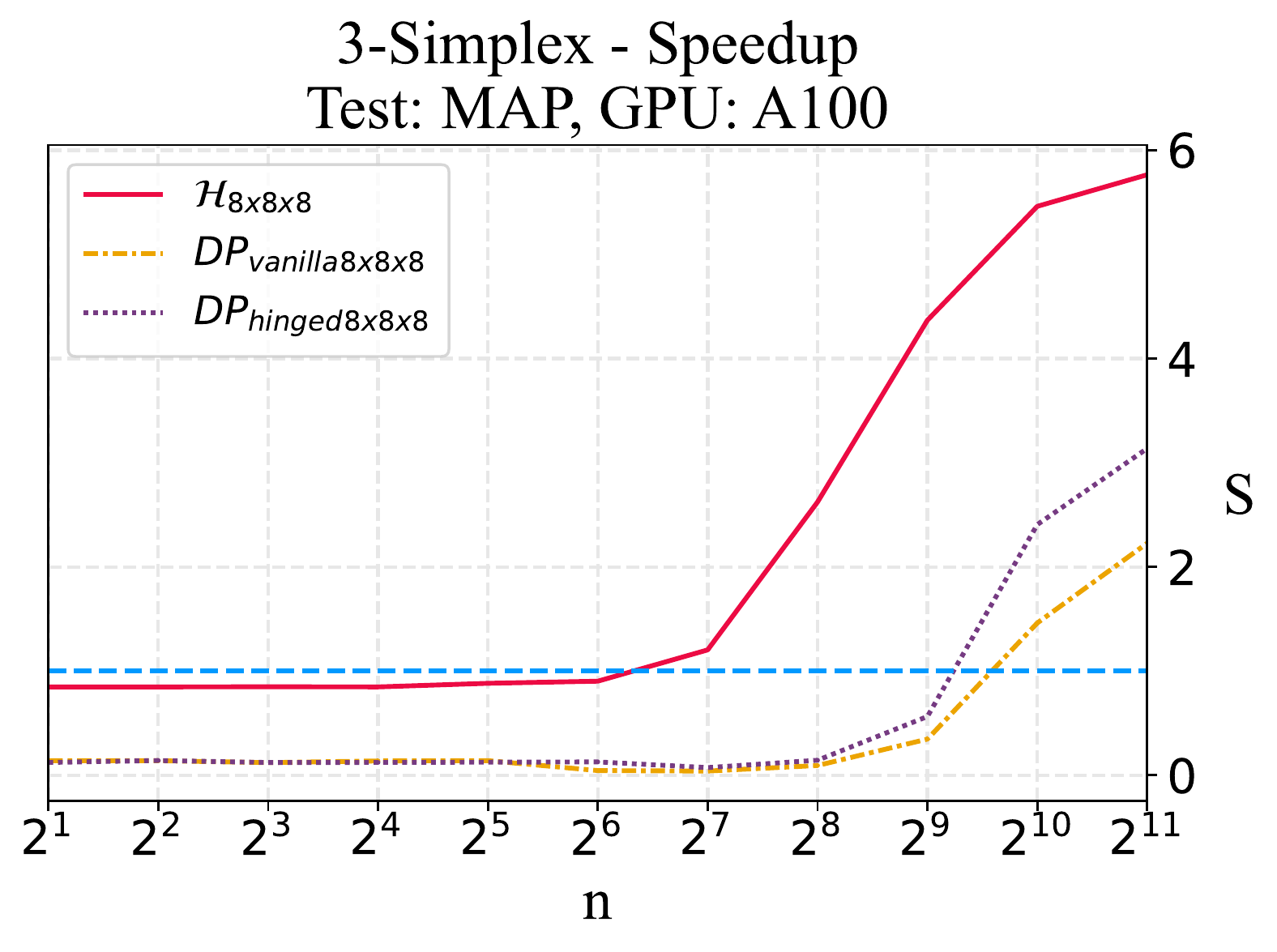}
\includegraphics[scale=0.37]{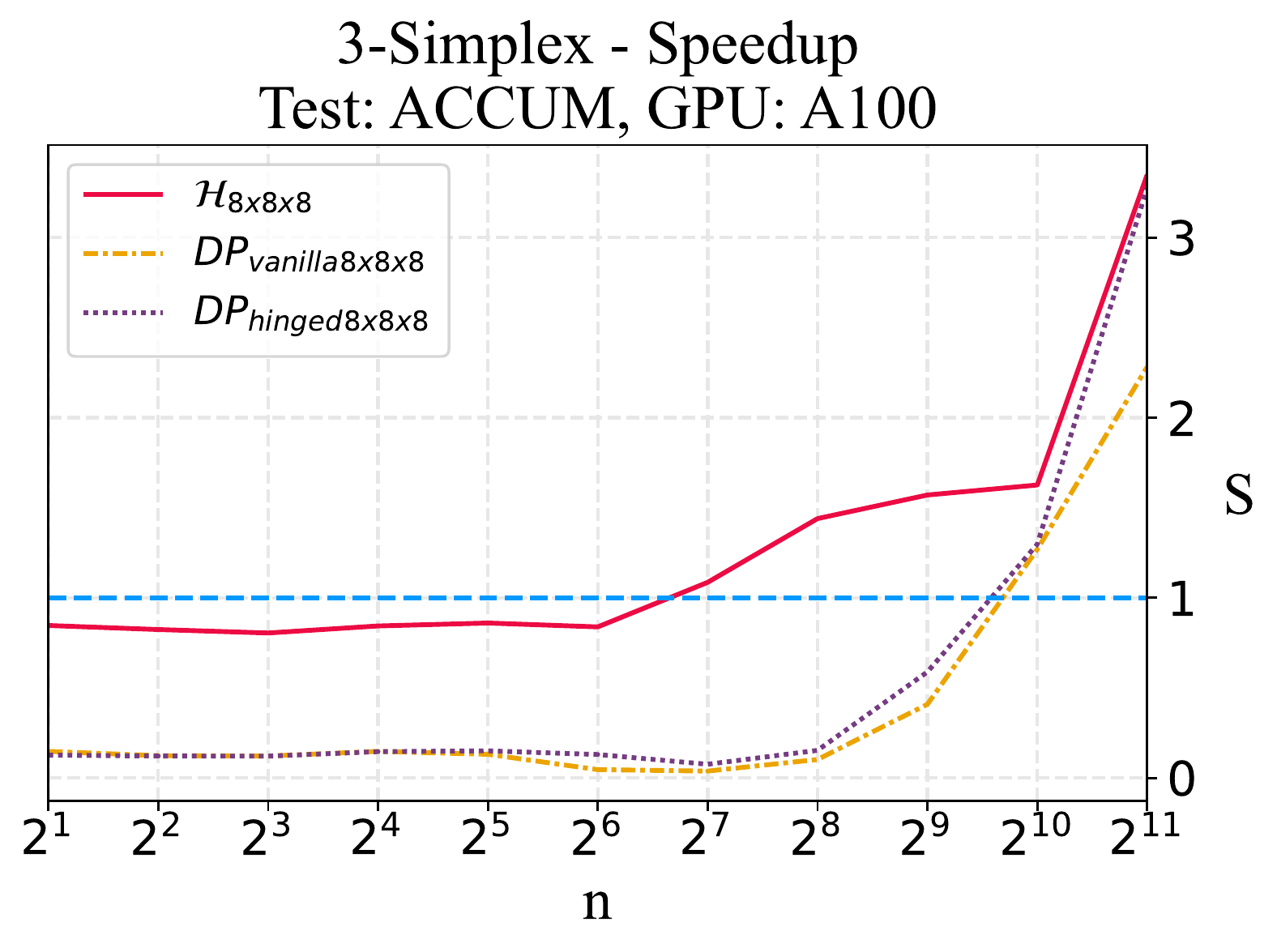}
\includegraphics[scale=0.37]{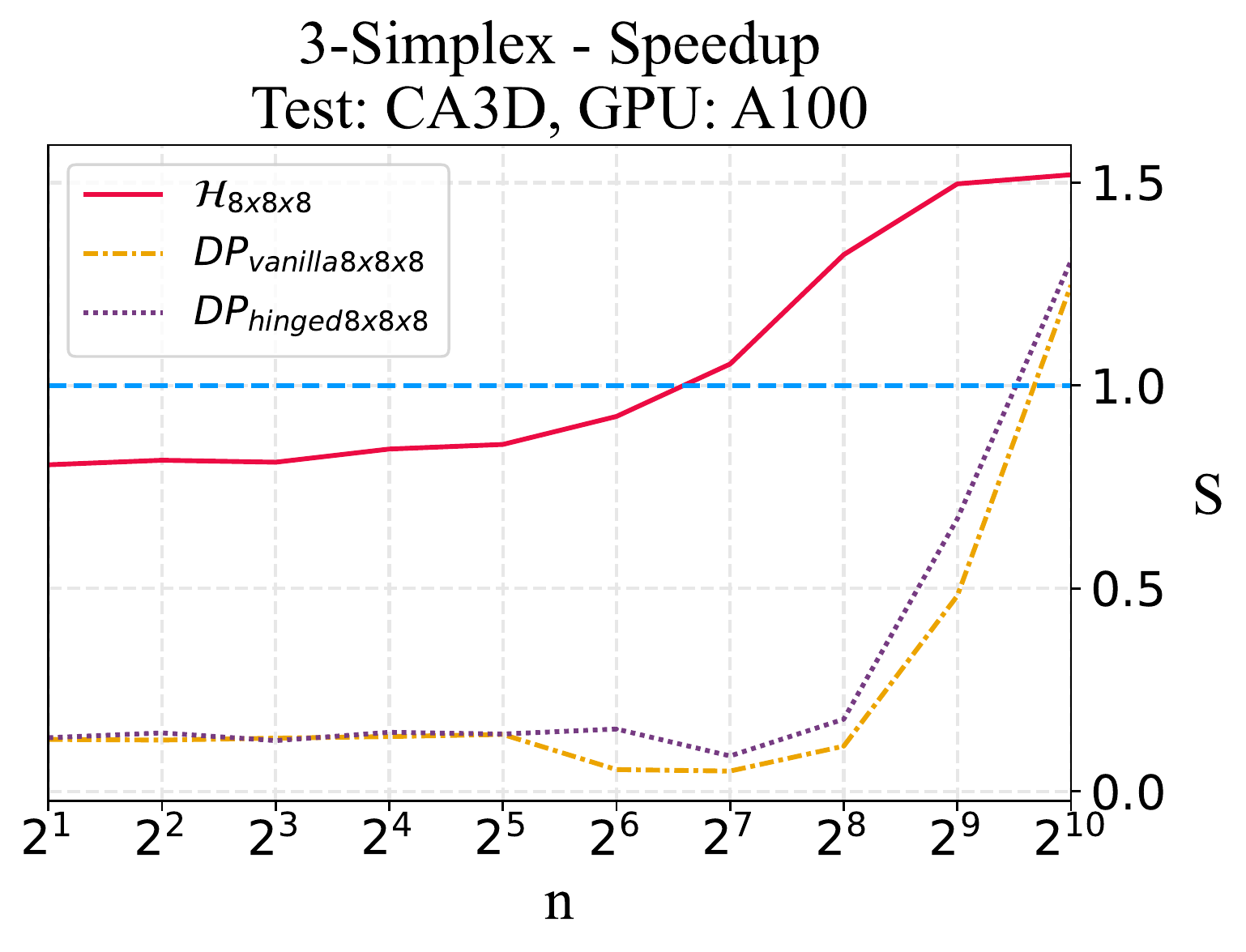}\\
\includegraphics[scale=0.37]{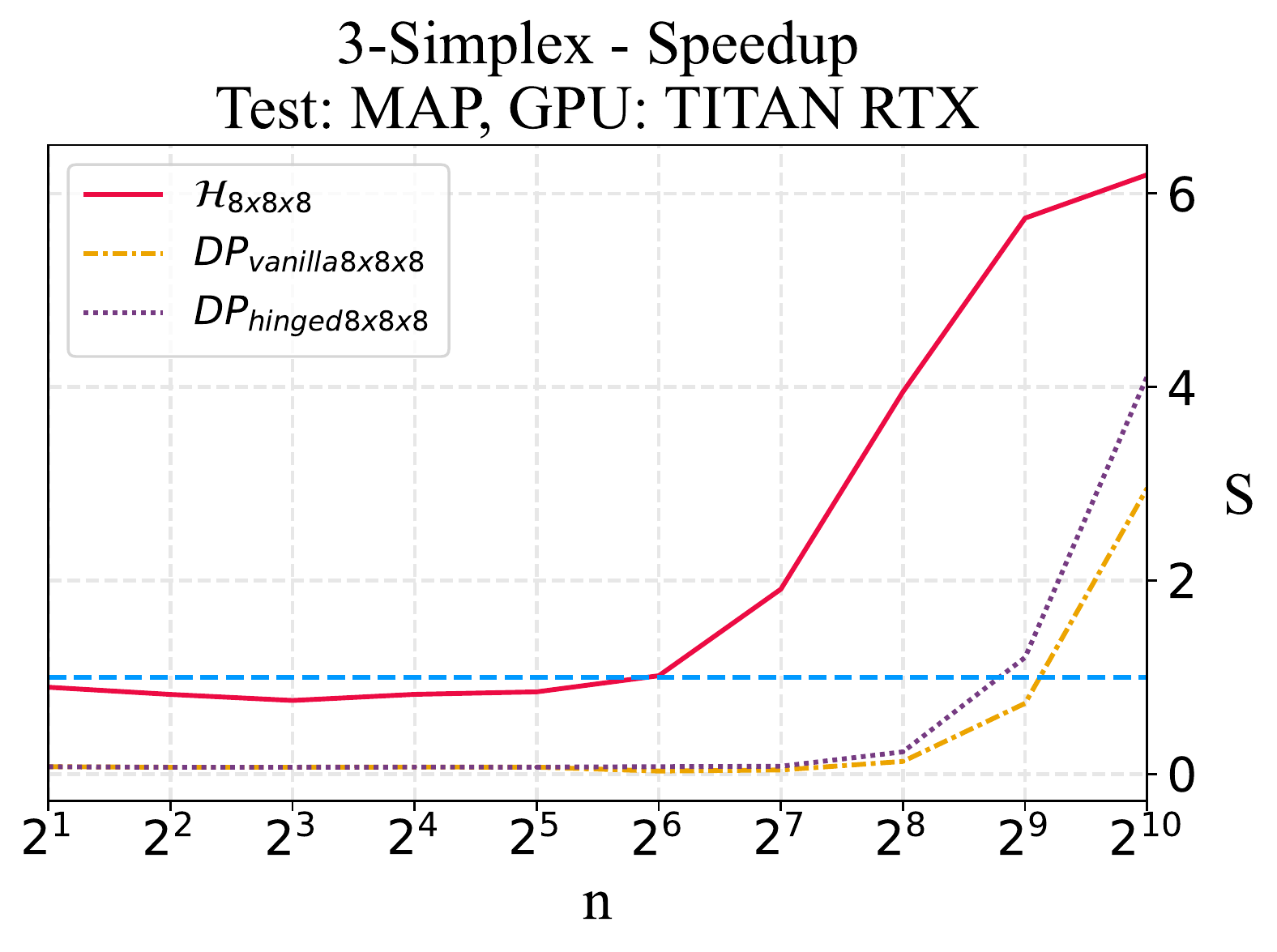}
\includegraphics[scale=0.37]{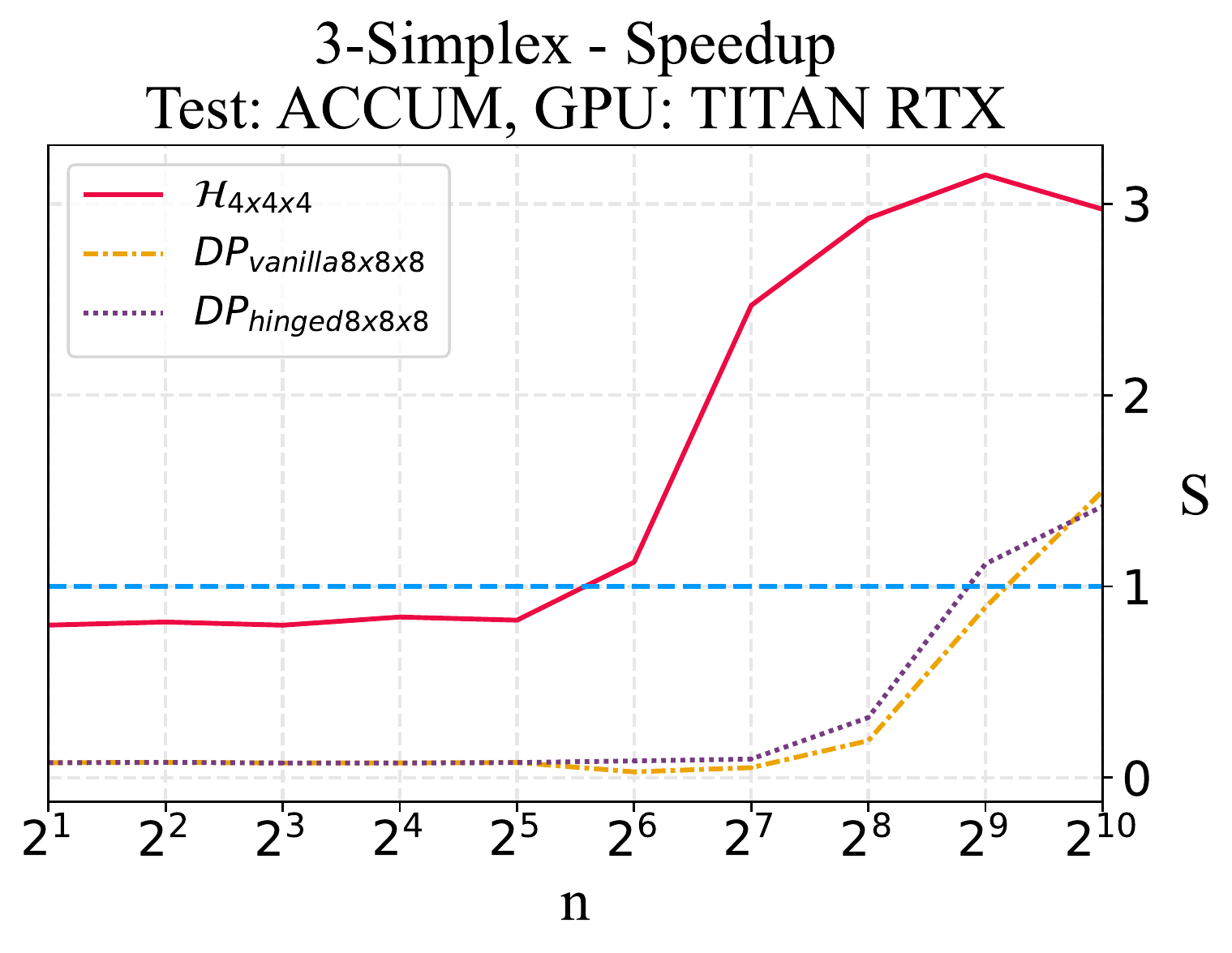}
\includegraphics[scale=0.37]{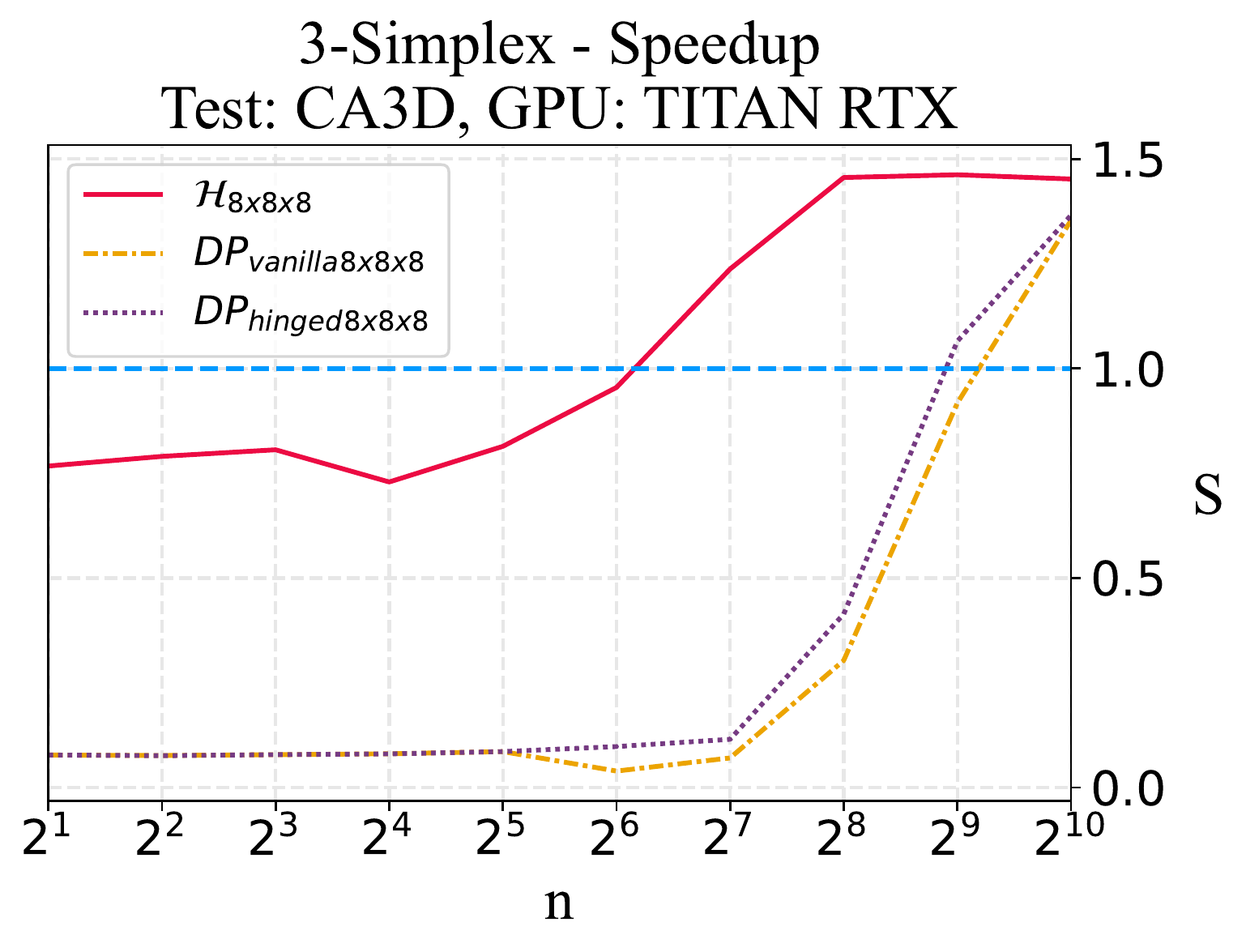}\\
\includegraphics[scale=0.37]{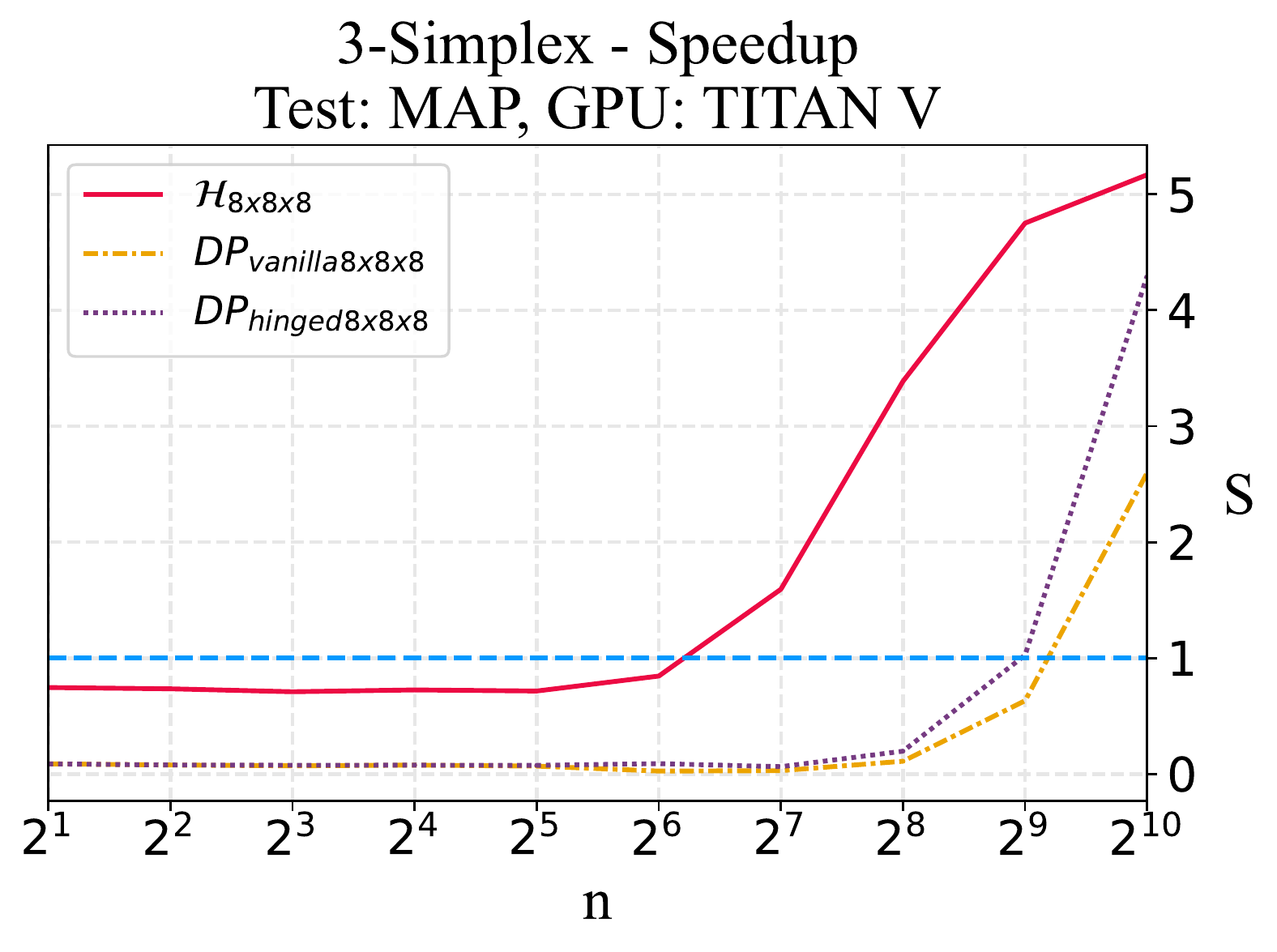}
\includegraphics[scale=0.37]{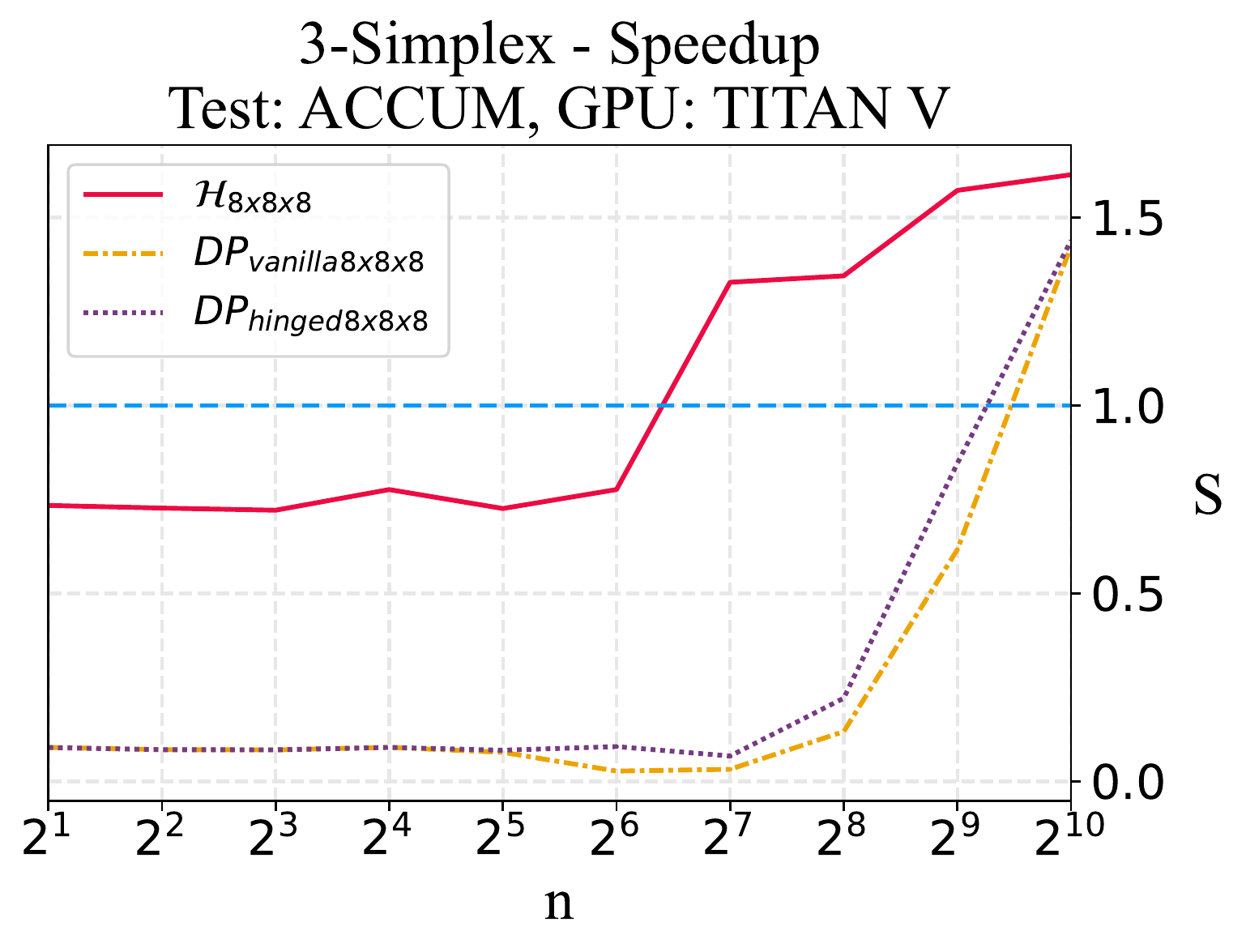}
\includegraphics[scale=0.37]{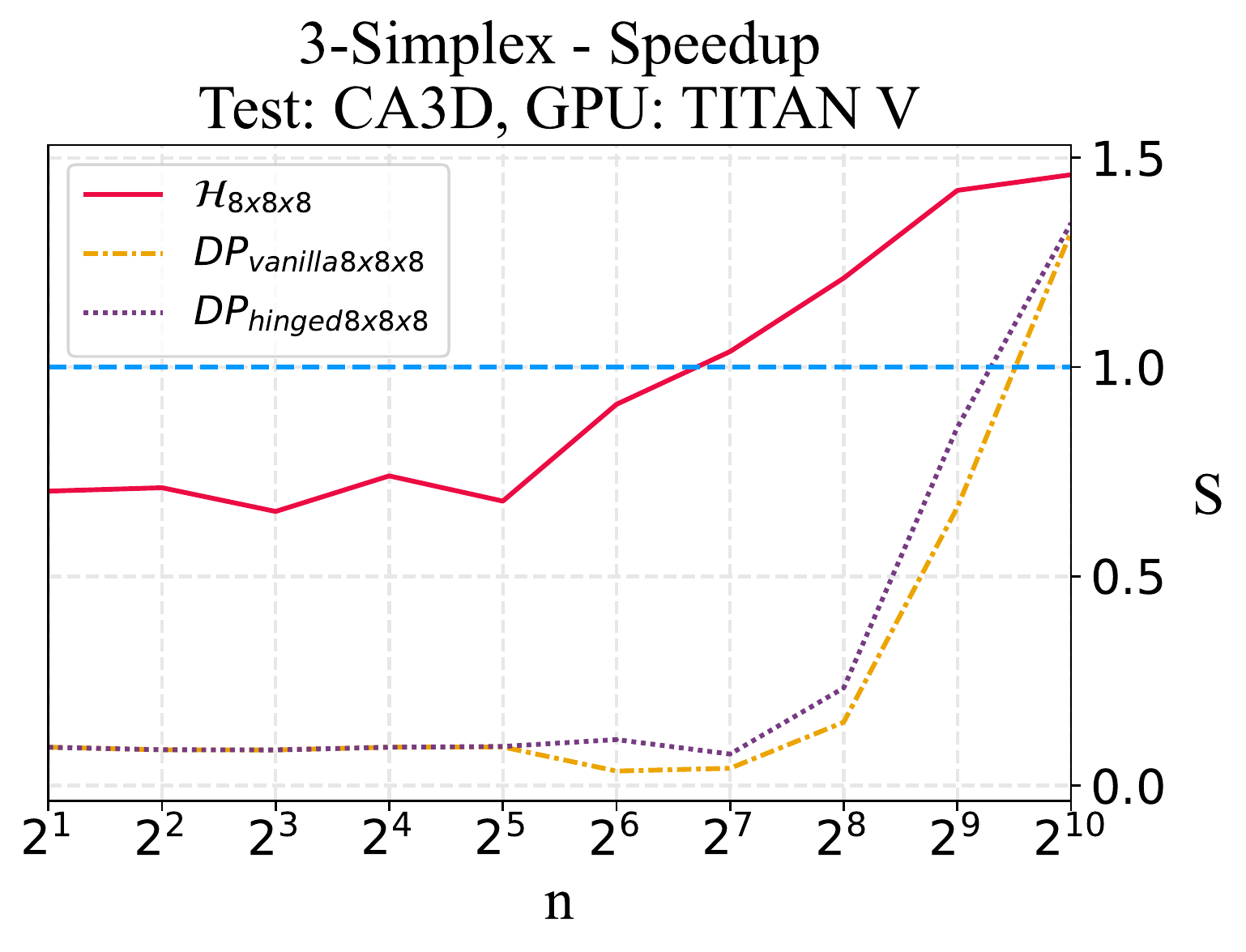}\\
\caption{Speedup for different $3$-simplex tests, running on different GPUs.}
\label{fig:3s-speedup}
\end{figure*}
In the MAP test (first column) $\mathcal{H}$ approaches to $\sim 6\times$ of speedup, matching the theoretical expected value. For the DP variants, the $DP_{hinge}$ version exhibits a higher speedup than the vanilla variant. On the ACCUM test (second column) $\mathcal{H}$ is the fastest map again, but this time the speedup reaches up to $\sim 3\times$. The DP variants show similar performance for the TITAN GPUs, and favors the hinge variant in the A100 GPU. Lastly, on the CA3D test (third column), $\mathcal{H}$ reaches a maximum speedup near $\sim 1.5\times$ followed by the DP approaches.

Figure \ref{fig:3s-power} presents the power consumption time series for all tests running on the A100 GPU.
\begin{figure*}[ht!]
\centering
\includegraphics[scale=0.35]{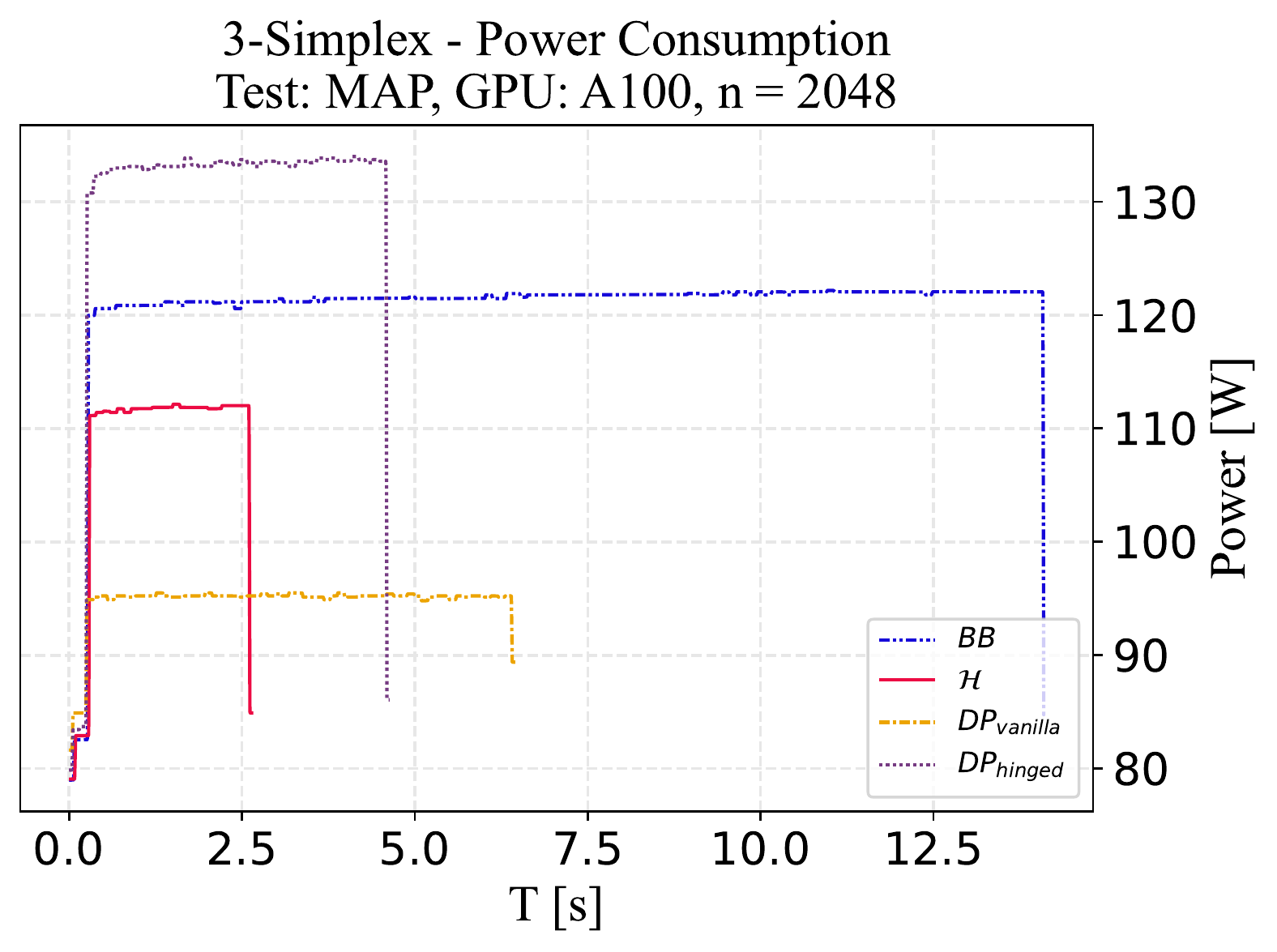}
\includegraphics[scale=0.35]{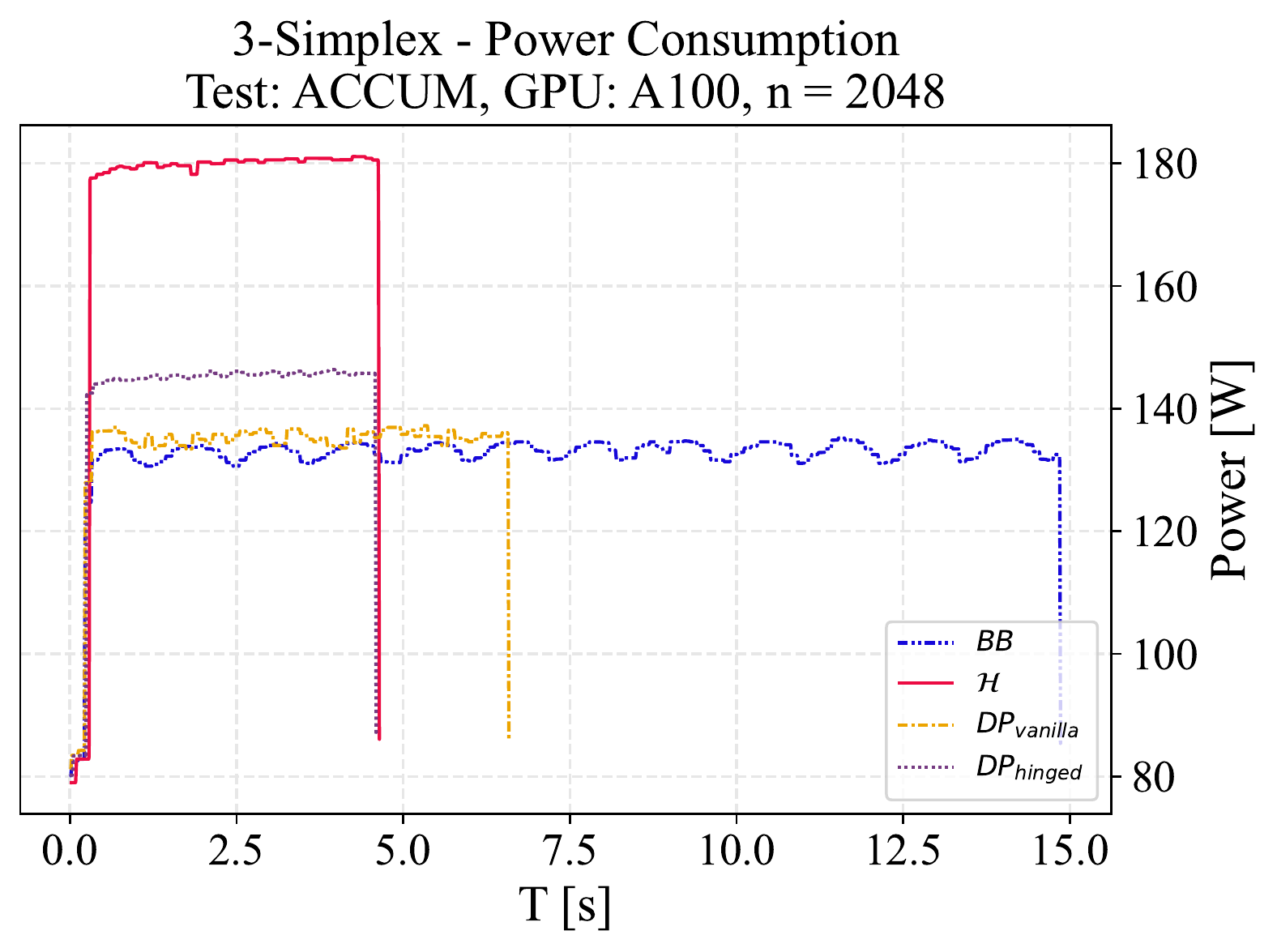}
\includegraphics[scale=0.35]{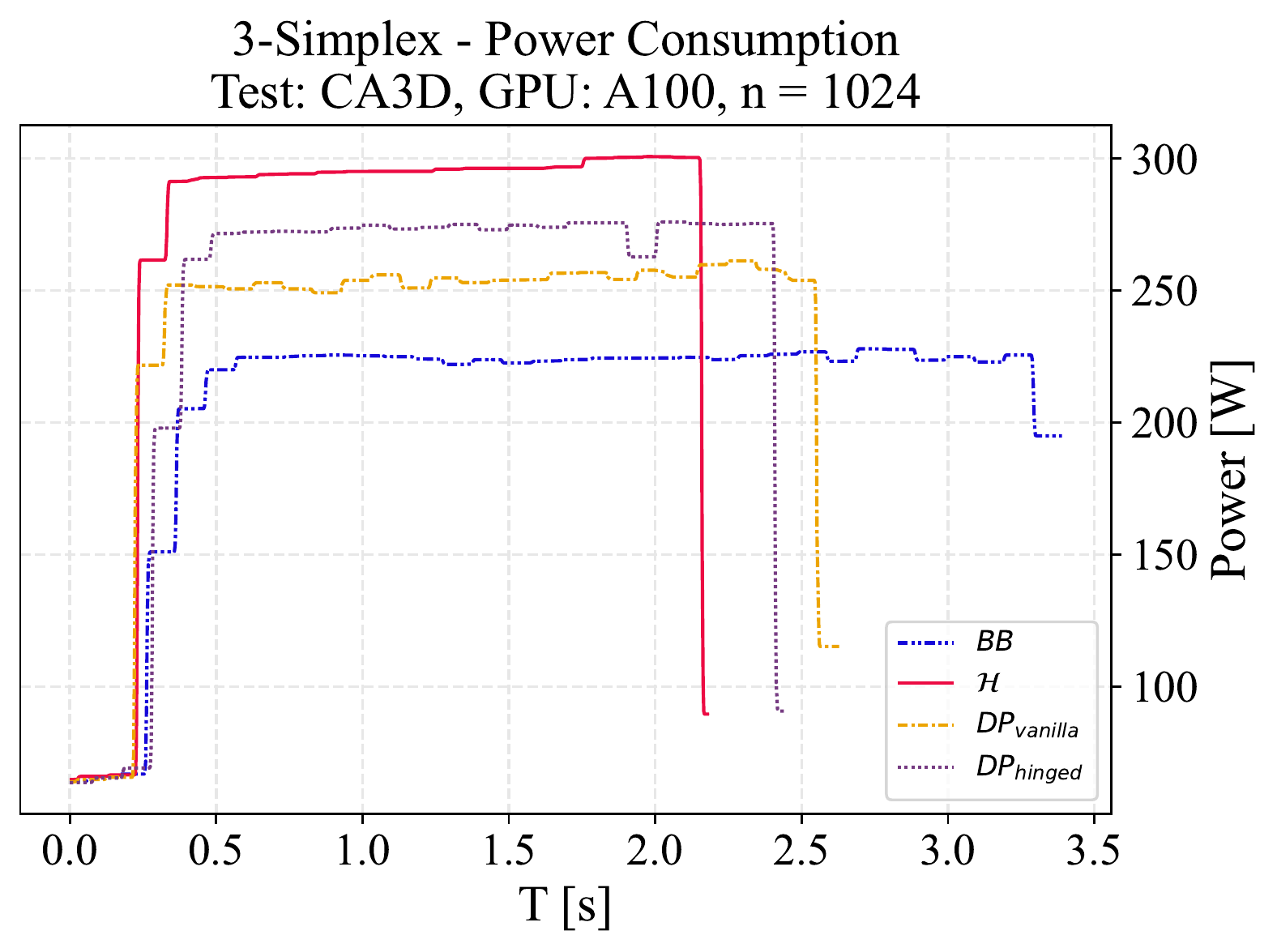}
\caption{Power time series for $3$-simplex tests running on the A100 GPU.}
\label{fig:3s-power}
\end{figure*}
It is worth noticing that the power consumption behavior for the MAP test puts $DP$ and $\mathcal{H}$ as the approaches with less power peak. On the ACCUM and CA3D test $\mathcal{H}$ increases its peak power consumption surpassing the other approaches, but is the first one in making the power consumption curve to go down. Figure \ref{fig:3s-energy-eff} presents the energy efficiency of all approaches running on the A100 GPU. This plots show more easily which map is the most energy efficient in terms of elements per second per Watt. Overall $\mathcal{H}$ stands as the most energy efficient one along with $DP_{hinge}$. 
\begin{figure}[ht!]
\centering
\includegraphics[scale=0.53]{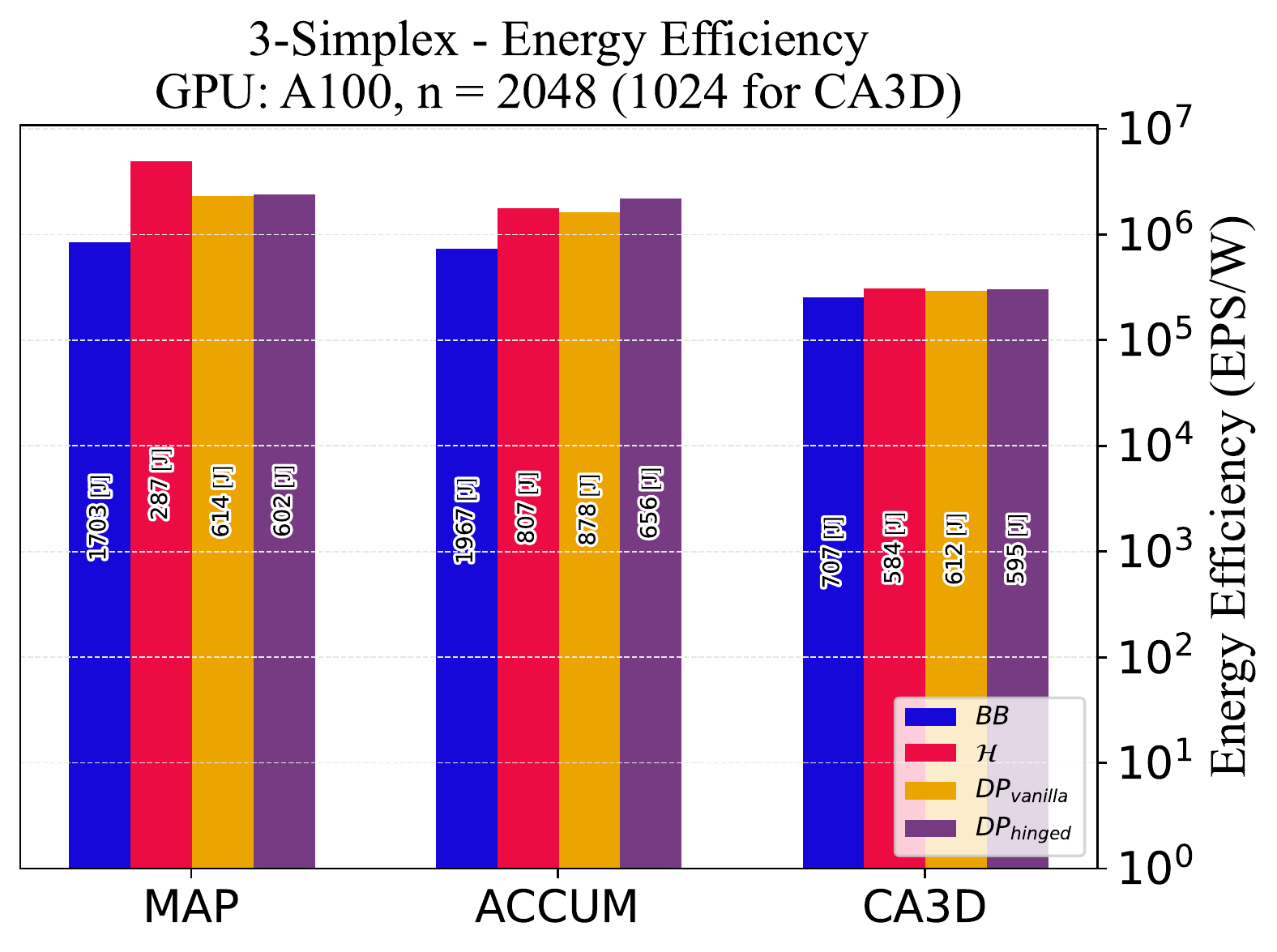}
\caption{Energy efficiency (higher is better), as well as total energy used.}
\label{fig:3s-energy-eff}
\end{figure}

\section{Considerations for extending $\mathcal{H}$ to higher dimensions}
\label{sec:considerationshigher-dimensions}
The formulation of $\mathcal{H}$ for the $2$-simplex and $3$-simplex followed specific designs for their corresponding dimensions. Although the map takes constant time in
both cases, it is worth noticing that for the $3$-simplex it was necessary
to introduce $12\%$ of extra parallel volume in order to fit the set $S_n^m$ on 
$\Pi_n$ as a single-pass map, unless multiple parallel spaces are assumed in the model. 
When generalizing the approach to $m$-simplices, it is important to first verify if $V(S_n^m) = V(\Delta_n^m) + o(n^m)$ satisfies or not, and if not then analyze how much extra space is introduced.
In the general case, the volume of a set $S_n^m$ of self-similar regular orthotopes is 
\begin{alignat}{2}
    V(S_n^m) &= (rn)^m + \beta V(S_{rn}^m) = (rn)^m \sum_{i=0}^{
        \log_{1/r}(n) - 1} (\beta r^m)^i
\end{alignat}
where $r$ is the scaling factor and $\beta$ the arity or multiplicity of the
recursion. Applying the geometric series, the expression becomes
\begin{alignat}{2}
    V(S_n^m) &= (nr)^m \Bigg( \frac{(\beta r^m)^{\log_{1/r}(n)} - 1}{\beta r^m - 1}\Bigg)\\
    \label{eq_generic-m}
            &= \frac{n^m - \beta^{\log_{1/r}(n)}}{1/r^m -\beta}.
\end{alignat}
One can verify that setting $r=1/2$ and $\beta = 2$ leads to equations
(\ref{eq_m2}) and (\ref{eq_m3}) for $m=2,3$, respectively. 
\begin{lemma}
\label{lemma_5}
The parallel space efficiency of a self-similar set $S_n^m$ of orthotopes with parameters
$r=1/2$ and $\beta = 2$ is efficient for $m=2,3$ and factorially inefficient in
$m$ for $m \ge 4$.
\end{lemma}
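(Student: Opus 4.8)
The plan is to specialize the general volume formula Eq.~(\ref{eq_generic-m}) to $r = 1/2$, $\beta = 2$ and then compare its leading term with $V(\Delta_n^m)$. Substituting $1/r^m = 2^m$, $\log_{1/r}(n) = \log_2 n$, and $\beta^{\log_2 n} = 2^{\log_2 n} = n$ collapses Eq.~(\ref{eq_generic-m}) to $V(S_n^m) = \frac{n^m - n}{2^m - 2}$, so that $V(S_n^m) = \frac{n^m}{2^m - 2} + O(n)$ as $n\to\infty$. Since $V(\Delta_n^m) = \binom{n+m-1}{m} = \frac{n^m}{m!} + O(n^{m-1})$ by Eq.~(\ref{eq_volume}), the asymptotic parallel-space ratio is
\[
  \rho_m := \lim_{n\to\infty}\frac{V(S_n^m)}{V(\Delta_n^m)} = \frac{m!}{2^m-2}.
\]
The whole statement then reduces to comparing the integer sequences $m!$ and $2^m-2$, and this comparison is where the only real content lies.

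For the base cases I would simply observe $2^2-2 = 2 = 2!$ and $2^3-2 = 6 = 3!$, hence $\rho_2 = \rho_3 = 1$; combined with the fact already recorded in Eqs.~(\ref{eq_m2}) and (\ref{eq_m3}) that $V(S_n^2) = V(\Delta_{n-1}^2)$ and $V(S_n^3) = V(\Delta_{n-1}^3)$ exactly, this yields $V(S_n^m) = V(\Delta_n^m) \pm o(n^m)$ for $m = 2,3$, so the space condition of Definition~\ref{def_1} is met. For $m \ge 4$ I would prove $m! > 2^m - 2$ by induction: the base case $4! = 24 > 14 = 2^4-2$ holds, and in the inductive step the factorial side is multiplied by $m+1 \ge 5$ while $\frac{2^{m+1}-2}{2^m-2} = 2\bigl(1+\tfrac{1}{2^m-2}\bigr) < 3$, so the inequality is preserved and the gap only widens. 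Thus $\rho_m > 1$ for every $m \ge 4$, meaning $V(S_n^m)$ already overshoots $V(\Delta_n^m)$ by a constant factor and the set fails Definition~\ref{def_1}.

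To justify the word ``factorially'' I would estimate the growth of $\rho_m$. One clean route is $\rho_m = \frac{m!}{2^m-2} \ge \frac{m!}{2^m}$, and by Stirling $\frac{m!}{2^m} \sim \sqrt{2\pi m}\,\bigl(\tfrac{m}{2e}\bigr)^m \to \infty$, which grows faster than any exponential in $m$; equivalently $\rho_{m+1}/\rho_m = (m+1)\,\frac{2^m-2}{2^{m+1}-2} \ge \frac{m+1}{3}$ for $m \ge 2$, so the ratio is unbounded and of factorial order, and the extra parallel-space fraction $\alpha(S_n^m,\Delta_n^m) = \rho_m - 1$ is $\Theta(m!/2^m)$. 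The step I expect to matter most is pinning down that the identity $m! = 2^m - 2$ holds for \emph{exactly} $m \in \{2,3\}$, since that coincidence is precisely what makes the half-size, binary-arity subdivision optimal in those two dimensions and nowhere else; I would also make explicit that this lemma concerns only the abstract volume $V(S_n^m)$, so the separate $12.5\%$ overhead observed for $m=3$ (which stems from packing $S_n^3$ into a single regular-base super-orthotope, cf.\ Lemma~\ref{lemma_4}) is not part of this argument.
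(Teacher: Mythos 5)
Your proposal is correct and follows essentially the same route as the paper: specialize Eq.~(\ref{eq_generic-m}) to $r=1/2$, $\beta=2$ to get $V(S_n^m)=\frac{n^m-n}{2^m-2}$, and compare with the simplex volume in the limit, obtaining the ratio $\frac{m!}{2^m-2}$, which equals $1$ exactly for $m=2,3$ and grows factorially for $m\ge 4$. You simply make explicit what the paper leaves implicit (the coincidence $2^m-2=m!$ only for $m\in\{2,3\}$, the induction showing $m!>2^m-2$ for $m\ge 4$, and the $\Theta(m!/2^m)$ sharpening of the paper's $\mathcal{O}(m!)$ bound), which is a welcome but not substantively different argument.
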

\begin{proof}
For the general case of $m$, the self-similar strategy of using $r=1/2$, $\beta=2$
produces a fraction of extra volume of
\begin{equation}
    \alpha(S,\Delta)_n^m = \lim_{n\to\infty} \frac{\frac{n^m- n}{2^m - 2}}{ {{(n-1) + m -
    1}\choose{m}}} - 1= \frac{m!}{2^m - 2} - 1 = \mathcal{O}(m!)
\end{equation}
The efficient maps $\mathcal{H}$ proposed for $\Delta_n^2$ and
$\Delta_n^3$ in fact take advantage of the early behavior of the extra space function,
which remains near zero. For $m \ge 4$ the factorial behavior of the extra volume
dominates the cost. 
\end{proof}
For example, when $m=4$ the volume is 
\begin{equation}
    V(S_n^m) = \frac{n^4 - n}{14} > \frac{(n-1)n(n+1)(n+2)}{24}
\end{equation}
and $\alpha(S_n^m,\Delta_n^m)$ approaches to $5/7$ of $\Delta_n^m$. For $m=5$ and
$m=7$ it approaches to $3\times$ and $39\times$ the volume of $\Delta_n^m$, and so on.
\begin{theorem}
\label{theorem_3}
Finding an efficient self-similar set $S_n^m$ for $\Delta_n^m$ is an optimization problem.
\end{theorem}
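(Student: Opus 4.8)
The plan is to make explicit the optimization problem hidden behind the phrase ``finding an efficient self-similar set''. I would show that any self-similar construction of the kind used in Section \ref{sec_formulation} is completely specified by a short tuple of discrete parameters, that Definition \ref{def_1} turns into a minimization objective over those parameters, and that the feasible region is cut out by a few arithmetic and geometric constraints. Concretely, the decision variables are the integer reciprocal $t \ge 2$ of the scaling factor $r = 1/t$, the recursion arity $\beta \ge 1$, and --- as an auxiliary discrete choice --- whether the equally-sized sub-orthotopes of a stack level are merged into one super-orthotope (as in Lemma \ref{lemma_4}) or handed to a constant number of concurrent parallel spaces. Once these are fixed, the volume of $S_n^m$ is pinned down by Eq. (\ref{eq_generic-m}), its shape by the self-similar dissection it encodes, and the cost of the induced index map by the same register-level reasoning used after Theorem \ref{theorem_1}.

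Next I would read off the objective and the constraints. Since $\beta^{\log_{1/r}(n)} = n^{\log_t \beta}$, the lower-order term in Eq. (\ref{eq_generic-m}) is $o(n^m)$ exactly when $\beta < t^m$, and then the asymptotic thread overhead of the set alone is
\begin{equation}
    \alpha_S(t,\beta) = \lim_{n\to\infty}\frac{V(S_n^m)}{V(\Delta_n^m)} - 1 = \frac{m!}{\,t^m - \beta\,} - 1 .
\end{equation}
Requiring the parallel space to cover $\Delta_n^m$ forces $t^m - \beta \le m!$, so $\alpha_S \ge 0$ with equality only on the boundary $t^m - \beta = m!$; to $\alpha_S$ one adds the packing slack $\alpha_\Pi(t,\beta) \ge 0$ incurred when stack levels are folded into a single orthotope, which is $0$ in the concurrent-kernel mode and equals $1/8$ for $m=3$ with $(t,\beta)=(2,2)$, by the computation preceding this theorem. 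The design task is then the discrete program
\begin{equation}
    \min_{\,t \ge 2,\ \beta \ge 1\,} \big[\alpha_S(t,\beta) + \alpha_\Pi(t,\beta)\big]
    \quad\text{s.t.}\quad \beta < t^m,\ \ t^m - \beta \le m!,\ \ S_n^m\ \text{realizable by an }\mathcal{O}(1)\ \text{map},
\end{equation}
and $\mathcal{H}$ is efficient in the sense of Definition \ref{def_1} precisely when the optimum value is $0$ (or $o(1)$ if one enriches the recursion with several distinct base blocks per level). Lemma \ref{lemma_5} is exactly the statement that the lone feasible point $(t,\beta)=(2,2)$ is optimal for $m=2,3$ and factorially suboptimal for $m \ge 4$, which shows the minimization is non-vacuous: for $m \ge 4$ one must move to larger $t$ (any $t$ with $t^m > m!$ and $\beta = t^m - m!$ hits the volume boundary).

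The part I expect to be genuinely hard is the realizability constraint, and therefore the description of the feasible set rather than the minimization over it. Checking $\beta < t^m$ and $t^m - \beta \le m!$ is immediate, and on the relaxed volume-only problem the boundary $t^m - \beta = m!$ is attainable for every $m$. But certifying that a given pair $(t,\beta)$ is \emph{geometrically realizable} --- that $\Delta_n^m$ genuinely dissects into one orthotope of side $n/t$, $\beta$ homothetic copies of itself at ratio $1/t$, and a residue that an $\mathcal{O}(1)$ map can fold back in (the role played by the hinge of Eq. (\ref{eq_3simplex_map}) for $m=3$) --- is a lattice packing/tiling question that is already delicate at $m=3$ and, as far as we know, open for $m \ge 4$; the same goes for whether the integer identity $t^m - \beta = m!$ can be met \emph{on} the realizable boundary rather than only in the relaxation. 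In short, the theorem reduces the search for efficient higher-dimensional simplex maps to this concrete discrete optimization, while the realizability oracle it requires remains the hard open ingredient.
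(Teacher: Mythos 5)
Your proposal is correct and follows essentially the same route as the paper: Theorem \ref{theorem_3} is established by exhibiting the design of $S_n^m$ as a discrete minimization over the scaling and replication parameters $(1/r,\beta)$, with Lemma \ref{lemma_5} showing the default choice $r=1/2,\ \beta=2$ is only adequate for $m=2,3$ and hence the minimization is non-vacuous for $m\ge 4$. Your version is in fact a sharper formalization of the paper's one-line argument — you normalize the objective to the asymptotic overhead $m!/(t^m-\beta)-1$ instead of $|V(S_n^m)-V(\Delta_n^m)|$, add the coverage condition $t^m-\beta\le m!$, replace the paper's stated constraint $1/r>\beta$ (which its own point $r=1/2,\ \beta=2$ violates as written) by the convergence condition $\beta<t^m$, and make explicit the geometric-realizability requirement that the paper leaves implicit.
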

\begin{proof}
By Lemma \ref{lemma_5}, the $r=1/2, \beta=2$ approach is only efficient for
$m=2,m=3$. For all the rest of the cases, finding an efficient set requires to 
find parameters $r, \beta$ such that it minimizes the target function $|V(S_n^m) -
V(\Delta_n^m)|$.  The constraints are $\beta > 1$, $1/r > \beta$
as well as $1/r, \beta \in \mathbb{Z}_+$. 
\end{proof}    
Additional considerations include that $\beta^{\log_{1/r}(n)}$ should not grow
too fast as it has an impact on what is the initial $n_0$ for which $V(S_n^m)
\ge V(\Delta_n^m)$ holds.  
As an example, a value of $r=1/(m^{-1/m})$ produces the required $m!$, making
$\beta$ a free parameter to be adjusted with $\beta \ge 2$.  Choosing
$\beta = 2$ provides a set $S_n^m$ that covers $\Delta_n^m$ from a certain $n
\ge n_0$, where $n_0$ is a value that increases with $m$. It is possible to
bring $n_0$ closer to the origin by increasing $\beta$, however the extra volume
increases as well, presenting a trade-off. What is interesting is that from $n
\ge n_0$, the parallel space is practically $m!$ times more efficient than a
bounding box approach, presenting a great potential for transforming this space
improvement into a performance one.

\section{Insights for Leveraging Tensor/Ray-Tracing Cores}
\label{sec:tensor-rtx-cores}
Since 2018, GPU architectures have began introducing two new types of ASICs (Application Specific Integrated Circuits) onto the chip, i) Tensor cores (TC) and ii) Ray-tracing (RT) cores to further speedup Deep Learning and Real-time graphics applications, respectively. Leveraging the multiple TCs and RT cores that sit on a modern GPU depends on the application and how well it can reformulate its computation in terms of TC and RT operations.  

\subsection{Leveraging Tensor Core Units}
The tensor core exposes the matrix multiply accumulate (MMA) operation: $D_{m\times n} = A_{\text{m}\times \text{k}}\times B_{\text{k} \times \text{n}} + C_{\text{m} \times \text{n}}$ with different $\textbf{m} \times \textbf{n} \times \textbf{k}$ formats\footnote{Internally, the GPU schedules how the operation is decomposed into even smaller matrices, such as $4\times 4$, to be processed by each TC. 
%The subdivision of an $n \times n$ domain into the programmer level fragments is not automatic, must be considered in the algorithm design.
} to choose from depending on the numerical precision; $\bm{16}\times \bm{16} \times \bm{16}$ or $\bm{8} \times \bm{32} \times \bm{16}$ for FP16, $\bm{16} \times \bm{16} \times \bm{8}$ for TF32, or $\bm{8} \times \bm{8} \times \bm{4}$ for FP64, among others. Each MMA operation requires a dedicated warp of threads (32 threads), therefore parallelism granularity is at the warp level. Adapting $\mathcal{H}$ to TCs requires the reformulation of its computation, first by proposing a matrix layout to place its variables/constants and second by producing the desired computation from the fused multiply add (FMA) operations that take place in each row-column product of the MMA. In the case of $\mathcal{H}$, given the map is at block-level, an efficient adaptation should try to obtain several block coordinates for each MMA executed. 
An efficient adaptation of $\mathcal{H}$ to TCs would employ a format where the product $A \times B$ acts on non-square matrices, as the FMA operations are relatively short in length. The TF32 format $\bm{16} \times \bm{16} \times \bm{8}$ can be an attractive solution if more precision is required, or the FP16 format $\bm{8} \times \bm{32} \times \bm{16}$ to increase the number of simultaneous block coordinates, although at less precision. In any case, the layout is to place the constants of Eq. (\ref{eq_map2d}) or (\ref{eq_trapezoids}) as rows in $A$, and the $x,y$ block-related inputs for $\mathcal{H}$ as pairs of columns in $B$. Eq. (\ref{eq:tensor-core-mma}) expresses the mentioned TC layout acting for two blocks (distinguished by the super-indices in $B$) in the case of $2$-simplices, under a format of $\textbf{m}\times \textbf{n} \times \textbf{k} = 4 \times 4 \times 2$ just for simplicity. 
\begin{align} 
\label{eq:tensor-core-mma}
\scriptsize
D &= 
\begin{bmatrix}
    a_1 	& a_2   \\
    a_3 	& a_4   \\
    a_1 	& a_2   \\
    a_3 	& a_4   
\end{bmatrix}
\times
\begin{bmatrix}
    b^1_{x_1} & b^1_{y_1} & b^2_{x_1} & b^2_{y_1} \\
    b^1_{x_2} & b^1_{y_2} & b^2_{x_2} & b^2_{y_2} 
\end{bmatrix}
+
\begin{bmatrix}
    c_{x_1}  & c_{y_1}   & 0         & 0       \\
    c_{x_2}  & c_{y_2}   & 0         & 0       \\
    0 	     & 0         & c_{x_3}   & c_{y_3} \\
    0	     & 0         & c_{x_4}   & c_{y_4}
\end{bmatrix}
\end{align}

The are four constants $a_1, a_2, a_3, a_4$, and the product $A \times B$ generates a block coordinate for each block twice. This is to allow threads to add their local coordinate on top of it, through the addition of matrix $C$. The resulting matrix is $D$ which contains the definitive positions per-thread to access the simplex domain. It is worth mentioning that this multi-block approach can be further improved by the use of the recent thread block cluster construct, which allows to group several CUDA blocks and synchronize them to share data, such as matrix fragments.

\subsection{Leveraging Ray Tracing Cores}
Ray Tracing (RT) cores are hardware-implemented pipelines that accelerate the process of querying the collision of a ray with a set of geometric primitives, such as triangles in 3D space. Internally, RT cores employ a hardware-implemented bounding volume hierarchy (BVH) to handle its irregular memory access patterns on the BVH more efficiently. Although the standard application of RT cores is to accelerate the generation of photo-realistic computer graphics from a 3D scene, it is still possible to use the ray-triangle intersection query as a fast tool to solve other kinds of problems, such as thread mapping. 

In the case of thread mapping to a simplex domain, RT cores can contribute with a fast hardware accelerated framework that works similar to Dynamic Parallelism, to explore a non-boxed data domain during execution. Moreover, given that RT cores work in a fully dynamic context, it opens the possibility of mapping threads to fully dynamic changing domains, dropping the static restriction assumed in $\mathcal{H}$ and making it a general purpose mapping. A possible idea to leverage RT cores for simplex domains is to load the simplex as a voxel type geometry in 3D space, and launch waves of rays at different levels of thickness. This would generate a dynamic discovery process where the first waves of rays do a coarse-grained exploration, useful for discarding the 3D space with no simplex data. The process would gradually refine rays until they become voxel grained and access the data elements. The only limitation is that the hardware BVH was designed for 3D space.

\section{Discussion and Conclusions}
\label{sec_discussion-conclusions}
This work presented a new map, $\mathcal{H}$, for mapping blocks of threads onto $m$-simplex domains. The map was entirely formulated for the $2$-simplex and $3$-simplex cases, obtaining theoretical upper bounds on speedup and thread usage efficiency over a standard bounding-box approach. It was also experimentally evaluated and compared to other state of the art approaches, showing that $\mathcal{H}$ is competitive with respect to the other alternative approaches, reaching up to $2\times$ of speedup just for the mapping stage, and between $1.1\times \sim 1.3\times$ of speedup under different tests. For $3$-simplices it was overall the fastest map, approaching the theoretical $6\times$ for the mapping stage, and between $1.5\times \sim 3.2\times$ under different tests. When comparing $\mathcal{H}$ with the state of the art approaches, we note that it is the only map that does exhibits a favorable performance and numerical precision in both 2D and 3D cases. Moreover, $\mathcal{H}$ is potentially extendable to higher dimensions offering up to $m!\times$ more efficiency. This generalization to $m$-simplices presents a challenge though, as obtaining an optimal set of orthotopes with minimal extra volume becomes an optimization problem where the scaling and replication parameters, $r, \beta$ respectively, produce a trade-off between extra space and the starting value $n_0$ from which the mapping can take place. Knowing what parameters are the optimal for building a self-similar set of orthotopes for any $m$-simplex, as well as find general methods for packing $S_n^m$ into  super-orthotope are indeed interesting questions that deserve further study.

In terms of power consumption, $\mathcal{H}$ showed that although it achieves high power consumption during execution, it compensates by having a shorter duration, leading to reach among the highest values of elements per second per Watt ($EPS/W$). This makes $\mathcal{H}$ energy efficient, as for the same amount of energy it can process more elements than other approaches.

This work also analyzed the possibility of leveraging both Tensor Cores (TC) and Ray Tracing (RT) Cores that are otherwise unused during a regular CUDA application. A preliminary analysis shows that it is indeed feasible to implement mappings using these special purpose cores. In fact, TC can serve to speedup $\mathcal{H}$ directly, while RT core units show that it can offer a dynamic mapping approach more general than $\mathcal{H}$. Future research can work on implementing such approaches considering the modern $2022+$ GPU architectures, and measure what is the empirical acceleration that these special purpose cores can provide to the thread mapping process. 

As a conclusion, this work has shown that $\mathcal{H}$ is an scalable and energy efficient solution that can be of great interest researchers working on applications that need to execute long simulations on simplex structures, such as PDEs or Cellular Automata.

\section*{Acknowledgment}
This project was supported by FONDECYT grant \#$1221357$, the Temporal research group and the Patag\'on Supercomputer from Universidad Austral de Chile.

\bibliographystyle{elsarticle-num}
\bibliography{main} 
\end{document}